\newif\iflncs
\title{Structural Parameterizations for Induced and Acyclic Matching}
\titlerunning{Structural Parameterizations for Induced and Acyclic Matching}
\author{Michael Lampis}
{Universit\'{e} Paris-Dauphine, PSL University, CNRS UMR7243, LAMSADE, Paris, France}
{michail.lampis@dauphine.fr}
{https://orcid.org/0000-0002-5791-0887}{}
\author{Manolis Vasilakis}
{Universit\'{e} Paris-Dauphine, PSL University, CNRS UMR7243, LAMSADE, Paris, France}
{emmanouil.vasilakis@dauphine.eu}
{https://orcid.org/0000-0001-6505-2977}{}
\authorrunning{M. Lampis and M. Vasilakis}
\keywords{Acyclic Matching, Clique-width, Induced Matching, Parameterized Complexity, SETH, Treewidth}%TODO mandatory; please add comma-separated list of keywords
\begin{document}

\maketitle

\begin{abstract}
We revisit the (structurally) parameterized complexity of \textsc{Induced Matching} and \textsc{Acyclic Matching},
two problems where we seek to find a maximum independent set of edges whose endpoints induce,
respectively, a matching and a forest.
Chaudhary and Zehavi~[WG '23, SIDMA '25] recently studied these problems parameterized by treewidth, denoted by $\mathrm{tw}$.
We resolve several of the problems left open in their work and extend their results as follows:
(i) for \textsc{Acyclic Matching}, Chaudhary and Zehavi gave an algorithm of running time
$6^{\mathrm{tw}}n^{\mathcal{O}(1)}$ and a lower bound of
$(3-\varepsilon)^{\mathrm{tw}}n^{\mathcal{O}(1)}$ (under the SETH);
we close this gap by, on the one hand giving a more careful analysis of their
algorithm showing that its complexity is actually $5^{\mathrm{tw}} n^{\mathcal{O}(1)}$,
and on the other giving a pw-SETH-based lower bound showing that this running time cannot be improved
(even for pathwidth),
(ii) for \textsc{Induced Matching} we show that their $3^{\mathrm{tw}} n^{\mathcal{O}(1)}$
algorithm is optimal under the pw-SETH (in fact improving over this for pathwidth or even for cutwidth is \emph{equivalent} to falsifying the pw-SETH)
by adapting a recent reduction for \textsc{Bounded Degree Vertex Deletion},
(iii) for both problems we give FPT algorithms with single-exponential dependence when
parameterized by clique-width and in particular for \textsc{Induced Matching}
our algorithm has running time $3^{\mathrm{cw}} n^{\mathcal{O}(1)}$,
which is optimal under the pw-SETH from our previous result.

% ------------------------------ ALTERNATIVELY ------------------------------
% We revisit the (structurally) parameterized complexity of \textsc{Induced Matching} and \textsc{Acyclic Matching},
% two problems where we seek to find a maximum independent set of edges whose endpoints induce,
% respectively, a matching and a forest.
% A recent work by Chaudhary and Zehavi~[WG '23] shows that the parameterization by the treewidth $\mathrm{tw}$
% of the input graph yields FPT algorithms of running times $3^{\mathrm{tw}} n^{\mathcal{O}(1)}$ and $6^{\mathrm{tw}} n^{\mathcal{O}(1)}$ respectively,
% and they ask whether their algorithms are optimal.
% We first show that, with some slightly more careful analysis, the latter algorithm has running time
% $5^{\mathrm{tw}} n^{\mathcal{O}(1)}$ instead.
% Our main contribution however is to show that any further improvement in any of them would refute the SETH,
% even for the parameterization by pathwidth, thus rendering both algorithms optimal.
% Moving on, we study both problems when parameterized by the clique-width $\mathrm{cw}$ of the input graph.
% Here, we develop an optimal (modulo SETH) algorithm for \textsc{Induced Matching} of running time
% $3^{\mathrm{cw}} n^{\mathcal{O}(1)}$, while for \textsc{Acyclic Matching} we develop an FPT algorithm of
% single-exponential parametric dependence by making use of the framework developed by Bergougnoux and Kant{\'e}~[TCS].

\end{abstract}

\newpage

%%%%%%%%%%%%%%%%%%%%%%%%%%%%%%%%%%%%%%%%%%%%%%%%%%%%%%%%%
% --------------------- INTRODUCTION --------------------
%%%%%%%%%%%%%%%%%%%%%%%%%%%%%%%%%%%%%%%%%%%%%%%%%%%%%%%%%
\section{Introduction}\label{sec:introduction}

Given a graph $G$, a \emph{matching} $M \subseteq E(G)$ of $G$ is a subset of its edges such that
every vertex of $G$ is incident with at most one edge in $M$.
Matchings are central in Computer Science~\cite{books/LovaszP09},
and problems involving them find numerous practical applications such as matching kidney donors to recipients
and clients to server clusters to name a few~\cite{books/ws/Manlove13}.
Even though finding a matching of maximum cardinality in general graphs
is well-known to be polynomial-time solvable~\cite{focs/MicaliV80},
further demanding that the graph induced by the vertices of the matching adheres to a specific constraint
oftentimes renders the problem NP-hard.
Recently, Chaudhary and Zehavi~\cite{wg/ChaudharyZ23a,siamdm/ChaudharyZ25} studied a plethora of such NP-hard problems
under the lens of parameterized complexity, providing a variety of algorithms and posing some open questions,
some of which we answer in this work.

We first define the problems in question.
Let $G=(V,E)$ be a graph and $M \subseteq E$ a matching of $G$, where $V_M \subseteq V$ denotes
the set of vertices of $G$ incident with edges in $M$.
We say that matching $M$ is \emph{induced} (resp., \emph{acyclic}) if $G[V_M]$ is $1$-regular (resp., acyclic).
In that case, given a graph $G$ and an integer $\ell$,
{\InducedM} (resp., \AcyclicM) asks whether $G$ has an induced (resp., acyclic) matching of size at least $\ell$.
Originally introduced as the ``risk-free'' marriage problem~\cite{ipl/StockmeyerV82},
{\InducedM} has been extensively studied due to its numerous applications (see~\cite{dam/GolumbicL00}).
This problem and its variations also have connections to various problems such as
\textsc{Maximum Feasible Subsystem}~\cite{soda/ChalermsookLN13,soda/ElbassioniRRS09}
and storylines extraction~\cite{kdd/KumarMS04}.
% maximum expanding sequence~\cite{siamcomp/BriestK11}
% It is also a subtask of finding a \emph{strong edge coloring}.
As for \AcyclicM, it was introduced by Goddard et al.~\cite{dm/GoddardHHL05}
along with some other variations of \textsc{Maximum Matching}.
Both problems are known to be NP-hard~\cite{dam/Cameron89,dm/GoddardHHL05,ipl/StockmeyerV82},
thus in this paper we examine them under the perspective of parameterized complexity.%
\footnote{We assume the reader is familiar with the basics of parameterized complexity,
as given e.g.~in~\cite{books/CyganFKLMPPS15}.}
Most relevant to us are the results concerning their complexity under structural parameterizations.
Both problems are known to be FPT by the treewidth $\tw$ of the input graph~\cite{siamdm/ChaudharyZ25,tcs/HajebiJ23,dam/MoserS09},
with the state-of-the-art being due to Chaudhary and Zehavi~\cite{siamdm/ChaudharyZ25}
who proposed algorithms of running time%
\footnote{Standard $\sO$ notation is used to suppress polynomial factors.}
$\sO(3^{\tw})$ and $\sO(6^{\tw})$ for {\InducedM} and {\AcyclicM} respectively.
Furthermore, the authors complement their algorithms with
$\sO((\sqrt{6}-\varepsilon)^{\pw})$ and $\sO((3-\varepsilon)^{\pw})$ SETH-based lower bounds respectively,
where $\pw$ denotes the pathwidth of the input graph.
Lastly, as mentioned in~\cite{algorithmica/KoblerR03}, {\InducedM} is expressible in MSO$_1$ logic,
thus it is FPT parameterized by clique-width due to standard metatheorems~\cite{mst/CourcelleMR00}.

\subparagraph{Our Contribution.}
We start with the {\InducedM} problem.
Our first result is to show the optimality of the $\sO(3^{\tw})$ algorithm of Chaudhary and Zehavi~\cite{siamdm/ChaudharyZ25}.
As a matter of fact, for the parameterizations by the pathwidth $\pw$ and the cutwidth $\ctw$ of the input graph,
we show that for all $\varepsilon>0$, obtaining an $\sO((3-\varepsilon)^{\pw})$ or an $\sO((3-\varepsilon)^{\ctw})$ algorithm is \emph{equivalent} to
falsifying the pw-SETH, a weakening of the SETH that was recently postulated by Lampis~\cite{soda/Lampis25}.
Note that this implies that an $\sO((3-\varepsilon)^{\pw})$ algorithm would falsify not just the standard form of the SETH (for $k$-CNF formulas),
but (among others) also the SETH for circuits of depth $\varepsilon n$~\cite{arxiv/Lampis24} and the Set Cover Conjecture~\cite{CyganDLMNOPSW16}.
Our reduction follows along the lines of the one for {\BDD} by Lampis and Vasilakis~\cite{toct/LampisV24}, though adapted to the machinery of the pw-SETH.
We remark that concurrently and independently to our work,
Bojikian, Chekan, and Kratsch~\cite{arxiv/BojikianCK25} recently showed that an $\sO((3-\varepsilon)^{\ctw})$ algorithm for {\InducedM}
would falsify the SETH.
In comparison, our result is stronger in two aspects,
as not only do we base the hardness on the weaker pw-SETH, but we in fact show the equivalence to it.
Moving on, we consider the parameterization by the clique-width $\cw$ and develop an (optimal due to our previous result)
$\sO(3^{\cw})$ algorithm based on dynamic programming.
Although obtaining an algorithm of running time $\sO(6^{\cw})$ is quite straightforward,
improving this to our obtained $\sO(3^{\cw})$ running time requires some extra machinery.
In particular, to bypass the bottleneck of the computation time in the union nodes,
we employ ideas from~\cite{mfcs/BodlaenderLRV10,esa/RooijBR09} and consider \emph{merged states} while solving the
counting version of the problem.
Doing so allows us to subsequently use the FFT technique introduced by Cygan and Pilipczuk~\cite{tcs/CyganP10}
and populate the whole table in quasilinear time.
Next we move towards the {\AcyclicM} problem.
We first provide a more careful analysis of the recent algorithm by Chaudhary and Zehavi~\cite{siamdm/ChaudharyZ25}
and show that in fact it runs in time $\sO(5^{\tw})$.
Our main result however is to show that improving over this, even for the parameterization by pathwidth,
results in the pw-SETH failing.
Finally, we consider {\AcyclicM} parameterized by clique-width.
Bergougnoux and Kant\'e~\cite{tcs/BergougnouxK19} developed a framework based on the rank-based approach of
Bodlaender, Cygan, Kratsch, and Nederlof~\cite{iandc/BodlaenderCKN15}
to deal with connectivity problems in this setting and developed algorithms with single-exponential dependence on the parameter
for various problems, including \FVS.
Adapting the latter algorithm allows us to obtain an $2^{\bO(\cw)} n^{\bO(1)}$ algorithm for \AcyclicM,
which is optimal under the ETH.

\subparagraph{Related Work.}
Both {\InducedM} and {\AcyclicM} have been extensively studied with respect to their polynomial-time
solvability in specific graph classes.
We mention in passing that both problems remain NP-hard even in very restricted graph classes;
for the former, such results are known for bipartite graphs of maximum degree $3$~\cite{ipl/Lozin02}
and planar graphs of maximum degree $4$~\cite{siamdm/KoS03},
while for the latter for planar bipartite graphs of maximum degree $3$~\cite{tcs/HajebiJ23}.
For further results in specific graph classes we refer to~\cite{tcs/BrandstadtES07,algorithmica/BrandstadtH08,dam/Cameron89,dm/CameronST03,dam/Chang03,dmtcs/EkimD13,dam/GolumbicL00,algorithmica/HabibM20,algorithmica/KlemzR22,siamdm/KoS03,algorithmica/KoblerR03,dam/KrishnamurthyS12,ipl/Lozin02,wg/Zito99}
for {\InducedM} and to~\cite{dam/BasteR18,tcs/HajebiJ23,tcs/PandaC23,dmaa/PandaP12} for \AcyclicM.
Exact exponential-time algorithms for {\InducedM} have been proposed in the literature~\cite{ol/ChangCH15,iandc/XiaoT17},
while both problems have also been studied with respect to their (in)approximability;
see~\cite{disopt/BasteFR20,soda/ChalermsookLN13,focs/ChalermsookLN13,tcs/ChlebikC06,iandc/ChlebikC08,jda/DuckworthMZ05,soda/ElbassioniRRS09,tcs/FurstLR18,dam/LinMV18,disopt/OrlovichFGZ08,tcs/Rautenbach15}
and~\cite{dm/BasteFR22,dam/BasteR18,anor/FurstR19,tcs/PandaC23} respectively.

From the viewpoint of Parameterized Complexity,
both {\InducedM} and {\AcyclicM} are W[1]-hard parameterized by the natural parameter $\ell$ even in bipartite graphs,
yet become FPT for various graph classes, including planar graphs~\cite{tcs/DabrowskiDL13,tcs/HajebiJ23,dam/MoserS09};
as a matter of fact, in case of {\InducedM} in planar graphs, various kernelization algorithms have been developed~\cite{dam/ErmanKKW10,jcss/KanjPSX11}.
Recently, Chaudhary and Zehavi~\cite{jcss/ChaudharyZ25} also provided parameterized inapproximability results.
Below guarantee parameterizations have also been studied~\cite{jcss/ChaudharyZ25,stacs/Koana23,jda/MoserT09,tcs/XiaoK20}.

Regarding structural parameterizations, both problems are FPT parameterized by the treewidth of the input graph, denoted by $\tw$.
For \InducedM, Moser and Sikdar~\cite{dam/MoserS09} developed a $\sO(4^{\tw})$-time DP algorithm
which was subsequently improved to $\sO(3^{\tw})$ by Chaudhary and Zehavi~\cite{siamdm/ChaudharyZ25}
who also showed a $\sO((\sqrt{6}-\varepsilon)^{\pw})$ lower bound under the SETH ($\pw$ denotes the pathwidth).
As for \AcyclicM, Hajebi and Javadi~\cite{tcs/HajebiJ23} noticed that the problem is expressible in MSO$_2$,
thus it is FPT parameterized by treewidth due to Courcelle's theorem~\cite{iandc/Courcelle90};
Chaudhary and Zehavi~\cite{siamdm/ChaudharyZ25} later provided an explicit algorithm of complexity $\sO(6^{\tw})$
as well as a $\sO((3-\varepsilon)^{\pw})$ lower bound under the SETH.
For structural parameters apart from treewidth,
it is known that {\InducedM} is expressible in MSO$_1$ logic~\cite{algorithmica/KoblerR03},
thus FPT parameterized by clique-width due to standard metatheorems~\cite{mst/CourcelleMR00},
and {\AcyclicM} is FPT parameterized by modular-width~\cite{tcs/HajebiJ23},
while neither problem admits (under standard assumptions) a polynomial kernel parameterized by
either the vertex cover number or the distance to clique of the input graph~\cite{jcss/ChaudharyZ25,tcs/GomesMPSS23}.

%%%%%%%%%%%%%%%%%%%%%%%%%%%%%%%%%%%%%%%%%%%%%%%%%%%%%%%%%
% -------------------- PRELIMINARIES --------------------
%%%%%%%%%%%%%%%%%%%%%%%%%%%%%%%%%%%%%%%%%%%%%%%%%%%%%%%%%
\section{Preliminaries}\label{sec:preliminaries}
Throughout the paper we use standard graph notations~\cite{books/Diestel17},
and we assume familiarity with the basic notions of parameterized complexity~\cite{books/CyganFKLMPPS15}.
All graphs considered are undirected without loops.
Let $G = (V, E)$ be a graph.
Then, $\cc(G)$ denotes the set of its connected components.
Given a subset of its vertices $S \subseteq V$, $G[S]$ denotes the subgraph induced by $S$ while $G - S$ denotes $G[V \setminus S]$.
For a weight function $\wc$ on the vertices of $G$ and a subset $S \subseteq V$,
we denote by $\wc(S)$ the sum of the weights of all vertices of $S$.
Given a subset of its edges $M \subseteq E$, $V_M$ denotes the vertices incident with the edges in $M$ while $G[V_M]$ denotes the graph
induced by those vertices.
A \emph{matching} $M$ of $G$ is a subset of its edges such that every vertex of $G$ is incident with at most one edge in $M$.

For $x, y \in \mathbb{Z}$, let $[x, y] = \setdef{z \in \mathbb{Z}}{x \leq z \leq y}$ while $[x] = [1,x]$.
Given a (partial) function $f \colon A \to B$, for every $b \in B$ we denote by $f^{-1}(b)$ the set of preimages of $b$ under $f$,
that is, $f^{-1}(b) = \setdef{a \in A}{f(a) = b}$.
Notice that in the case of partial functions it is not necessary for $f$ to be defined over the whole set $A$.
For a (partial) function $f$ we denote by $f[v \mapsto \alpha]$ the (partial) function
$(f \setminus \{(v, f(v))\}) \cup \{(v, \alpha)\}$, viewing $f$ as a set.
Standard $\sO$ notation is used to suppress polynomial factors.
Proofs of statements marked with {\appsymbNote} are deferred to the appendix.

\subparagraph{Cutwidth.}
A linear arrangement of a graph $G = (V,E)$ is an injective function $\pi \colon V \to [|V|]$,
that we also refer to as an \emph{ordering} of~$V$.
For an ordering $\pi$ and $i \in [0,n]$, we define $V_i = \setdef{v \in V}{\pi(v) \le i}$
and $\overline{V}_i = V \setminus V_i$;
notice that $V_0 = \varnothing$.
Let $E_i \subseteq E$ be the set of edges with one endpoint in $V_i$ and the other in $\overline{V}_i$,
for $i \in [0,n]$; we say that those edges \emph{cross} the cut.
In that case, the \emph{cutwidth} of $\pi$ is defined as $\ctw(\pi) = \max_{i \in [0,n]} |E_i|$.
The \emph{cutwidth} of $G$ is defined as $\ctw(G) = \min_{\pi} \ctw(\pi)$,
where the minimum is taken over all linear arrangements of~$G$.
It is well-known that for every graph $G$ it holds that $\pw(G) \leq \ctw(G)$.

\subparagraph{Clique-width.}
A graph of clique-width $k$ can be constructed through a sequence of the following operations on vertices that are labeled with at most $k$ different labels.
We can use (1) introducing a single vertex~$v$ of an arbitrary label~$i$, denoted $i(v)$,
(2) disjoint union of two labeled graphs, denoted $H_1 \oplus H_2$,
(3) introducing edges between \emph{all} pairs of vertices of two distinct labels~$i$ and~$j$ in a labeled graph~$H$, denoted $\eta_{i,j}(H)$,
and (4) changing the label of \emph{all} vertices of a given label~$i$ in a labeled graph~$H$ to a different label~$j$,
denoted $\rho_{i \to j}(H)$.
An expression describes a graph~$G$ if $G$ is the final graph given by the expression (after we remove all the labels).
The \emph{width} of an expression is the number of different labels it uses.
The clique-width of a graph is the minimum width of an expression describing it~\cite{dam/CourcelleO00}.
For a labeled graph $H$ and $v \in V(H)$, let $\lab_H(v)$ denote the label of $v$ in $H$,
while $\lab^{-1}_H (i) = \setdef{v \in V(H)}{\lab_H(v) = i}$ denotes the set of vertices of $H$ of label $i$.
For $S \subseteq V(H)$, let $H[S]$ denote the labeled subgraph of $H$ induced by $S$.
A clique-width expression is \emph{irredundant} if whenever the operation $\eta_{i,j}$
is applied on a graph $G$, there is no edge between an $i$-vertex and a $j$-vertex in $G$,
and we will use such expressions to simplify our algorithms.
We remark that any clique-width expression can be transformed in linear time into an irredundant one of the same width~\cite{dam/CourcelleO00}.

\subparagraph{pw-SETH.} The \emph{primal pathwidth SETH} (pw-SETH) states that,
for all $\varepsilon>0$, 3-SAT requires time at least
$(2-\varepsilon)^{\pw}n^{\bO(1)}$, where $\pw$ is the pathwidth of the primal
graph of the input formula.%
\footnote{The \emph{primal} (or \emph{Gaifman}) graph of a formula contains a vertex for each variable
and an edge when two variables appear in a common constraint.}
In other words, the pw-SETH posits that the simple
DP algorithm which solves 3-SAT in time $\sO(2^{\pw})$ is best possible.
Beating this algorithm seems to encapsulate the difficulty of improving upon
simple DP algorithms in general and indeed the pw-SETH is \emph{equivalent} to
many tight lower bounds for problems parameterized by linear structure widths
(pathwidth or linear clique-width) \cite{soda/Lampis25}. The pw-SETH also has
the advantage of being implied by several other standard assumptions, such as the
Set Cover Conjecture and the SETH for circuits of depth $\varepsilon n$~\cite{arxiv/Lampis24},
making lower bound results based on the pw-SETH seem more believable.

In this paper we present two such lower bound results based on the pw-SETH, in
one case (for \textsc{Induced Matching} parameterized by pathwidth) showing
that breaking our bound is \emph{equivalent} to the pw-SETH. For this, we
recall some notions and results from~\cite{soda/Lampis25} which we will make
use of later in our reductions, and in particular in
\cref{lem:induced:lb:CSP->Induced,thm:acyclic:lb}. Informally, reductions in
this context start from a constraint satisfaction problem \textsc{CSP} with
alphabet size equal to the desired base of the lower bound, similarly to
SETH-based reductions \cite{siamdm/Lampis20}. However, rather than reducing
from a \textsc{CSP} instance parameterized by the number of variables we reduce
from an instance parameterized by the pathwidth of its primal graph.
To facilitate the reduction, we use the fact that it is hard
(under pw-SETH) to distinguish between \textsc{CSP} instances which are
satisfiable and instances which are unsatisfiable even if for the majority of
variables we are allowed to select multiple assignments, albeit while only
modifying values in a monotone way along the given path decomposition.

\begin{definition}[{\cite[Definition~3.2]{soda/Lampis25}}]
    Suppose we have a \textsc{CSP} instance $\psi$ over an alphabet $ [B]$ of size $B \ge 2$,
    with variable set $V$, a path decomposition of its primal graph $B_1,\ldots, B_t$,
    and an injective function $b$ mapping each constraint to the index of a bag that contains all its variables.
    A \emph{multi-assignment} is a function $\sigma$ that takes as input a variable
    $x \in V$ and an index $j \in [t]$ such that $x \in B_j$ and returns a value in $[B]$.
    We will say that:
    \begin{enumerate}
        \item A multi-assignment $\sigma$ is \emph{satisfying} for $\psi$ if for each constraint $c$,
        the assignment $\sigma_c(x) = \sigma(x,b(c))$, that is,
        the restriction of the multi-assignment to $b(c)$,
        satisfies the constraint.

        \item A multi-assignment $\sigma$ is \emph{monotone} if for all $x \in V$ and $j_1<j_2$
        with $x \in B_{j_1} \cap B_{j_2}$ we have $\sigma(x,j_1) \le \sigma(x,j_2)$.

        \item A multi-assignment $\sigma$ is \emph{consistent} for $x \in V$ if for all
        $j_1,j_2 \in [t]$ such that $x \in B_{j_1} \cap B_{j_2}$
        we have $\sigma(x,j_1)=\sigma(x,j_2)$.
    \end{enumerate}
\end{definition}

\begin{corollary}[{\cite[Corollary 3.1]{soda/Lampis25}}]\label{cor:weird}
    For all $\varepsilon>0, B \ge 2$ we have the following.
    Suppose there is an algorithm with the following properties:
    \begin{itemize}
        \item it takes as input a $4$-\textsc{CSP} instance $\psi$ over
        an alphabet of size $B$, a partition of its variables into two sets $V_1, V_2$,
        a path decomposition of its primal graph of width $p$ where each bag contains at most
        $\bO (B \log p)$ variables of $V_2$, and an injective function $b$ mapping each constraint to a bag that contains its variables;

        \item it decides if there exists a monotone satisfying multi-assignment $\sigma$,
        which is consistent for the variables of $V_2$;

        \item it runs in time $\bO ( (B-\varepsilon)^p |\psi|^{\bO(1)})$.
    \end{itemize}
    Then the pw-SETH is false.
\end{corollary}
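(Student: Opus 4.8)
This corollary is quoted verbatim from \cite{soda/Lampis25}, so strictly it requires no new proof here; the following is how I would derive it from the pw-SETH. Start from a 3-SAT instance $\phi$ together with a path decomposition $B_1,\dots,B_t$ of its primal graph of width $q$, and fix $B\ge 2$ and $\varepsilon>0$. The plan is to build a $4$-CSP instance $\psi$ over alphabet $[B]$ whose primal graph admits a path decomposition of width $p = q/\log B + \bO(\log p)$, together with the partition $V_1,V_2$ and the injective map $b$, such that $\psi$ admits a monotone satisfying multi-assignment that is consistent on $V_2$ if and only if $\phi$ is satisfiable. A hypothetical $\bO((B-\varepsilon)^p|\psi|^{\bO(1)})$ algorithm for the latter problem would then decide $\phi$ in time $(B-\varepsilon)^{q/\log B}n^{\bO(1)}=(2-\delta)^q n^{\bO(1)}$ for some $\delta=\delta(B,\varepsilon)>0$, contradicting the pw-SETH. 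As in all SETH-style reductions over enlarged alphabets, the first ingredient is to scan the decomposition left to right and partition the Boolean variables of $\phi$ into groups of $\lfloor\log B\rfloor$ variables that are simultaneously present in the bags, representing each group by one variable of $\psi$ over $[B]$; each clause of $\phi$ then refers to at most three groups, so it becomes a constraint of arity at most $3$, and $b$ places it in a bag holding those groups (after duplicating bags of the decomposition as needed to keep $b$ injective, at no cost in width).

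The reason the \emph{multi-assignment} relaxation is needed is that a Boolean variable of $\phi$ may stay live across many consecutive groups, so its group spans many bags of the decomposition of $\psi$; enforcing global consistency of every such long-lived group by a gadget in each bag it crosses would multiply the new width by a $\log B$ factor and kill the reduction. Instead one lets those group-variables take unrelated values in far-apart bags --- precisely what a multi-assignment permits --- and threads the actual commitments through a small family of auxiliary ``highway'' variables carrying, in a base-$B$ counter, the partial assignment relevant at each boundary. Designing this counter so that only $\bO(\log p)$ of its digits need be live simultaneously, each digit ranging over $[B]$, is what produces the bound of $\bO(B\log p)$ variables per bag that genuinely must be consistent: these auxiliary variables are the set $V_2$, while the group-variables form $V_1$. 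Monotonicity is imposed by ordering $[B]$ so the ``uncommitted'' states are smallest and checking that the intended solution only raises values along the path, and the remaining coordinate of the $4$-ary constraints is used to link each group-variable to its highway digit.

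The step I expect to be the real obstacle is making three requirements compatible within a single gadget: (a) the $\bO(B\log p)$ cap on consistent variables per bag; (b) monotonicity of the intended assignment; and (c) soundness in the crucial direction, namely that the extra freedom of multi-assignments cannot accidentally satisfy $\psi$ when $\phi$ is unsatisfiable. All three hinge on the precise design of the highway-propagation mechanism --- a scheme that simultaneously keeps the counter narrow, keeps commitments monotone, and forbids jumping to a spuriously satisfying configuration --- and that is where the technical effort concentrates.
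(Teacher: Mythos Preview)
You correctly observe at the outset that this corollary is quoted from \cite{soda/Lampis25} and is not proved in the present paper; indeed, the paper uses it purely as a black box and offers no argument for it whatsoever. There is therefore nothing to compare your sketch against here, and for the purposes of this paper your first sentence already suffices.

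Your additional sketch is a plausible high-level outline of the kind of argument one expects (grouping Boolean variables into base-$B$ super-variables, a small set of ``highway'' variables carrying consistency information, monotonicity via an ordering on $[B]$), and you are right that the delicate part is engineering the propagation mechanism so that (a), (b), and (c) hold simultaneously. But since the paper under review does not attempt this, assessing the correctness of your sketch would require going to \cite{soda/Lampis25} itself, which is outside the scope of the comparison asked for here.
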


\section{Induced Matching}\label{sec:induced}

In this section we consider the {\InducedM} problem.
Chaudhary and Zehavi~\cite{siamdm/ChaudharyZ25} developed an algorithm of running time $\sO(3^\tw)$,
where $\tw$ denotes the treewidth of the input graph, and left as an open question whether this is optimal.
As our first result, in \cref{thm:induced:lb} we show that this is indeed the case,
as obtaining an algorithm of running time $\sO((3 - \varepsilon)^\pw)$, or even $\sO((3 - \varepsilon)^\ctw)$, is equivalent to falsifying the pw-SETH.
Next, we develop a DP algorithm of running time $\sO(3^\cw)$ for the problem, where $\cw$ denotes the clique-width of the input graph.
Since $\cw(G) \le \pw(G) + 2$ for any graph $G$, our algorithm is optimal under the pw-SETH due to \cref{thm:induced:lb}.

\subsection{Lower Bound}\label{subsec:induced:lb}

We first state the main theorem of this section,
which we then prove in \cref{lem:induced:lb:Induced->CSP,lem:induced:lb:CSP->Induced}.

\begin{theorem}\label{thm:induced:lb}
    The following are equivalent:
    \begin{itemize}
        \item there exists an algorithm deciding {\InducedM} in time $\sO((3-\varepsilon)^\pw)$,
        where $\pw$ denotes the pathwidth of the input graph, for some $\varepsilon > 0$,

        \item there exists an algorithm deciding {\InducedM} in time $\sO((3-\varepsilon)^\ctw)$,
        where $\ctw$ denotes the cutwidth of the input graph, for some $\varepsilon > 0$,

        \item the pw-SETH is false.
    \end{itemize}
\end{theorem}

\begin{lemmarep}[\appsymb]\label{lem:induced:lb:Induced->CSP}
    If the pw-SETH is false,
    then there exists $\varepsilon > 0$
    and an algorithm that solves {\InducedM} in time $\sO((3-\varepsilon)^\pw)$,
    where $\pw$ denotes the pathwidth of the input graph.
\end{lemmarep}

\begin{proof}
    Recall that in the \textsc{MaxW-CSP} problem we are given a
    \textsc{CSP} instance of arity $r$ and over an alphabet $\mathcal{B}$,
    as well as a weight function $w \colon \mathcal{B} \to \mathbb{N}$ and an integer $w_t$,
    and we want to decide whether there exists a satisfying assignment
    whose total weight is at least $w_t$.
    The weight of an assignment is defined as the sum of the weights
    of the assignments of its variables.

    Given an instance $(G,\ell)$ of {\InducedM} and a path decomposition of $G=(V,E)$ of width $\pw$,
    we will reduce it to an instance $\psi$ of \textsc{MaxW-CSP} of arity $r = \bO(1)$ and over an alphabet $\mathcal{B}$ of size $3$,
    where the pathwidth of the primal graph of $\psi$ is $\pw + \bO(1)$.
    If the pw-SETH is false, then by~\cite[Theorem~3.2]{soda/Lampis25} there exist $\varepsilon > 0$
    such that there is an algorithm that decides $\psi$ in time $\bO((|\mathcal{B}| - \varepsilon)^{\pw + \bO(1)} |\psi|^{\bO(1)})$,
    which allows us to obtain the desired running time for \InducedM.

    Suppose we have a nice path decomposition of $G$ of width $\pw$ with the bags numbered
    $B_1, \ldots, B_t$.
    Let the vertices of $G$ be numbered, that is, $V = \{v_1, \ldots, v_n\}$.
    We can assume that there exists an injective function $b \colon E \to [t]$
    which maps each edge of $G$ to the index of a bag that contains both of its endpoints
    ($b$ can be made injective by repeating bags if necessary).
    Furthermore, we can assume that $b$ maps edges to bags which are not introducing new vertices,
    again by repeating bags.
    We construct a CSP instance $\psi$ over an alphabet $\mathcal{B} = \{0,1,2\}$ as follows:
    \begin{enumerate}
        \item For each $v_i \in V$ we introduce $t$ variables $x_{i,j}$ for $j \in [t]$.

        \item For each $v_i \in V$ and $j \in [t-1]$ we add the following constraints:
        \begin{itemize}
            \item $x_{i,j} \neq 2 \implies x_{i,j+1} = x_{i,j}$,
            \item $x_{i,j} = 2 \implies x_{i,j+1} \neq 0$,
            \item If no edge $e$ has $b(e) = j$ or for some edge $e$ we have
                    $b(e) = j$ but $e$ is not incident on $v_i$,
                    then we also add the constraint $x_{i,j+1}=x_{i,j}$.
        \end{itemize}

        \item For each $v_i \in V$, if the last bag containing $v_i$ is $B_j$,
        then for all $j' \in [j,t-1]$ we add the constraint $x_{i,j'+1} = x_{i,j'}$.

        \item For each $e \in E$ let $j=b(e)$ and $e = \{v_{i_1}, v_{i_2}\}$ with $v_{i_1}, v_{i_2} \in B_j$.
        We add the constraints
        \begin{itemize}
            \item $(x_{i_1,j} = 0) \lor (x_{i_2,j} = 0) \lor (x_{i_1,j} = x_{i_2,j} = 2)$,
            \item $(x_{i_1,j} = 0) \lor (x_{i_2,j} = 0) \implies \big((x_{i_1,j+1} = x_{i_1,j}) \land (x_{i_2,j+1} = x_{i_2,j})\big)$,
            \item $(x_{i_1,j} = x_{i_2,j} = 2) \implies (x_{i_1,j+1} = x_{i_2,j+1} = 1)$.
        \end{itemize}

        \item For each $v_i \in V$ we add the constraints $x_{i,t} \neq 2$ and $x_{i,1} \neq 1$.
    \end{enumerate}
    Intuitively, we have a one-to-one mapping between assignments of the variables of $\psi$ and the induced matchings of $G$ so that
    (i) $x_{i,j} = 0$ if $v_i$ does not belong to the considered induced matching,
    (ii) $x_{i,j} = 1$ if $v_i$ belongs to the considered induced matching and its incident edge has been introduced by bag $B_j$,
    and (iii) $x_{i,j} = 2$ if $v_i$ belongs to the considered induced matching and its incident edge is introduced \emph{after} bag $B_j$.

    We define a weight function $w \colon \mathcal{B} \to \{0,1\}$ such that
    $w(1) = w(2) = 1$ and $w(0) = 0$.
    The target weight is $w_t = 2\ell \cdot t$,
    where $\ell$ is the target value for the given {\InducedM} instance.

    \begin{claim}
        If $G$ has an induced matching of size $\ell$,
        then $\psi$ has a satisfying assignment of weight at least $2\ell \cdot t$.
    \end{claim}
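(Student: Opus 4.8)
The plan is to exhibit an explicit satisfying assignment of weight exactly $w_t = 2\ell t$ built directly from the given induced matching. Let $M$ be an induced matching of size $\ell$, let $V_M$ be its set of $2\ell$ endpoints, and for each $v_i \in V_M$ let $e_i \in M$ be the edge matching $v_i$ and $j_i = b(e_i)$. Before starting I would note that, by the same bag-repetition argument already used to make $b$ injective and to avoid introduce-bags, we may assume that for every edge $e$ the bag $B_{b(e)}$ is strictly before the last bag containing each endpoint of $e$; in particular $1 \le j_i < t$ and $j_i$ is strictly smaller than the index of the last bag containing $v_i$. The intended reading of the alphabet is that $x_{i,j}=0$ means ``$v_i$ is unmatched'', $x_{i,j}=2$ means ``$v_i$ will be matched by an edge processed after bag $j$'', and $x_{i,j}=1$ means ``$v_i$ has already been matched''. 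Accordingly I set $x_{i,j}=0$ for all $j$ when $v_i \notin V_M$, and $x_{i,j}=2$ for $j\le j_i$, $x_{i,j}=1$ for $j>j_i$ when $v_i\in V_M$. The weight count is then immediate: every vertex of $V_M$ contributes $t$ (all of its $t$ variables take a nonzero value) and the remaining vertices contribute $0$, for a total of $2\ell t$.

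The bulk of the work is checking the constraint families (2)--(5). Family (5) is easy: matched vertices have $x_{i,1}=2\neq 1$ and, since $j_i<t$, $x_{i,t}=1\neq 2$, while unmatched vertices are $0$ everywhere. Family (2): the only transitions the assignment ever makes are $0\to 0$, $2\to 2$, $2\to 1$ and $1\to 1$, all of which satisfy the first two bullets by inspection; and the only index at which $x_{i,j}\neq x_{i,j+1}$ is $j=j_i=b(e_i)$, where --- since $e_i$ is incident on $v_i$ --- the third-bullet constraint $x_{i,j+1}=x_{i,j}$ is not present. Family (3): the value $x_{i,\cdot}$ changes only between bags $j_i$ and $j_i+1$, which by our preprocessing is strictly before the last bag containing $v_i$, hence outside the frozen range.

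The interesting family is (4). For $e=\{v_{i_1},v_{i_2}\}\in M$ we have $j=b(e)=j_{i_1}=j_{i_2}$, so $x_{i_1,j}=x_{i_2,j}=2$ (first bullet holds, second is vacuous) and $x_{i_1,j+1}=x_{i_2,j+1}=1$ (third bullet holds). For $e\notin M$ I would argue that some endpoint, say $v_{i_1}$, lies outside $V_M$: otherwise both endpoints are matched by edges of $M$ which are \emph{distinct} (as $e\notin M$) and yet are adjacent, contradicting that $M$ is induced. This is the single place where the induced property, rather than just ``being a matching'', is used. Then $x_{i_1,j}=0$ satisfies the first bullet and kills the third; for the second bullet it remains to note $x_{i_2,j+1}=x_{i_2,j}$ and $x_{i_1,j+1}=x_{i_1,j}=0$, the former because the only index at which $x_{i_2,\cdot}$ can change is $b(e_{i_2})$, and $b(e_{i_2})\neq b(e)$ since $b$ is injective and $e\neq e_{i_2}$.

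I expect the two points needing the most care to be (a) the preprocessing ensuring $b(e)$ is strictly before the last bag of either endpoint, which is exactly what lets the forced $2\to 1$ transition at $b(e_i)$ survive constraints (3) and (5), and (b) the case analysis in family (4), where one must correctly invoke that $M$ is induced to rule out a non-matching edge with two matched endpoints.
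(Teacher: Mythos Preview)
Your proof is correct and follows essentially the same approach as the paper: the identical assignment ($0$ for unmatched vertices, $2$ then $1$ switching at $b(e_i)$ for matched vertices) and the same use of the induced property for the Step~4 constraints. You are in fact more careful than the paper in two places it glosses over --- the preprocessing that $b(e)$ precedes the last bag of each endpoint (needed for Steps~3 and~5), and the verification of the second bullet of Step~4 for $e\notin M$ via injectivity of $b$ --- so nothing is missing.
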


    \begin{claimproof}
        Let $M \subseteq E(G)$ be an induced matching of $G$ of size $\ell$.
        For each $v_i \notin V_M$ we assign to $x_{i,j}$ the value $0$, for all $j \in [t]$.
        As for the vertices of $V_M$, let $e \in M$ with $e = \{v_{i_1},v_{i_2}\}$ and $b(e)=j$.
        Then we assign to both $x_{i_1,j'},x_{i_2,j'}$ the value $2$ for all $j' \in [j]$,
        and the value $1$ for all $j' \in [j+1,t]$.

        It is easy to see that this assignment is of weight $|V_M| \cdot t \ge 2\ell \cdot t$.
        We argue that it is also satisfying.
        It is easy to see that the constraints introduced in Steps~2,~3, and~5 are satisfied.
        To see that the constraints introduced in Step~4 are also satisfied,
        let $e \in E$ with $e = \{v_{i_1},v_{i_2}\}$ and $b(e)=j$.
        If $e \in M$, then the corresponding constraints can be easily seen to be satisfied.
        Assume that $e \notin M$.
        Since $M$ is an induced matching, it holds that at most one endpoint of $e$ belongs to $V_M$,
        consequently it holds that $(x_{i_1,j} = 0) \lor (x_{i_2,j} = 0)$ and the constraints are once
        again satisfied.
    \end{claimproof}

    \begin{claim}
        If $\psi$ has a satisfying assignment of weight at least $2\ell \cdot t$,
        then $G$ has an induced matching of size $\ell$.
    \end{claim}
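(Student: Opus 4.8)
The plan is to take a satisfying assignment of $\psi$ of weight at least $2\ell\cdot t$ and read off an induced matching of $G$. The first step is to pin down the shape of the per-vertex ``trajectory'' $x_{i,1},\ldots,x_{i,t}$. The constraints of Step~2 say that values $0$ and $1$ are absorbing and that value $2$ may only stay or drop to $1$, and the third bullet of Step~2 forbids any change of value at an index $j$ whose (unique, by injectivity of $b$) associated edge is not incident on $v_i$; together with the boundary constraints $x_{i,1}\neq 1$, $x_{i,t}\neq 2$ of Step~5 this forces the trajectory to be either identically $0$, or of the form $2^{a}1^{b}$ with $a,b\ge 1$ and $a+b=t$, where moreover the transition index $a$ equals $b(e)$ for some edge $e$ incident on $v_i$. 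I will call $v_i$ \emph{selected} in the second case, write $e^{*}(v_i)$ for this (unique) edge, and let $S$ be the set of selected vertices. Since $w(0)=0$ and $w(1)=w(2)=1$, a selected vertex contributes exactly $t$ to the total weight and an unselected one contributes $0$, so the weight bound gives $|S|\ge 2\ell$.

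The heart of the argument, and the step I expect to be the main obstacle, is to show that $M:=\{e^{*}(v):v\in S\}$ satisfies $V_M=S$ and that the map $v\mapsto e^{*}(v)$ is two-to-one onto $M$; in particular that no selected vertex has a ``dangling'' distinguished edge whose other endpoint is unselected. Fix $v_a\in S$ with $e^{*}(v_a)=\{v_a,v_b\}$ and $j^{*}=b(e^{*}(v_a))$, so $x_{v_a,j^{*}}=2$ and $x_{v_a,j^{*}+1}=1$. Running through the Step~4 constraints at bag $j^{*}$: the second constraint rules out $x_{v_b,j^{*}}=0$ (since it would force $x_{v_a,j^{*}+1}=x_{v_a,j^{*}}=2$, contradicting the transition); the first constraint then forces $x_{v_b,j^{*}}=2$; and the third constraint forces $x_{v_b,j^{*}+1}=1$, so $v_b$ is also selected and transitions at $j^{*}$, i.e.\ $e^{*}(v_b)=e^{*}(v_a)$. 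Hence each edge of $M$ equals $e^{*}$ of precisely its two endpoints, so $V_M=S$, $|M|=|S|/2\ge\ell$, and $M$ is a matching (two edges of $M$ sharing a vertex would force $|V_M|<2|M|$).

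Finally I would verify that $M$ is \emph{induced}, i.e.\ that $G[V_M]$ contains no edge outside $M$. Suppose $f=\{v_a,v_b\}\in E$ with $v_a,v_b\in V_M=S$, and set $j=b(f)$. Since selected vertices never take value $0$, the first Step~4 constraint at bag $j$ forces $x_{v_a,j}=x_{v_b,j}=2$, and then the third forces $x_{v_a,j+1}=x_{v_b,j+1}=1$, so both endpoints transition at $j$ and $e^{*}(v_a)=e^{*}(v_b)=f$; in particular $f\in M$. Thus $G[V_M]$ is $1$-regular, so $M$ is an induced matching, and since any subset of an induced matching is again an induced matching, $G$ has an induced matching of size exactly $\ell$, which proves the claim.
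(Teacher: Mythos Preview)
Your proof is correct and follows essentially the same approach as the paper's. The paper defines $I=\{i : x_{i,t}=1\}$ and argues by contradiction that every vertex of $G[I]$ has degree exactly~$1$ (degree~$0$ forces $x_{i,t}=2$; degree~$\ge 2$ falsifies a Step~4 constraint), whereas you analyze the trajectories more explicitly, define the matching $M$ constructively via $e^{*}(\cdot)$, and then verify it is induced. Both arguments rest on the same trajectory shape ($0^{t}$ or $2^{a}1^{b}$) and the same use of the Step~4 constraints at the transition bag; your version is a bit more explicit but not genuinely different.
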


    \begin{claimproof}
        Let $i \in [n]$ such that $\psi$ assigns to $x_{i,t}$ the value $0$.
        In that case, due to the constraints in Step~$1$, it follows that $\psi$ assigns to
        $x_{i,j}$ the value $0$ for all $j \in [t]$.
        Let $I \subseteq [n]$ such that $x_{i,t}$ gets value $1$ by $\psi$;
        due to the discussion so far as well as the constraints introduced in Step 5, it follows that $|I| \ge 2\ell$.
        It suffices to show that $G' = G[\setdef{v_i}{i \in I}]$ is an $1$-regular subgraph of $G$.
        Let $i \in I$.
        First assume that there is no edge incident with $v_i$ in $G'$.
        In that case, for any edge $e = \{ v_i, v_{i'} \}$ in $G$ it holds that
        $i' \notin I$, that is, $x_{i',j}$ receives value $0$ for all $j \in [t]$.
        Consequently, by the constraints of Steps~2,~4, and~5, $x_{i,t}$ gets value $2$, a contradiction.
        Now assume that there are more than one edges incident with $v_i$ in $G'$,
        and let $e_1,e_2 \in E$ such that $e_1 = \{v_i, v_{i_1}\}$ and $e_2 = \{v_i, v_{i_2}\}$ with $i_1,i_2 \in I$,
        while $b(e_1) = j_1$ and $b(e_2) = j_2 > j_1$.
        In that case, due to the constraints in Steps~2,~3, and~4 it holds that
        $x_{i,j}$ gets value $1$ for all $j \in [j_1+1,t]$,
        in which case it follows that $x_{i,j_2} = 1$ and $x_{i_2,j_2} \neq 0$,
        falsifying the constraint $(x_{i,j_2} = 0) \lor (x_{i_2,j_2} = 0) \lor (x_{i,j_2} = x_{i_2,j_2} = 2)$,
        a contradiction.
    \end{claimproof}

    Finally, let us argue about the pathwidth of $\psi$.
    Take the path decomposition of $G$ and replace in each $B_j$ each vertex $v_i \in B_j$ with the variable $x_{i,j}$.
    For each $e \in E$ such that $e = \{v_{i_1}, v_{i_2}\}$ and $b(e)=j$ also place
    $x_{i_1,j+1}, x_{i_2,j+1}$ in $B_j$.
    This covers the constraints of Step~4,
    increasing the width by at most~2 (since $b$ is injective).
    For each $j \in [t-1]$ we insert between $B_j$ and $B_{j+1}$ a sequence of bags which start with
    $B_j$ and at each step add a variable $x_{i,j+1}$ and then remove $x_{i,j}$, one by one.
    Finally, to cover the remaining variables and constraints of Steps~2,~3, and~5,
    it suffices for each $v_i \in V$ that appears in the interval $B_{j_1}, \ldots, B_{j_2}$ to insert
    to the left of $B_{j_1}$ a sequence of bags that contain all of $B_{j_1}$ and a path decomposition of the path formed
    by $x_{i,1}, \ldots, x_{i,j_1}$, and similarly after $B_{j_2}$.
\end{proof}

For the other direction, we reuse the gadgets of the reduction presented by Lampis and Vasilakis~\cite{toct/LampisV24}
to prove that {\BDD} when $\Delta = 1$ cannot be solved in $\sO((3-\varepsilon)^{\pw})$ under the SETH;
the latter problem asks, given a graph $G$, $\Delta \ge 0$, and $k>0$,
whether there exists a set $S \subseteq V(G)$ of size at most $k$
such that the maximum degree of $G-S$ is at most $\Delta$.

\begin{lemmarep}[\appsymb]\label{lem:induced:lb:CSP->Induced}
    If there exists $\varepsilon>0$ and an algorithm that solves {\InducedM}
    in time $\sO((3-\varepsilon)^\ctw)$,
    where $\ctw$ denotes the cutwidth of the input graph,
    then the pw-SETH is false.
\end{lemmarep}

\begin{proof}
    We present a reduction from the $4$-CSP problem of \cref{cor:weird} to \InducedM.
    We are given a \textsc{CSP} instance $\psi$ whose variables take values from $[3]$,
    a partition of its variables into two sets $V_1, V_2$,
    a path decomposition $B_1, \ldots, B_t$ of $\psi$ of width $p$
    such that each bag contains $\bO(\log p)$ variables of $V_2$,
    and an injective function $b$ mapping each constraint to a bag that contains its variables.
    We want to construct an {\InducedM} instance $(G,\ell)$ with $\ctw(G) = p + o(p)$,
    such that if $\psi$ is satisfiable, then $G$ has an induced matching of size $\ell$,
    while if $G$ has an induced matching of size $\ell$,
    $\psi$ admits a monotone satisfying multi-assignment which is consistent for $V_2$.
    We assume that the variables of $\psi$ are numbered $X = \{ x_1,\ldots, x_n \}$,
    the constraints are $c_1, \ldots, c_m$,
    and that the given path decomposition is nice.
    We construct $G$ as follows.

    \proofsubparagraph{Block and Variable Gadgets.}
    For every variable $x_i$ and every bag with $x_i \in B_j$, where $j \in [j_1,j_2]$,
    construct a \emph{block gadget} $\hat{B}_{i,j}$.
    Here we consider two cases, depending on whether $x_i \in V_1$ or $x_i \in V_2$.

    If $x_i \in V_1$ then we construct the block gadget $\hat{B}_{i,j}$ as a path on vertices $p^{i,j}_1, p^{i,j}_2,p^{i,j}_3,p^{i,j}_4$.
    Furthermore, for all $j \in [j_1,j_2-1]$, we identify the vertices $p^{i,j}_4, p^{i,j+1}_1$ so that they are the same vertex.
    The resulting path is called the \emph{variable gadget} of $x_i$, and is denoted by $\hat{P}_i$.
    See also \cref{fig:induced:lb:block:V1}.

    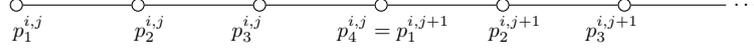
\begin{figure}[ht]
        \centering
        \begin{tikzpicture}[scale=0.8, transform shape]

        %%%%%%%%%% vertices and text
        \node[vertex] (vl11) at (0.5,5.5) {};
        \node[] () at (0.7,5.1) {$p^{i,j}_1$};

        \node[vertex] (vl12) at (2.5,5.5) {};
        \node[] () at (2.7,5.1) {$p^{i,j}_2$};

        \node[vertex] (vl13) at (4.5,5.5) {};
        \node[] () at (4.3,5.1) {$p^{i,j}_3$};

        \node[vertex] (vl14) at (6.5,5.5) {};
        \node[] () at (6.7,5.1) {$p^{i,j}_4 = p^{i,j+1}_1$};

        \node[vertex] (vl15) at (8.5,5.5) {};
        \node[] () at (8.7,5.1) {$p^{i,j+1}_2$};

        \node[vertex] (vl16) at (10.5,5.5) {};
        \node[] () at (10.3,5.1) {$p^{i,j+1}_3$};

        \node[] (vl17) at (12.5,5.5) {$\ldots$};

        %%%%%%%%% edges / arcs

        \draw[] (vl11)--(vl12)--(vl13)--(vl14)--(vl15)--(vl16)--(vl17);

        \end{tikzpicture}
        \caption{Part of the construction of the variable gadget $\hat{P}_i$, where $x_i \in V_1$.}
        \label{fig:induced:lb:block:V1}
    \end{figure}

    As for the case where $x_i \in V_2$, then the block gadget $\hat{B}_{i,j}$
    is a clique on vertices $p^{i,j},p^{i,j}_1,p^{i,j}_2,p^{i,j}_3$.
    Furthermore, for all $j \in [j_1,j_2-1]$ and $k \in [3]$,
    we add all edges between $p^{i,j}_k$ and $\{p^{i,j+1}_1,p^{i,j+1}_2,p^{i,j+1}_3\} \setminus \{p^{i,j+1}_k\}$,
    and we call the \emph{variable gadget} $\hat{P}_i$ this sequence of serially connected block gadgets.
    See also \cref{fig:induced:lb:block:V2}.

    \begin{figure}[ht]
        \centering
        \begin{tikzpicture}[scale=0.8, transform shape]

        %%%%%%%%%% vertices and text
        \node[vertex] (vl11) at (0.5,6.5) {};
        \node[] () at (0.6,6.8) {$p^{i,j}$};

        \node[vertex] (vl12) at (2.5,3.5) {};
        \node[] () at (2.6,3.1) {$p^{i,j}_1$};

        \node[vertex] (vl13) at (2.5,5.5) {};
        \node[] () at (2.2,5.8) {$p^{i,j}_2$};

        \node[vertex] (vl14) at (2.5,7.5) {};
        \node[] () at (2.6,7.9) {$p^{i,j}_3$};

        \node[vertex] (vl22) at (4.5,3.5) {};
        \node[] () at (4.7,3.1) {$p^{i,j+1}_1$};

        \node[vertex] (vl23) at (4.5,5.5) {};
        \node[] () at (5,5.8) {$p^{i,j+1}_2$};

        \node[vertex] (vl24) at (4.5,7.5) {};
        \node[] () at (4.7,7.9) {$p^{i,j+1}_3$};

        \node[vertex] (vl21) at (6.5,6.5) {};
        \node[] () at (6.6,6.8) {$p^{i,j+1}$};

        %%%%%%%%% edges / arcs

        \draw[] (vl11) edge [bend right] (vl12);
        \draw[] (vl11) edge [bend right] (vl13);
        \draw[] (vl11) edge [bend right] (vl14);
        \draw[] (vl12)--(vl13)--(vl14);
        \draw[] (vl12) edge [bend left] (vl14);

        \draw[] (vl21) edge [bend left] (vl22);
        \draw[] (vl21) edge [bend left] (vl23);
        \draw[] (vl21) edge [bend left] (vl24);
        \draw[] (vl22)--(vl23)--(vl24);
        \draw[] (vl22) edge [bend right] (vl24);

        \draw[] (vl12)--(vl23);
        \draw[] (vl12)--(vl24);

        \draw[] (vl13)--(vl22);
        \draw[] (vl13)--(vl24);

        \draw[] (vl14)--(vl22);
        \draw[] (vl14)--(vl23);

        \end{tikzpicture}
        \caption{Part of the construction of the variable gadget $\hat{P}_i$, where $x_i \in V_2$.}
        \label{fig:induced:lb:block:V2}
    \end{figure}
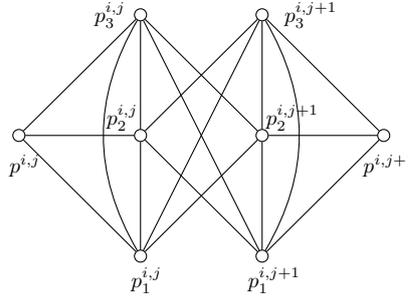

    \proofsubparagraph{Constraint Gadget.}
    This gadget is responsible for determining constraint satisfaction,
    based on the choices made in the rest of the graph.
    Let $c$ be a constraint of $\psi$, where $b(c) = j$ and
    $c$ involves variables $x_{i_1},x_{i_2},x_{i_3}, x_{i_4} \in B_j$.
    Consider the set $\mathcal{S}_c$ of the at most $3^4$ satisfying assignments of $c$.
    We construct the \emph{constraint gadget} $\hat{C}_c$ as follows.
    For each $\sigma \in \mathcal{S}_c$ construct a vertex $y_{c,\sigma}$.
    Additionally construct a vertex $w_c$,
    and add edges so that $w_c$ along with the vertices $y_{c,\sigma}$, for $\sigma \in \mathcal{S}_c$,
    form a clique.
    For each $\sigma \in \mathcal{S}_c$ and $\alpha \in [4]$,
    consider the following cases:
    \begin{itemize}
        \item if $x_{i_\alpha} \in V_1$ and $\sigma(x_{i_\alpha}) = 1$,
        then connect $y_{c,\sigma}$ to $p^{i,j}_3$,

        \item if $x_{i_\alpha} \in V_1$ and $\sigma(x_{i_\alpha}) = 2$,
        then connect $y_{c,\sigma}$ to $p^{i,j}_1$ and $p^{i,j}_4$,

        \item if $x_{i_\alpha} \in V_1$ and $\sigma(x_{i_\alpha}) = 3$,
        then connect $y_{c,\sigma}$ to $p^{i,j}_2$,

        \item if $x_{i_\alpha} \in V_2$ and $\sigma(x_{i_\alpha}) = k \in [3]$,
        then connect $y_{c,\sigma}$ to vertices $\{p^{i,j}_1,p^{i,j}_2,p^{i,j}_3\} \setminus \{p^{i,j}_k\}$.
    \end{itemize}

    This completes the construction of $G$.
    We set $\ell = L_1+L_2+m$, where $L_i$ denotes the number of block gadgets constructed
    due to variables belonging to $V_i$, and $m$ is the number of constraints of $\psi$.
    In that case, $(G,\ell)$ is the constructed instance of \InducedM.

    \begin{claim}\label{claim:induced:lb:CSP->Induced:direction1}
        If $\psi$ is satisfiable, then $G$ has an induced matching of size $\ell$.
    \end{claim}

    \begin{claimproof}
        Let $\sigma$ be a satisfying assignment for $\psi$.
        For each $x_i \in V_2$ we select into our solution $M$ the edge $\{p^{i,j}, p^{i,j}_{\sigma(x_i)}\}$,
        for all $j \in [t]$ with $x_i \in B_j$.
        For each $x_i \in V_1$ and $j \in [t]$ such that $x_i \in B_j$
        we select into $M$ the edge $\{p^{i,j}_{\sigma(x_i)}, p^{i,j}_{\sigma(x_i)+1}\}$.
        Finally, for each constraint $c$,
        we select into $M$ the edge $\{y_{c,\sigma_c}, w_c\}$,
        where $\sigma_c \in \mathcal{S}_c$ is the restriction of $\sigma$ to the variables of $c$.

        $M$ is a matching of size $\ell$.
        It remains to argue that it is an induced matching.
        To this end, observe that among the vertices of $V_M$,
        none of the vertices belonging to a constraint gadget
        is neighbors with vertices of $V_M$ that belong to a block gadget.
        The statement then easily follows.
    \end{claimproof}

    \begin{claim}\label{claim:induced:lb:CSP->Induced:direction2}
        If $G$ has an induced matching of size $\ell$,
        then $\psi$ admits a monotone satisfying multi-assignment which is consistent for $V_2$.
    \end{claim}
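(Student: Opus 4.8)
Here is a plan for proving the claim.

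The plan is to exploit that $\ell = L_1 + L_2 + m$ is exactly the number of \emph{units}, where a unit is either a block gadget $\hat{B}_{i,j}$ or a constraint gadget $\hat{C}_c$. Given an induced matching $M$ with $|M| = \ell$, I would first define a charging of the edges of $M$ to units: an edge inside a variable gadget $\hat{P}_i$ is charged to the block $\hat{B}_{i,j}$ with smallest $j$ among the blocks meeting it (for $x_i \in V_1$ it lies in a single block anyway, being one of $\{p^{i,j}_1,p^{i,j}_2\}$, $\{p^{i,j}_2,p^{i,j}_3\}$, $\{p^{i,j}_3,p^{i,j}_4\}$), and every edge incident to a vertex of $\hat{C}_c$ (i.e. to $w_c$ or some $y_{c,\sigma}$) is charged to $\hat{C}_c$. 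Since the only edges of $G$ lie inside a single variable gadget, inside a single constraint gadget, or between some $y_{c,\sigma}$ and a block-gadget vertex, every edge of $M$ is charged. I would then check that each unit receives at most one charge: $\hat{C}_c$ induces a clique and $w_c$ has no other neighbours, so at most one edge of $M$ meets it; each $\hat{B}_{i,j}$ of a $V_2$-variable induces a $K_4$, so at most one edge (internal, or going to block $j+1$) is charged to it; and along the path $\hat{P}_i$ of a $V_1$-variable, any two edges of an induced matching are separated by at least two other edges, so each of its blocks gets at most one. Hence $|M| \le L_1 + L_2 + m = \ell$, and since $|M| = \ell$, \emph{every unit is charged exactly once}.

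The key structural consequence I would draw is that $M$ contains no ``cross edge'' $\{y_{c,\sigma}, p^{i,j}_k\}$: such an edge matches $p^{i,j}_k$, and $y_{c,\sigma}$ is adjacent to all of $p^{i,j}_k$'s neighbours inside $\hat{B}_{i,j}$ (for $x_i \in V_2$) or to both vertices it attaches to inside $\hat{B}_{i,j}$ (for $x_i \in V_1$ — recall that the value-$2$ case attaches $y_{c,\sigma}$ to both $p^{i,j}_1$ and $p^{i,j}_4$), so no internal edge of $\hat{B}_{i,j}$ can then be in $M$, leaving that block uncharged, a contradiction. Thus $M$ restricted to each $\hat{P}_i$ is an induced matching of the bare path (resp. chain of $K_4$'s) using exactly one edge per block. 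For $x_i \in V_1$ I would set $\sigma(x_i,j) \in \{1,2,3\}$ according to which of the three path-edges of $\hat{B}_{i,j}$ is used; the ``separated by at least two edges'' condition between the chosen edges of consecutive blocks of $\hat{P}_i$ is precisely $\sigma(x_i,j) \le \sigma(x_i,j+1)$, so $\sigma$ is monotone on $V_1$. For $x_i \in V_2$ I would first show the chosen edge of $\hat{B}_{i,j}$ has the form $\{p^{i,j}, p^{i,j}_k\}$: if it were $\{p^{i,j}_a, p^{i,j}_b\}$ with $a,b \in [3]$, then in $\hat{B}_{i,j+1}$ (using $j-1$ instead if $j$ is the last block) all of $p^{i,j+1}_1,p^{i,j+1}_2,p^{i,j+1}_3$ — whose neighbourhoods towards $\{p^{i,j}_a,p^{i,j}_b\}$ jointly cover them — and hence also $p^{i,j+1}$ become unmatched, leaving that block uncharged, impossible once $\hat{P}_i$ has two blocks (and the single-block case is either forced as below or irrelevant since no constraint mentions $x_i$). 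Setting $\sigma(x_i,j) = k$, the same neighbourhood argument forces block $j+1$ to use $\{p^{i,j+1}, p^{i,j+1}_k\}$, so $\sigma(x_i,\cdot)$ is constant on the interval of $x_i$, hence consistent for $V_2$ (and trivially monotone).

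It then remains to verify that $\sigma$ is satisfying. Fix a constraint $c$ with $b(c)=j$ on variables $x_{i_1},\dots,x_{i_4}$. The single edge charged to $\hat{C}_c$ is not a cross edge, and it cannot be an edge $\{y_{c,\sigma}, y_{c,\sigma'}\}$ with $\sigma \neq \sigma'$: such $\sigma,\sigma'$ differ on some $x_{i_\alpha}$, and having both $y_{c,\sigma}, y_{c,\sigma'} \in V_M$ would (by the forcing below) pin $\hat{B}_{i_\alpha,j}$ to two different configurations. So the charge is an edge $\{w_c, y_{c,\sigma}\}$ for a unique $\sigma \in \mathcal{S}_c$, which is therefore a satisfying assignment of $c$. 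Since $y_{c,\sigma} \in V_M$, all of its block-gadget neighbours are unmatched; going through the four attachment rules, for each $x_{i_\alpha}$ this leaves $\hat{B}_{i_\alpha,j}$ only the configuration encoding $\sigma(x_{i_\alpha})$, so $\sigma(x_{i_\alpha}, b(c)) = \sigma(x_{i_\alpha})$ for all four variables of $c$. Hence the restriction $\sigma_c$ of the multi-assignment to $b(c)$ equals $\sigma$ and satisfies $c$, and combined with the previous paragraph $\sigma$ is a monotone satisfying multi-assignment consistent for $V_2$. The step I expect to be the main obstacle is the first paragraph: designing the charging so that it is simultaneously total and at most one per unit, and then squeezing out of tightness the two facts everything rests on, namely that $M$ uses no cross edge and that every block sits in its canonical configuration; once these are in hand, the rest is essentially bookkeeping, with only the degenerate single-block variable gadgets needing a separate easy remark.
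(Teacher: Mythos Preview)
Your plan is essentially correct and follows the same counting-then-tightness approach as the paper: both establish that $M$ contributes at most one edge per block gadget and per constraint gadget, conclude from $|M|=\ell$ that equality holds everywhere, and then read off the multi-assignment. The only stylistic difference is that the paper normalises $M$ first by exchange arguments (replacing a cross edge $\{y_{c,\sigma},p^{i,j}_k\}$ by $\{y_{c,\sigma},w_c\}$, and for $V_2$ replacing a cross-block edge by two internal edges, each yielding an induced matching of at least the same size), whereas you keep $M$ fixed and derive the same normal form from tightness of the charging; this is a legitimate and arguably cleaner alternative.

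One detail in your sketch is wrong, though harmless for the overall plan: for $x_i\in V_2$ you claim that $y_{c,\sigma}$ is adjacent to all neighbours of the cross-edge endpoint $p^{i,j}_k$ inside $\hat{B}_{i,j}$. This is false, since $y_{c,\sigma}$ is adjacent neither to $p^{i,j}$ nor to $p^{i,j}_{\sigma(x_{i_\alpha})}$. The conclusion you need (no edge charged to that block lies in $M$) follows instead directly from the fact that $\hat{B}_{i,j}$ is a clique: once one of its vertices is matched outside, every other vertex of the clique is excluded from $V_M$. Similarly, for $V_1$ you should check all three attachment patterns, not just the value-$2$ case; in each case the matched block vertex $p^{i,j}_k$ excludes its two path-neighbours from $V_M$, and together with the remaining attachment(s) of $y_{c,\sigma}$ this kills all three block edges. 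With these small fixes your argument goes through.
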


    \begin{claimproof}
        Let $M$ be an induced matching of $G$ of size at least $\ell$.
        Let $c$ be a constraint of $\psi$.
        We assume without loss of generality that either $M$ does not contain any edge incident with vertices of $\hat{C}_c$,
        or there exists a $\sigma_c \in \mathcal{S}_c$ such that $\{y_{c,\sigma_c}, w_c\} \in M$.
        Indeed, assume there exists $e \in M$ such that $y_{c,\sigma_0} \in e$ for some $\sigma_0 \in \mathcal{S}_c$,
        while $w_c \notin e$.
        Then it holds that $V_M$ contains no other vertex of $\hat{C}_c$,
        as its vertices form a clique.
        Consequently, $(M \setminus \{e\}) \cup \{\{y_{c,\sigma_0}, w_c \}\}$ is also an induced matching of the same size.

        Now consider a variable gadget $\hat{P}_i$.
        First consider the case where $x_i \in V_1$.
        It is easy to see that $M$ contains at most $1$ edge per block gadget,
        with all edges having both of their endpoints in the same block gadget.
        Now assume that $x_i \in V_2$ and $x_i \in B_j$ for all $j \in [j_1,j_2]$.
        Assume that there exists an edge $e \in M$ such that $e = \{p^{i,j}_k,p^{i,j+1}_{k'}\}$,
        with $j \in [j_1,j_2-1]$ and $k,k' \in [3]$.
        Then, it follows that no other vertex of $\hat{B}_{i,j}$ and $\hat{B}_{i,j+1}$ belongs to $V_M$,
        and exchanging $\{p^{i,j}_k,p^{i,j+1}_{k'}\}$ with edges $\{p^{i,j}_k,p^{i,j}\}$ and $\{p^{i,j+1}_{k'},p^{i,j+1}\}$
        results in an induced matching of larger size.
        In the following assume that for $M$ it holds that any edge incident with vertices of $\hat{B}_{i,j}$ has
        both of its endpoints in said gadget. Notice that $M$ contains at most one edge from each such block gadget.

        Consequently, it follows that $M$ contains at most $L_1 + L_2 + m = \ell$ edges,
        where $L_i$ denotes the number of block gadgets constructed due to variables belonging
        to $V_i$, and $m$ is the number of constraints of $\psi$.
        It follows that $|M| = \ell$, and $M$ contains exactly one edge per block gadget
        as well as exactly one edge per constraint gadget.

        Consider a multi-assignment $\sigma \colon X \times [t] \to [3]$ over all pairs $x_i \in X$ and $j \in [t]$
        with $x_i \in B_j$.
        In particular, consider two cases.
        If $x_i \in V_1$ and $\{p^{i,j}_k, p^{i,j}_{k+1}\} \in M$,
        then $\sigma(x_i,j) = k \in [3]$.
        For the case where $x_i \in V_2$, if $\{p^{i,j}_k, p^{i,j}\} \in M$,
        then $\sigma(x_i,j) = k \in [3]$.

        We argue that $\sigma$ is consistent for all $x_i \in V_2$.
        Fix $x_i \in V_2$ and $j \in [j_1,j_2-1]$, where $B_{j_1}$ and $B_{j_2}$ denote the first and last
        bag that contain $x_i$.
        Notice that $\{p^{i,j}_k, p^{i,j}\} \in M$ implies that $p^{i,j+1}_{k'} \notin V_M$
        where $k \in [3]$ and $k' \in [3] \setminus \{k\}$, thus $\sigma(x_i,j) = \sigma(x_i,j')$
        for all $j,j' \in [j_1,j_2]$.

        We next argue that $\sigma$ is monotone for $x_i \in V_1$,
        where $x_i \in B_j$ for all $j \in [j_1,j_2]$.
        Let $j \in [j_1,j_2-1]$.
        If $\{p^{i,j}_3, p^{i,j}_4\} \in M$,
        then it follows that $p^{i,j+1}_2 \notin V_M$,
        implying that $\{p^{i,j+1}_3, p^{i,j+1}_4\} \in M$.
        If on the other hand $\{p^{i,j}_2, p^{i,j}_3\} \in M$,
        then it follows that $p^{i,j+1}_1 \notin V_M$,
        implying that either $\{p^{i,j+1}_2, p^{i,j+1}_3\} \in M$ or $\{p^{i,j+1}_3, p^{i,j+1}_4\} \in M$.

        Finally, we argue that $\sigma$ is satisfying.
        Consider a constraint $c$ with $b(c)=j$,
        involving four variables from $V_{i_1}, V_{i_2}, V_{i_3}, V_{i_4}$.
        Let $\sigma_c \in \mathcal{S}_c$ such that $\{y_{c,\sigma_c},w_c\} \in M$.
        We claim that $\sigma_c$ must be consistent with our assignment.
        Indeed, if $\sigma_c$ assigns value $k \in [3]$ to $x_i \in V_2$,
        then it follows that $(\{p^{i,j}_1,p^{i,j}_2,p^{i,j}_3\} \setminus \{p^{i,j}_k\}) \notin V_M$,
        thus $\{p^{i,j}_k, p^{i,j}\} \in M$.
        On the other hand, if $\sigma_c$ assigns value $k \in [3]$ to $x_i \in V_1$,
        then
        \begin{itemize}
            \item if $k=1$, then $p^{i,j}_3 \notin V_M$, implying that $\{p^{i,j}_1,p^{i,j}_2\} \in M$,
            \item if $k=2$, then $p^{i,j}_1,p^{i,j}_4 \notin V_M$, implying that $\{p^{i,j}_2,p^{i,j}_3\} \in M$,
            \item if $k=3$, then $p^{i,j}_2 \notin V_M$, implying that $\{p^{i,j}_3,p^{i,j}_4\} \in M$.
        \end{itemize}
        This completes the proof.
    \end{claimproof}

    \begin{claim}\label{claim:induced:lb:CSP->Induced:width}
        It holds that the cutwidth of $G$ is at most $p + \bO(\log p)$.
    \end{claim}

    \begin{claimproof}
        Recall that $B_1, \ldots, B_t$ denotes the path decomposition of the primal graph of $\psi$.
        We construct a linear arrangement $\pi$ of $V(G)$ by starting with an empty linear arrangement
        and then proceed as follows by always adding the vertices to the very right.
        First we iterate over $j = 1, \ldots, t$.
        For a fixed value of $j$, we next iterate over all $x_i \in B_j$.
        For a fixed $x_i \in B_j$, we place all unplaced%
        \footnote{Note that for $j \ge 2$ and $x_i \in B_j \cap V_1$, the vertex $p^{i,j}_1$ is already placed into the linear arrangement
        as it coincides with $p^{i,j-1}_4$.}
        vertices of $\hat{B}_{i,j}$:
        if $x_i \in V_1$, then we place them consecutively, that is, like $p^{i,j}_1, p^{i,j}_2,p^{i,j}_3,p^{i,j}_4$;
        otherwise, i.e., if $x_i \in V_2$, we place them in an arbitrary order.
        After processing all $x_i \in B_j$, if there exists a constraint $c$ with $b(c) = j$,
        then we add all vertices of $\hat{C}_c$ in an arbitrary order.
        This completes the description of $\pi$.

        We now argue that the cutwidth of $\pi$ is $p + \bO(\log p)$.
        Consider an arbitrary cut. We proceed to bound the number of edges crossing this cut:
        \begin{itemize}
            \item Since every bag of the path decomposition of $\psi$ has at most $p$ variables of $V_1$,
            there are at most $p$ edges whose endpoints both belong to a variable gadget due to a variable of $V_1$.

            \item Since every bag of the path decomposition of $\psi$ has $\bO(\log p)$ variables of $V_2$,
            there are $\bO(\log p)$ edges whose endpoints both belong to a variable gadget due to a variable of $V_2$.

            \item Since every constraint gadget contains at most $(3^4)^2$ edges and no two constraint gadgets are connected,
            there are at most $3^8$ edges whose endpoints both belong to a constraint gadget.

            \item Since there are at most $3^4 \cdot 8$ edges connecting a constraint gadget with the block gadgets,
            while any vertex of a block gadget is connected with at most $2$ constraint gadgets,
            there are at most $3^4 \cdot 16$ edges whose one endpoint belongs a block gadget and the other to a constraint gadget.
        \end{itemize}
        Summing over the number of those edges results in the stated upper bound.
        %
        % PATHWIDTH BOUND
        % Finally, to bound the pathwidth of the graph $G$,
        % start with the decomposition of the primal graph of $\psi$ and
        % in each $B_j$ replace each $x_i \in V_2 \cap B_j$ with all the vertices of $\hat{B}_{i,j}$.
        % We further add in $B_j$ all the vertices of $\hat{B}_{i,j+1}$ for $x_i \in V_2 \cap B_{j+1}$.
        % For each constraint $c$, let $b(c)=j$ and add into $B_j$ all the at most $3^4 + 1$
        % vertices of $\hat{C}_c$.
        % So far each bag contains $p + \bO(\log p)$ vertices.
        % To cover the remaining vertices,
        % for each $j \in [t]$ we replace the bag $B_j$ with a sequence of bags such that
        % all of them contain the vertices we have added to $B_j$ so far,
        % the first bag contains $\setdef{p^{i,j}_1}{x_i \in V_1 \cap B_j}$ and
        % the last bag contains $\setdef{p^{i,j}_4}{x_i \in V_1 \cap B_j}$.
        % We insert a sequence of $\bO(p)$ bags between these two,
        % at each step adding a vertex $p^{i,j}_{k+1}$ and then removing $p^{i,j}_k$
        % in a way that covers all edges of paths due to block gadgets.
    \end{claimproof}
    \cref{lem:induced:lb:CSP->Induced} now follows by \cref{claim:induced:lb:CSP->Induced:direction1,claim:induced:lb:CSP->Induced:direction2,claim:induced:lb:CSP->Induced:width}.
\end{proof}

\subsection{Algorithm}\label{subsec:induced:cw}

In this section we prove the following theorem.

\begin{theorem}\label{thm:induced:cw}
    There is an algorithm that,
    given a graph $G$ and an irredundant clique-width expression $\psi$ of $G$ of width $\cw$,
    counts the number of induced matchings of $G$ of each size $\ell \in [0,n/2]$ in time $\sO(3^\cw)$.
\end{theorem}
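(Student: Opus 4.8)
The plan is to set up a dynamic programming algorithm over the given irredundant clique-width expression $\psi$, processing its operations bottom-up and maintaining, at each subexpression, a table indexed by the "interaction profile" of each of the $\cw$ labels together with the matching size so far. For a labeled subgraph $H$ and a candidate induced matching $M$ of $H$, each label $i$ can be in one of three relevant states with respect to the vertices of $H$ currently carrying label $i$: either (a) no vertex of label $i$ belongs to $V_M$, or (b) some vertex of label $i$ belongs to $V_M$ and is \emph{unmatched within $H$} (i.e.\ it is an endpoint whose partner edge has not yet been chosen — it will need to receive an edge via a future $\eta$ operation), or (c) every vertex of label $i$ that lies in $V_M$ is already matched inside $H$. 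Crucially, since $M$ must be \emph{induced}, in states (b) and (c) the label $i$ must be "monochromatic" among its $V_M$-vertices in the sense that no two $V_M$-vertices of label $i$ can later be joined by an $\eta$ operation without creating a forbidden edge; the irredundancy of $\psi$ lets us track this cleanly. This gives $3^{\cw}$ profiles, and the table entry $T[\text{profile}][\ell]$ stores the number of partial induced matchings of size $\ell$ realizing that profile.

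The key steps, in order, are: (1) formalize the three label-states and prove that the profile plus $\ell$ is a sufficient signature — i.e.\ two partial solutions with the same profile are interchangeable with respect to all future operations and contribute equally to the final count; (2) handle the easy nodes: introduce node $i(v)$ initializes a trivial table; relabel $\rho_{i\to j}$ merges the states of labels $i$ and $j$ (a vertex in state (b) stays in (b), etc., with the obvious rule that if both $i$ and $j$ had $V_M$-vertices the induced constraint may forbid the merge, contributing $0$); the join node $\eta_{i,j}$ is where edges of $M$ can be "completed" — we enumerate how the $i$-side and $j$-side states combine, allowing at most one of the two label-classes to contain the matched endpoints, forbidding any configuration that would put an edge between two $V_M$-vertices that are not matched \emph{to each other}, and updating $\ell$ by the number of newly completed $M$-edges; (3) handle the union node $H_1 \oplus H_2$, which combines two tables by, for each pair of compatible profiles, convolving the $\ell$-indexed vectors. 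Step (3) naively costs $3^{\cw}\cdot 3^{\cw}$ per node in the label dimension, which is the bottleneck we must beat to reach $\sO(3^{\cw})$.

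The main obstacle is precisely this union-node bottleneck, and I would overcome it exactly as the introduction signals: instead of tracking for each label one of three explicit states, I switch to \emph{merged states} in the spirit of van Rooij, Bodlaender, Rossmanith~\cite{esa/RooijBR09} and Bodlaender, van Leeuwen, van Rooij, Vatshelle~\cite{mfcs/BodlaenderLRV10}. Observe that the union node must combine a state of $H_1$ and a state of $H_2$ into a single state of $H_1\oplus H_2$ per label, and the set of allowed combinations, viewed as a relation on the $3$-element state set, has the structure of a small commutative monoid; by relabelling states as subsets (or as elements of $\{0,1,2\}$ with an appropriate addition-like operation) we can turn the union combination into a pointwise operation after a fixed linear change of basis on the per-label state space. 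Applying this transformation to all $\cw$ coordinates simultaneously costs only $\sO(3^{\cw})$ (it is $\cw$ applications of a $3\times 3$ transform, i.e.\ a Kronecker-product transform of total cost $\cw\cdot 3^{\cw}$), after which the two transformed tables are combined by a single pointwise product in the label dimension. The remaining $\ell$-dimension is handled by the FFT-based subset-sum/convolution trick of Cygan and Pilipczuk~\cite{tcs/CyganP10}: since we are counting and the size parameter $\ell$ ranges over $[0,n/2]$, the convolutions in $\ell$ at all nodes are performed via FFT in quasilinear time, so the total running time is $\sO(3^{\cw})$ as claimed. Finally I would read off the answer from the root table: sum over all profiles in which every label is in state (a) or state (c) — no label may be left in the "waiting for a partner" state (b) — the entry for each $\ell$, giving the number of induced matchings of size $\ell$.
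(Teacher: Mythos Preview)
Your high-level plan is right and matches the paper: three label-states, DP over the irredundant expression, union node is the bottleneck, and the merged-states idea of~\cite{mfcs/BodlaenderLRV10,esa/RooijBR09} together with the Cygan--Pilipczuk FFT trick~\cite{tcs/CyganP10} is what gets you to $\sO(3^{\cw})$. Two points, however, need correction.

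First, your state~(b) is stated too loosely. The paper's key structural claim is that if any vertex of label~$i$ in the partial solution is unsaturated, then it is the \emph{only} vertex of label~$i$ in the solution (otherwise a later $\eta$ that saturates it would create a degree-$2$ vertex). You should make this explicit; without it the three states are not well-defined on arbitrary partial solutions and the join/relabel transitions you sketch do not go through.

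Second, and more importantly, the union-node step as you describe it does not work. The per-label combination rule is \emph{not} a commutative monoid: writing the states as $0,1,2$ (your (a),(c),(b) respectively), the rule is $0\!*\!0=0$, $\{0,1\}\!*\!\{0,1\}\subseteq\{0,1\}$, $0\!*\!2=2\!*\!0=2$, but $1\!*\!2$ and $2\!*\!2$ are \emph{forbidden} (contribute~$0$). There is no $3\times 3$ change of basis that turns this into a pointwise product; the ``at most one side is~$2$'' constraint is a disjointness condition, not a monoid law. What the paper actually does is: (i) apply a zeta-type transform that merges states $0$ and $1$ into a single pseudostate $\tilde 1$ (this part \emph{is} a pointwise product after the transform), and then (ii) for the remaining $\{0,2\}$-valued coordinates, encode them as bits and use the Cygan--Pilipczuk trick---polynomial multiplication split by Hamming weight---to enforce disjointness of the ``$2$''-supports of the two sides. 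The FFT is thus applied to the \emph{label} coordinates (the $\{0,2\}$ part), not merely to the $\ell$-dimension as you suggest; summing $2^{\cw-|A|}$ over all choices $A$ of $\tilde 1$-coordinates gives the $3^{\cw}$ bound. Your proposal has all the right ingredients cited, but they are assembled incorrectly at this step.
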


\begin{proof}
    Before describing our algorithm we start with some definitions and notation.
    Given a tuple $\sigma$, we refer to its $w$-th entry by $\sigma(w)$.
    Let $H$ be a graph and $S \subseteq V(H)$ a subset of its vertices.
    We say that a vertex $v \in S$ is \emph{unsaturated} in $H[S]$ if its degree is $\deg_{H[S]} (v) = 0$,
    else if $\deg_{H[S]} (v) = 1$ it is \emph{saturated} in $H[S]$ instead.
    We say that $S$ is a \emph{partial matching} of $H$ if the maximum degree of $H[S]$ is at most $1$,
    in which case all of its vertices are either saturated or unsaturated in $H[S]$.
    The \emph{size of a partial matching} $S$ of $H$ is equal to the number of its vertices, that is, $|S|$.
    Finally, we say that a partial matching $S$ of $H$ is \emph{extensible} if there exists an induced matching $M$ of $G$
    such that $V_M \cap V(H) = S$.

    Let $\mathcal{H}$ denote the set of all $\cw$-labeled graphs generated by subexpressions of $\psi$.
    For $H \in \mathcal{H}$ and $i \in [\cw]$,
    we define the partial function $\sgn_{H,i} \colon 2^{V(H)} \to \{0,1,2\}$ for $S \subseteq V(H)$ as follows:
    \[
        \sgn_{H,i}(S) =
            \begin{cases}
                0   &\text{if $S \cap \lab^{-1}_H (i) = \varnothing$,}\\
                1   &\text{if $S \cap \lab^{-1}_H (i) \neq \varnothing$ and $\deg_{H[S]}(v) = 1$ for all $v \in S \cap \lab^{-1}_H (i)$,}\\
                2   &\text{if $S \cap \lab^{-1}_H (i) = \{v\}$ and $\deg_{H[S]}(v) = 0$.}
            \end{cases}
    \]
    We further define the tuple $\sgn_H(S) = \left( \sgn_{H,1}(S), \ldots, \sgn_{H,\cw}(S) \right)$ to be
    the \emph{signature} of $S$ in $H$ if $\sgn_{H,i}(S)$ is well-defined for all $i \in [\cw]$.
    Notice that $\sgn_H(S)$ is only defined for a subset of the partial matchings of $H$,
    while there are at most $3^{\cw}$ different signatures.
    For improved readability, we sometimes refer to tuples $\sigma \in \{0,1,2\}^{\cw}$ as signatures.
    We first argue that for any induced matching $M$ of $G$,
    the signature of $V_M \cap V(H)$ is well-defined for all $H \in \mathcal{H}$.

    \begin{claim}\label{claim:induced:cw:well_defined}
        Let $M$ be an induced matching of $G$.
        For all $H \in \mathcal{H}$ it holds that $\sgn_H (V_M \cap V(H))$ is well-defined.
    \end{claim}

    \begin{claimproof} %
        Notice that for all $H \in \mathcal{H}$ it holds that $H[V_M \cap V(H)]$ is a subgraph of $G[V_M]$,
        therefore the maximum degree of $H[V_M \cap V(H)]$ is at most $1$.
        Consequently, $V_M \cap V(H)$ is a partial matching of $H$.
        Fix $H \in \mathcal{H}$ and let $S = V_M \cap V(H)$.
        It suffices to prove that $\sgn_{H,i}(S)$ is well-defined for all $i \in [\cw]$.

        For the sake of contradiction, assume that $\sgn_{H,i}(S)$ is not well-defined for some $i \in [\cw]$.
        Since $S$ is a partial matching of $H$, that can only be the case when $S \cap \lab^{-1}_{H}(i)$ contains at least two vertices,
        say $v,v'$, and at least one of them, say $v$, is unsaturated in $H[S]$,
        that is, $\deg_{H[S]}(v) = 0$.
        Let $u \in V_M$ such that $\{v,u\} \in M$.
        Since $\deg_{H[S]}(v) = 0$ it holds that either $u \notin V(H)$ or $\{v,u\} \notin E(H)$.
        This means that at some point further in the clique-expression there exists a join operation with
        $H_1,H_2 \in \mathcal{H}$ and $H_2 = \eta_{i',j'}(H_1)$ that adds all edges between vertices of labels $i'$ and $j'$,
        with $v,v'$ having label $i'$ and $u$ having label $j'$.
        In that case however, it holds that the degree of $u \in V_M \cap V(H_2)$ in $H_2[V_M \cap V(H_2)]$
        is at least $2$ as it is adjacent to both $v$ and $v'$, a contradiction.
    \end{claimproof}

    Our algorithm proceeds by dynamic programming along $\psi$ in a bottom-up fashion.
    In particular, for every $H \in \mathcal{H}$ it stores a table $\DPt_H[\cdot,\cdot]$ where, on a high-level,
    for $k \in [0,n]$ and $\sigma \in \{0,1,2\}^{\cw}$, $\DPt_H[k,\sigma]$ contains
    the number of partial matchings $S$ of $H$ of size $k$ and signature $\sigma$,
    that is, with $|S|=k$ and $\sgn_H(S) = \sigma$.
    We now proceed to describe how to populate the DP tables, and in \Cref{lemma:induced:cw:join-correctness,lemma:induced:cw:relabel-correctness,lemma:induced:cw:union-correctness} we prove that they indeed contain the number of partial matchings of appropriate size and signature.

    \proofsubparagraph{Singleton $H = i(v)$.}
    For $H = i(v)$, notice that $\lab^{-1}_H(i) = \{v\}$ and $\lab^{-1}_H(w) = \varnothing$ for all $w \in [\cw] \setminus \{i\}$.
    Consequently, for $k \in [0,n]$ and $\sigma \in \{0,1,2\}^{\cw}$ we set
    \begin{equation}\label{eq:DPH-singleton}
        \DPt_H[k,\sigma] =
            \begin{cases}
                1   &\text{if $k=0$ and $\sigma(w) = 0$ for all $w \in [\cw]$,}\\
                1   &\text{if $k=1$ and $\sigma(w) = 0$ for all $w \in [\cw] \setminus \{i\}$ and $\sigma(i) = 2$,}\\
                0   &\text{otherwise}.
            \end{cases}
    \end{equation}
    Notice that all but two entries in $\DPt_H[\cdot,\cdot]$ are $0$,
    and by \cref{eq:DPH-singleton} we can fill the table $\DPt_H[\cdot,\cdot]$ in time $\sO(3^{\cw})$.

    \proofsubparagraph{Joining labels with edges, $H = \eta_{i,j}(H')$.}
    For each pair of distinct $i,j \in [\cw]$,
    we define a partial function $f_{i,j} \colon \{0,1,2\}^{\cw} \to \{0,1,2\}^{\cw}$.
    In particular, for $\sigma' \in \{0,1,2\}^{\cw}$ we set $f_{i,j}(\sigma') = \sigma$
    such that $\sigma(w) = \sigma'(w)$ for all $w \in [\cw] \setminus \{i,j\}$
    and
    \[
        (\sigma(i),\sigma(j)) =
            \begin{cases}
                (\sigma'(i), \, \sigma'(j)) &\text{if $0 \in \{\sigma'(i), \, \sigma'(j)\}$,}\\
                (1,1)                       &\text{if $\sigma'(i) = \sigma'(j) = 2$}.
            \end{cases}
    \]
    For a fixed size $k \in [0,n]$ and signature $\sigma \in \{0,1,2\}^{\cw}$ we set the value of $\DPt_H[k,\sigma]$ to be
    \begin{equation}\label{eq:DPH-join}
        \DPt_H[k,\sigma] = \sum_{\sigma' \in f^{-1}_{i,j} (\sigma)} \DPt_{H'}[k,\sigma'].
    \end{equation}
    If $f^{-1}_{i,j}(\sigma) = \varnothing$ we set $\DPt_H[k,\sigma] = 0$.
    Notice that computing $f^{-1}_{i,j}(\sigma)$ requires $\bO(\cw)$ time,
    thus by \cref{eq:DPH-join} we can fill the table $\DPt_H[\cdot,\cdot]$ in time $\sO(3^{\cw})$.

    \begin{lemmarep}[\appsymb]\label{lemma:induced:cw:join-correctness}
        Let $H = \eta_{i,j}(H')$.
        Assume that for all $k \in [0,n]$ and $\sigma \in \{0,1,2\}^{\cw}$,
        $\DPt_{H'}[k,\sigma]$ is equal to the number of partial matchings $S$ of $H'$ such that $|S|=k$ and $\sgn_{H'}(S) = \sigma$.
        Then, $\DPt_H[k,\sigma]$ is equal to the number of partial matchings $S$ of $H$ such that $|S|=k$ and $\sgn_H(S) = \sigma$.
    \end{lemmarep}

    \begin{proof}
        We show that the partial function $f_{i,j}$ describes precisely the effect of the operation $\eta_{i,j}$
        on the signature of a partial matching $S$.

        \begin{claim}\label{claim:induced:cw:join-correctness}
            Let $S \subseteq V(H')$ such that $\sgn_{H'}(S) = \sigma'$.
            Then, $\sgn_{H}(S) = \sigma$ if and only if $f_{i,j}(\sigma') = \sigma$.
        \end{claim}

        \begin{claimproof}
            First, notice that the labels of the vertices remain the same between $H'$ and $H$,
            therefore it holds that $\lab_H(v) = \lab_{H'}(v)$ for all $v \in V(H')$.
            This implies that $S \cap \lab^{-1}_H (w) = S \cap \lab^{-1}_{H'} (w)$ for all $w \in [\cw]$.
            Since $H[S]$ is obtained from $H'[S]$ by adding all edges between vertices of label~$i$ and~$j$,
            it follows that $\deg_{H[S]}(v) = \deg_{H'[S]}(v)$ for all $v \in S \setminus (\lab^{-1}_H(i) \cup \lab^{-1}_H(j))$,
            thus $\sgn_{H,w}(S) = \sgn_{H',w}(S)$ for all $w \in [\cw] \setminus \{i,j\}$.
            In that case, $\sgn_{H}(S)$ and $f_{i,j}(\sigma')$ agree on all labels $w \in [\cw] \setminus \{i,j\}$.
            It remains to argue for labels~$i$ and~$j$.

            Assume first that $0 \in \{\sigma'(i), \, \sigma'(j)\}$.
            It is easy to see that in that case the graphs $H[S]$ and $H'[S]$ are the same,
            consequently $\sgn_{H}(S) = \sgn_{H'}(S)$ follows, and by the definition of partial function $f_{i,j}$
            we have that $\sgn_{H}(S) = f_{i,j}(\sigma')$.

            Next, assume that $\sigma'(i) = \sigma'(j) = 2$.
            Then, it holds that both $S \cap \lab^{-1}_{H}(i)$ and $S \cap \lab^{-1}_{H}(i)$ are comprised of a single vertex
            that is of degree $0$ in $H'[S]$, which implies that $\sgn_{H,i}(S) = \sgn_{H,j}(S) = 1$
            since the same vertices have degree~$1$ in $H[S]$.
            Again we have that $\sgn_{H}(S) = f_{i,j}(\sigma')$.

            Lastly, assume that $0 \notin \{\sigma'(i), \, \sigma'(j)\}$ and either $\sigma'(i) \neq 2$ or $\sigma'(j) \neq 2$.
            Assume without loss of generality that $\sigma'(i) = 1$ and $\sigma'(j) \in \{1,2\}$,
            and let $v \in S \cap \lab^{-1}_{H'}(i)$ and $u \in S \cap \lab^{-1}_{H'}(j)$,
            where $\deg_{H'[S]}(v) = 1$.
            Since $\psi$ is by definition an irredundant clique-width expression,
            it follows that $\{v,u\} \notin E(H')$.
            Consequently, the operation $\eta_{i,j}$ adds the edge $\{v,u\}$ in $H$, thus $\deg_{H[S]}(v) > \deg_{H'[S]}(v) = 1$.
            In that case $\sgn_{H,i}(S)$ is undefined, as is the case for $f_{i,j}(\sigma')$.
        \end{claimproof}

        Consequently, the number of partial matchings of $H$ of size $k$ and signature $\sigma$
        is exactly equal to the sum of the number of partial matchings of $H'$ of size $k$ and signature $\sigma'$
        over all $\sigma' \in f^{-1}_{i,j}(\sigma)$.
        The statement now follows by the hypothesis for $\DPt_{H'}[\cdot,\cdot]$ and \cref{eq:DPH-join}.
    \end{proof}

    \proofsubparagraph{Relabeling, $H = \rho_{i \to j}(H')$.}
    For each pair of distinct $i,j \in [\cw]$,
    we define a partial function $g_{i \to j} \colon \{0,1,2\}^{\cw} \to \{0,1,2\}^{\cw}$.
    In particular, for $\sigma' \in \{0,1,2\}^{\cw}$ we set $g_{i \to j}(\sigma') = \sigma$
    such that $\sigma(w) = \sigma'(w)$ for all $w \in [\cw] \setminus \{i,j\}$,
    $\sigma(i) = 0$, and
    \[
        \sigma(j) =
            \begin{cases}
                \max \{ \sigma'(i), \, \sigma'(j) \}    &\text{if $0 \in \{\sigma'(i), \, \sigma'(j)\}$,}\\
                1                                       &\text{if $\sigma'(i) = \sigma'(j) = 1$}.
            \end{cases}
    \]
    For a fixed size $k \in [0,n]$ and signature $\sigma \in \{0,1,2\}^{\cw}$ we set the value of $\DPt_H[k,\sigma]$ to be
    \begin{equation}\label{eq:DPH-relabel}
        \DPt_H[k,\sigma] = \sum_{\sigma' \in g^{-1}_{i \to j} (\sigma)} \DPt_{H'}[k,\sigma'].
    \end{equation}
    If $g^{-1}_{i \to j}(\sigma) = \varnothing$ we set $\DPt_H[k,\sigma] = 0$.
    Notice that computing $g^{-1}_{i \to j}(\sigma)$ requires $\bO(\cw)$ time,
    thus by \cref{eq:DPH-relabel} we can fill the table $\DPt_H[\cdot,\cdot]$ in time $\sO(3^{\cw})$.

    \begin{lemmarep}[\appsymb]\label{lemma:induced:cw:relabel-correctness}
        Let $H = \rho_{i \to j}(H')$.
        Assume that for all $k \in [0,n]$ and $\sigma \in \{0,1,2\}^{\cw}$,
        $\DPt_{H'}[k,\sigma]$ is equal to the number of partial matchings $S$ of $H'$ such that $|S|=k$ and $\sgn_{H'}(S) = \sigma$.
        Then, $\DPt_H[k,\sigma]$ is equal to the number of partial matchings $S$ of $H$ such that $|S|=k$ and $\sgn_H(S) = \sigma$.
    \end{lemmarep}

    \begin{proof}

        We show that the partial function $g_{i \to j}$ describes precisely the effect of the operation $\rho_{i \to j}$
        on the signature of a partial matching $S$.

        \begin{claim}\label{claim:induced:cw:relabel-correctness}
            Let $S \subseteq V(H')$ such that $\sgn_{H'}(S) = \sigma'$.
            Then, $\sgn_{H}(S) = \sigma$ if and only if $g_{i \to j}(\sigma') = \sigma$.
        \end{claim}

        \begin{claimproof}
            First, notice that the labels of all vertices with label different than~$i$ remain the same between $H'$ and $H$,
            therefore it holds that $\lab_H(v) = \lab_{H'}(v)$ for all $v \in V(H') \setminus \lab^{-1}_{H'}(i)$.
            This implies that $S \cap \lab^{-1}_H (w) = S \cap \lab^{-1}_{H'} (w)$ for all $w \in [\cw] \setminus \{i,j\}$.
            Furthermore, notice that no new edge is added in $H$, consequently it holds that $\deg_{H[S]}(v) = \deg_{H'[S]}(v)$ for all $v \in V(H)$.
            This implies that $\sgn_{H,w}(S) = \sgn_{H',w}(S)$ for all $w \in [\cw] \setminus \{i,j\}$.
            Moreover, since $H$ is obtained from $H'$ by relabeling all vertices of label~$i$ to label~$j$,
            it holds that $\lab^{-1}_H(j) = \lab^{-1}_{H'}(i) \cup \lab^{-1}_{H'}(j)$ and $\lab^{-1}_H(i) = \varnothing$,
            thus $\sgn_{H,i}(S) = 0$.
            In that case, $\sgn_{H}(S)$ and $f_{i,j}(\sigma')$ agree on all labels $w \in [\cw] \setminus \{j\}$.
            It remains to argue for label~$j$.

            Assume first that $0 \in \{\sigma'(i), \, \sigma'(j)\}$.
            In that case, it holds that either $S \cap \lab^{-1}_{H'}(i) = \varnothing$ or $S \cap \lab^{-1}_{H'}(j) = \varnothing$.
            Assume that $S \cap \lab^{-1}_{H'}(i) = \varnothing$, that is, $\sgn_{H',i}(S) = 0$,
            which implies that $S \cap \lab^{-1}_{H}(j) = S \cap \lab^{-1}_{H'}(j)$,
            thus $\sgn_{H,j}(S) = \sgn_{H',j}(S) = \max \{\sgn_{H',j}(S), \, \sgn_{H',i}(S)\}$.
            In a similar way, when $S \cap \lab^{-1}_{H'}(i) = \varnothing$ we get that $\sgn_{H',j}(S) = 0$
            and $\sgn_{H,j}(S) = \sgn_{H',i}(S) = \max \{\sgn_{H',i}(S), \, \sgn_{H',j}(S)\}$.
            Thus, in this case we have that $\sgn_{H}(S) = f_{i,j}(\sigma')$.

            Next, assume that $\sigma'(i) = \sigma'(j) = 1$.
            Then, it holds that for all $v_1 \in S \cap \lab^{-1}_{H'}(i)$ and $v_2 \in S \cap \lab^{-1}_{H'}(j)$
            we have that $\deg_{H'[S]} (v_1) = \deg_{H'[S]} (v_2) = 1$.
            Since $\lab^{-1}_H(j) = \lab^{-1}_{H'}(i) \cup \lab^{-1}_{H'}(j)$ and
            for all $v \in V(H)$ it holds that $\deg_{H[S]}(v) = \deg_{H'[S]}(v)$,
            it follows that $\sgn_{H,j}(S) = 1$.
            Again we have that $\sgn_{H}(S) = f_{i,j}(\sigma')$.

            Lastly, assume that $0 \notin \{\sigma'(i), \, \sigma'(j)\}$ and either $\sigma'(i) \neq 1$ or $\sigma'(j) \neq 1$.
            Assume without loss of generality that $\sigma'(i) = 2$ and $\sigma'(j) \in \{1,2\}$.
            In that case, it holds that $S \cap \lab^{-1}_{H'}(i) = \{v\}$ with $\deg_{H'[S]}(v) = 0$
            as well as $S \cap \lab^{-1}_{H'}(i) \neq \varnothing$.
            Consequently it holds that $S \cap \lab^{-1}_H(j)$ contains at least $2$ vertices $v_1,v_2$, with $\deg_{H[S]}(v_1) = 0$,
            thus $\sgn_{H,j}(S)$ is undefined, as is the case for $f_{i,j}(\sigma')$.
        \end{claimproof}

        Consequently, the number of partial matchings of $H$ of size $k$ and signature $\sigma$
        is exactly equal to the sum of the number of partial matchings of $H'$ of size $k$ and signature $\sigma'$
        over all $\sigma' \in g^{-1}_{i \to j}(\sigma)$.
        The statement now follows by the hypothesis for $\DPt_{H'}[\cdot,\cdot]$ and \cref{eq:DPH-relabel}.
    \end{proof}

    \proofsubparagraph{Disjoint union, $H = H_1 \oplus H_2$.}
    For the Union nodes, an approach similar to the previous cases would yield a total running time of $\sO(9^{\cw})$
    to populate the table $\DPt_H[\cdot,\cdot]$, which can be further improved to $\sO(6^{\cw})$ by some slightly more
    careful analysis.
    In order to bring this down to $\sO(3^{\cw})$ we proceed in a different way
    by using the techniques introduced in~\cite{mfcs/BodlaenderLRV10,tcs/CyganP10,esa/RooijBR09}.

    We start with some definitions.
    Let $\sigmaapx \in \{0,\tilde{1},2\}^{\cw}$ be a tuple with $\cw$ entries,
    each taking values over the set $\{0,\tilde{1},2\}$.
    We call any such tuple a \emph{pseudosignature} and we say that the signature $\sigma \in \{0,1,2\}^{\cw}$
    is \emph{compatible} with the pseudosignature $\sigmaapx$ if for all $w \in [\cw]$ it holds that
    (i) if $\sigmaapx(w) = 0$ then $\sigma(w) = 0$,
    (ii) if $\sigmaapx(w) = \tilde{1}$ then $\sigma(w) \in \{0,1\}$, and
    (iii) if $\sigmaapx(w) = 2$ then $\sigma(w) = 2$.
    Let for $H' \in \mathcal{H}$, $\DPapx_{H'}[\cdot,\cdot]$ be the table
    which in entry $\DPapx_{H'}[k,\sigmaapx]$ contains the number of partial matchings
    in $H'$ of size $k$ whose signatures are compatible with $\sigmaapx$.
    Our main idea is to first compute the tables $\DPapx_{H_1}[\cdot,\cdot]$ and $\DPapx_{H_2}[\cdot,\cdot]$,
    then from those compute the table $\DPapx_H[\cdot,\cdot]$,
    and lastly use the latter to populate the table $\DPt_H[\cdot,\cdot]$.

    \proofsubparagraph{Step 1.}
    Let $H_i \in \{H_1,H_2\}$.
    For $\lambda \in [0,\cw]$ we say that $\sigmamixed \in \{0,\tilde{1},2\}^{\lambda} \times \{0,1,2\}^{\cw-\lambda}$ is
    a \emph{$\lambda$-mixed signature}.
    We say that the signature $\sigma \in \{0,1,2\}^{\cw}$ is \emph{compatible} with the $\lambda$-mixed signature $\sigmamixed$
    if the following hold:
    \begin{itemize}
        \item For all $w \in [\lambda]$ it holds that
            (i) if $\sigmamixed(w) = 0$ then $\sigma(w) = 0$,
            (ii) if $\sigmamixed(w) = \tilde{1}$ then $\sigma(w) \in \{0,1\}$, and
            (iii) if $\sigmamixed(w) = 2$ then $\sigma(w) = 2$.

        \item For all $w \in [\lambda+1,\cw]$, $\sigma(w) = \sigmamixed(w)$.
    \end{itemize}
    We proceed inductively.
    For $\lambda \in [0,\cw]$ let $\DPapx_{H_i}^{\lambda}[\cdot,\cdot]$ denote a table taking values over
    $[0,n] \times \{0,\tilde{1},2\}^{\lambda} \times \{0,1,2\}^{\cw-\lambda}$,
    where intuitively $\DPapx_{H_i}^{\lambda}[k,\sigmamixed]$ is equal to the number of partial matchings of ${H_i}$ of size $k$
    whose signature is compatible with $\sigmamixed$.
    For $\lambda=0$, this is precisely the table $\DPt_{H_i}[\cdot,\cdot]$,
    while for $\lambda=\cw$ this will be the table $\DPapx_{H_i}[\cdot,\cdot]$.

    It suffices to explain how to compute $\DPapx_{H_i}^{\lambda+1}[\cdot,\cdot]$ from the table
    $\DPapx_{H_i}^{\lambda}[\cdot,\cdot]$ in time $\sO(3^{\cw})$, and repeat the whole procedure $\cw$ times.
    Let $k \in [0,n]$ and $\sigmamixed \in \{0,\tilde{1},2\}^{\lambda+1} \times \{0,1,2\}^{\cw-(\lambda+1)}$
    be a $(\lambda+1)$-mixed signature.
    Consider two cases.
    If $\sigmamixed(\lambda+1) \in \{0,2\}$,
    then set $\DPapx_{H_i}^{\lambda+1}[k,\sigmamixed] = \DPapx_{H_i}^{\lambda}[k,\sigmamixed]$.
    Otherwise, it holds that $\sigmamixed(\lambda+1) = \tilde{1}$,
    and let $\sigmamixed', \sigmamixed'' \in \{0,\tilde{1},2\}^{\lambda} \times \{0,1,2\}^{\cw-\lambda}$
    denote the $\lambda$-mixed signatures obtained from $\sigmamixed$ by substituting the $(\lambda+1)$-th
    entry with $1$ and $0$ respectively, that is,
    $\sigmamixed' = \sigmamixed [(\lambda+1) \mapsto 0]$ and
    $\sigmamixed'' = \sigmamixed [(\lambda+1) \mapsto 1]$.
    In that case, set
    $\DPapx_{H_i}^{\lambda+1}[k,\sigmamixed] = \DPapx_{H_i}^{\lambda}[k,\sigmamixed'] + \DPapx_{H_i}^{\lambda}[k,\sigmamixed'']$.

    \proofsubparagraph{Step 2.}
    Utilizing the tables $\DPapx_{H_1}[\cdot,\cdot]$ and $\DPapx_{H_2}[\cdot,\cdot]$ constructed in Step 1,
    along with making use of the FFT technique introduced by Cygan and Pilipczuk~\cite{tcs/CyganP10},
    we compute in time $\sO(3^{\cw})$ the table $\DPapx_H[\cdot,\cdot]$,
    also indexed over $[0,n] \times \{0,\tilde{1},2\}^{\cw}$.
    Initially we set every entry of said table to $0$.

    In the following, fix $k \in [0,n]$ and a subset $A \subseteq [\cw]$.
    Let $\mathcal{S}_A \subseteq \{0,\tilde{1},2\}^{\cw}$ denote the set of pseudosignatures
    such that $\sigmaapx \in \mathcal{S}_A$ iff for all $w \in [\cw]$, $\sigmaapx(w) = \tilde{1} \iff w \in A$.
    We show how to compute all entries $\DPapx_H[k,\sigmaapx]$ with $\sigmaapx \in \mathcal{S}_A$
    in time $\sO(2^{\cw - |A|})$.
    Since $\sum_{A \subseteq [\cw]} 2^{\cw - |A|} = 3^{\cw}$ and $k$ takes $n+1$ different
    values, this suffices for the stated time bound.
    Let $w_1, \ldots, w_{\cw-|A|}$ denote in increasing order the non $\tilde{1}$-valued coordinates of a pseudosignature in $\mathcal{S}_A$,
    that is, for all $j \in [\cw-|A|]$ and $\sigmaapx \in \mathcal{S}_A$,
    we have that $\sigmaapx(w_j) \neq \tilde{1}$.

    We first fix $k_1,k_2 \in [0,k]$ such that $k_1+k_2=k$.
    Notice that there are $\bO(n)$ such pairs of $k_1,k_2$.
    Intuitively, $k_i$ represents the number of vertices of the partial matching of $H_i$ that account towards the
    total size of the partial matching in $H$.
    Next we define a \emph{binary encoding} $\bin_A \colon \mathcal{S}_A \to [0,2^{(\cw-|A|)}-1]$
    of every pseudosignature in $\mathcal{S}_A$ such that for all $\sigmaapx \in \mathcal{S}_A$
    the $j$-th bit of the binary representation of $\bin_A(\sigmaapx)$, denoted by $\bin_{A,j}(\sigmaapx)$,
    is equal to
    \[
        \bin_{A,j}(\sigmaapx) =
            \begin{cases}
                0   &\text{if $\sigmaapx(w_j) = 0$,}\\
                1   &\text{if $\sigmaapx(w_j) = 2$.}
            \end{cases}
    \]
    For every entry $\DPapx_{H_i}[k_i,\sigmaapx]$ with $\sigmaapx \in \mathcal{S}_A$ we introduce a monomial
    $\DPapx_{H_i}[k_i,\sigmaapx] \cdot x^{\bin_A(\sigmaapx)}$,
    and we set $P_{A,H_i,k_i}$ to denote the polynomial comprised of all such monomials,
    that is,
    \[
        P_{A,H_i,k_i} = \sum_{\sigmaapx \in \mathcal{S}_A} \DPapx_{H_i}[k_i,\sigmaapx] \cdot x^{\bin_A(\sigmaapx)}.
    \]
    Notice that the degree of $P_{A,H_i,k_i}$ is at most $2^{\cw-|A|}$, while all coefficients (i.e., the number of partial solutions)
    are upper-bounded by $2^n$.
    Let $P_{A,H_i,k_i}^{r_i}$ denote the restriction of $P_{A,H_i,k_i}$ to monomials whose degree has
    \emph{exactly} $r_i$ out of the $\cw-|A|$ bits being $1$.
    Fix $r_1,r_2 \in [0,\cw-|A|]$ and multiply polynomials $P_{A,H_1,k_1}^{r_1}$ and $P_{A,H_2,k_2}^{r_2}$ in
    $\sO(2^{\cw - |A|})$ time using FFT.%
    \footnote{Recall that using FFT we can multiply two polynomials of degree $n$ whose coefficients are upper-bounded by $2^{\bO(n)}$ in
    $\bO(n \log n)$ time~\cite{issac/Moenck76}.}
    We iterate over all $\sigmaapx \in \mathcal{S}_A$ having exactly $r_1+r_2$ bits of
    $\bin_A(\sigmaapx)$ being $1$ and check what coefficient, say $s$, stands in front
    of $x^{\bin_A(\sigmaapx)}$ in the resulting polynomial.
    We update the value of $\DPapx_H[k,\sigmaapx]$ by adding to its previous value $+s$.
    We repeat over all $r_1,r_2 \in [0,\cw-|A|]$, $k_1,k_2 \in [0,k]$, and $k \in [0,n]$,
    needing in total $\sO(2^{\cw-|A|})$ time.
    Finally, we fill the whole table $\DPapx_H[\cdot,\cdot]$ in time $\sO(3^{\cw})$ by repeating
    over all $A \subseteq [\cw]$.

    \proofsubparagraph{Step 3.}
    Given $\DPapx_H[\cdot,\cdot]$, we populate the values of $\DPt_H[\cdot,\cdot]$
    by repeating a procedure analogous to the one described in Step 1.
    The total running time is once again $\sO(3^{\cw})$.

    \begin{lemmarep}[\appsymb]\label{lemma:induced:cw:union-correctness}
        Let $H = H_1 \oplus H_2$.
        Assume that for all $i \in \{1,2\}$, $k \in [0,n]$, and $\sigma \in \{0,1,2\}^{\cw}$,
        $\DPt_{H_i}[k,\sigma]$ is equal to the number of partial matchings $S$ of $H_i$ such that $|S|=k$ and $\sgn_{H_i}(S)=\sigma$.
        Then, $\DPt_H[k,\sigma]$ is equal to the number of partial matchings $S$ of $H$ such that $|S|=k$ and $\sgn_H(S) = \sigma$.
    \end{lemmarep}

    \begin{proof}
        It is easy to see that, for $H_i \in \{H_1,H_2\}$,
        for all $k \in [0,n]$ and $\sigmaapx \in \{0,\tilde{1},2\}^{\cw}$,
        $\DPapx_{H_i}[k,\sigmaapx]$ is equal to the number of partial matchings $S$ of $H_i$ such that $|S|=k$ and
        $\sgn_{H_i}(S)$ is compatible with $\sigmaapx$.

        We proceed to show the correctness of Step 2 of the construction.
        Let $S$ be a partial matching of $H$ with $\sigma = \sgn_H(S)$.
        Moreover, let for $i \in \{1,2\}$,
        $S_i = S \cap V(H_i)$ with $\sigma_i = \sgn_{H_i}(S_i)$.

        \begin{claim}\label{claim:induced:cw:union-correctness}
            For all $w \in [\cw]$ it holds that
            \begin{itemize}
                \item $\sigma(w)=0 \iff \sigma_1(w)=\sigma_2(w)=0$,
                \item $\sigma(w)=1 \iff (\sigma_1(w),\sigma_2(w)) \in \{(0,1),(1,0),(1,1)\}$,
                \item $\sigma(w)=2 \iff (\sigma_1(w),\sigma_2(w)) \in \{(0,2),(2,0)\}$.
            \end{itemize}
        \end{claim}

        \begin{claimproof}
            Let $i \in \{1,2\}$.
            Notice that for all $v \in S_i$ it holds that
            $\lab_{H_i}(v) = \lab_H(v)$,
            thus $S \cap \lab^{-1}_H(w) =
            (S_1 \cap \lab^{-1}_{H_1}(w)) \cup (S_2 \cap \lab^{-1}_{H_2}(w))$.
            Furthermore, it holds that $\deg_{H_i[S_i]}(v) = \deg_{H[S]}(v)$.

            For the first item, due to the previous paragraph we have that
            if $\sigma(w) = 0$, that is, $S \cap \lab^{-1}_H(w) = \varnothing$,
            then $S_1 \cap \lab^{-1}_{H_1}(w) = S_2 \cap \lab^{-1}_{H_2}(w) = \varnothing$,
            that is, $\sigma_1(w) = \sigma_2(w) = 0$.
            It can be easily shown that the converse is also true.

            As for the second item, assume that $\sigma(w) = 1$, that is, $S \cap \lab^{-1}_H(w) \neq \varnothing$
            and for all of its vertices $v$, it holds that $\deg_{H[S]}(v)=1$.
            In that case, it follows that for all $v \in (S_1 \cap \lab^{-1}_{H_1}(w)) \cup (S_2 \cap \lab^{-1}_{H_2}(w))$,
            $\deg_{H_i[S_i]}(v)=1$, with at most one $S_i \cap \lab^{-1}_{H_i}(w)$ being empty.
            Conversely, assuming that $(\sigma_1(w),\sigma_2(w)) \in \{(0,1),(1,0),(1,1)\}$, it follows that
            for every vertex $v$ of the non-empty set $S \cap \lab^{-1}_H(w)$ it holds that $\deg_{H[S]}(v) = 1$.

            Lastly, assume that $\sigmaapx(w)=2$,
            which implies that $S \cap \lab^{-1}(w) = \{v\}$ and $\deg_{H[S]}(v)=0$.
            Assume that $v \in S_{i_1}$ and let $S_{i_2}$ denote the other set among $\{S_1,S_2\}$.
            Then, it follows that $S_{i_1} \cap \lab^{-1}_{H_{i_1}}(w) = \{v\}$ with $\deg_{H_{i_1}[S_{i_1}]}(v)=0$
            and $S_{i_2} \cap \lab^{-1}_{H_{i_2}}(w) = \varnothing$,
            implying that $\sigma_{i_1}(w) = 2$ and $\sigma_{i_2}(w) = 0$.
            It can be easily shown that the converse is also true.
        \end{claimproof}

        Recall that $\sigma \in \{0,1,2\}^{\cw}$ is compatible with $\sigmaapx \in \{0,\tilde{1},2\}^{\cw}$
        if for all $w \in [\cw]$ we have that
        (i) if $\sigmaapx(w) = 0$, then $\sigma(w) = 0$,
        (ii) if $\sigmaapx(w) = \tilde{1}$, then $\sigma(w) \in \{0,1\}$, and
        (iii) if $\sigmaapx(w) = 2$, then $\sigma(w) = 2$.
        Fix $k \in [0,n]$ and $\sigmaapx \in \{0,\tilde{1},2\}^{\cw}$, with $\sigmaapx \in \mathcal{S}_A$
        for some $A \subseteq [\cw]$.
%--------------- EDO EXO KANEI PATENTA ---------------%
        \iflncs
        The previous claim
        \else
        \cref{claim:induced:cw:union-correctness}
        \fi
        implies that the number of partial matchings of $H$ of size $k$
        and signature compatible with $\sigmaapx$ is equal to
        the number of partial matchings of $H_1$ of size $k_1$ and signature compatible with $\sigmaapx_1$
        times the number of partial matchings of $H_2$ of size $k_2$ and signature compatible with $\sigmaapx_2$,
        summing over all $k_1+k_2=k$ and $\sigmaapx_1,\sigmaapx_2 \in \mathcal{S}_A$ such that for all $w \in [\cw]$
        (i) if $\sigmaapx(w) = 0$, then $\sigmaapx_1(w) = \sigmaapx_2(w) = 0$,
        and (ii) if $\sigmaapx(w) = 2$, then $(\sigmaapx_1(w),\sigmaapx_2(w)) \in \{(0,2),(2,0)\}$.
        Notice that the latter condition imposed on the pseudosignatures $\sigmaapx_1,\sigmaapx_2$ is equivalent to
        demanding that the number $r_1$ of $1$-bits of $\bin_A(\sigmaapx_1)$ plus the number $r_2$ of $1$-bits of $\bin_A(\sigmaapx_2)$
        is equal to the number of $1$-bits of $\bin_A(\sigmaapx)$.

        In that case, for fixed $r_1,r_2 \in [0,\cw-|A|]$, it holds that the coefficient of a
        monomial $x^{\bin_A(\sigmaapx)}$ of $P_{A,H_1,k_1}^{r_1} \cdot P_{A,H_2,k_2}^{r_2}$,
        where $\bin_A(\sigmaapx)$ has exactly $r_1+r_2$ bits set to $1$,
        is equal to the number of partial matchings of $H$ of size $k = k_1+k_2$ and signature compatible
        with $\sigmaapx \in \mathcal{S}_A$
        that occur due to the union of partial matchings of $H_1$ of size $k_1$ and signature compatible with $\sigmaapx_1$
        and partial matchings of $H_2$ of size $k_2$ and signature compatible with $\sigmaapx_2$,
        where $\sigmaapx_1, \sigmaapx_2 \in \mathcal{S}_A$, $\bin_A(\sigmaapx_i)$ has exactly $r_i$ bits set to $1$,
        and $\bin_A(\sigmaapx)$ has exactly $r_1+r_2$ bits set to $1$.
        Since we iterate over all $r_1,r_2 \in [0,\cw-|A|]$ and $k_1,k_2 \in [0,k]$,
        it holds that indeed $\DPapx_H[k,\sigmaapx]$ contains the number of partial matchings of $H$ of size $k$
        and signature that is compatible with $\sigmaapx$, for all $\sigmaapx \in \mathcal{S}_A$.

        As for Step 3, given all the previous discussion,
        it is easy to see that it indeed constructs the table $\DPt_H[\cdot,\cdot]$
        so that $\DPt_H[k,\sigma]$ contains the number of partial matchings of $H$ of size $k$ and signature $\sigma$.
        This completes the proof.
    \end{proof}

    \proofsubparagraph{Wrapping up.}
    We argue about the correctness of the algorithm by a bottom-up induction on the clique-width $\cw$-expression of~$G$.
    We claim that for all $H \in \mathcal{H}$, $k \in [0,k]$, and $\sigma \in \{0,1,2\}^{\cw}$,
    $\DPt_H[k,\sigma]$ is equal to the number of partial matchings in $H$ of size $k$ and signature $\sigma$.
    Let $H = i(v)$ be a singleton graph for some label $i \in [\cw]$.
    There are only two possible partial matchings in $H$, the first being $\{v\}$ and the second being $\varnothing$.
    Notice that then by \cref{eq:DPH-singleton}, the statement holds for singleton graphs.
    For the Join, Relabel, and Union nodes the statement holds due to
    \cref{lemma:induced:cw:join-correctness,lemma:induced:cw:relabel-correctness,lemma:induced:cw:union-correctness}.
    As for the algorithm's running time, notice that for any $\cw$-labeled graph $H \in \mathcal{H}$ arising in~$\psi$,
    the table $\DPt_H[\cdot,\cdot]$ can be filled in time~$\sO(3^{\cw})$.
    As the number of such graphs is~$\bO(|\psi|)$, this implies that our algorithm runs in time $\sO(3^{\cw})$.
    % \todo[inline]{``Finally notice...'' This ``notice'' relies on \cref{claim:induced:cw:well_defined} so it's not just an observation.
    % I think it would be nice for the paper to, at least in appendix, prove that the number of induced matching is \textbf{exactly}
    % the table entry you claim by proving both inequalities, even if it's only a couple of lines.}
    Finally, notice that for all $H \in \mathcal{H}$,
    the number of induced matchings of $H$ of size $\ell$ is equal to the sum of $\DPt_H[2\ell,\sigma]$
    over all $\sigma \in \{0,1\}^{\cw}$.
    This concludes the proof of \cref{thm:induced:cw}.
\end{proof}

\section{Acyclic Matching}\label{sec:acyclic}

In this section we consider the {\AcyclicM} problem.
Chaudhary and Zehavi~\cite{siamdm/ChaudharyZ25} recently gave an $\sO(6^{\tw})$ algorithm for this problem,
where $\tw$ denotes the treewidth of the input graph.
We first notice that, with some slightly more careful analysis of its running time,
we can show that the algorithm in fact runs in time $\sO(5^{\tw})$.
Our main contribution however is to show in \cref{subsec:acyclic:lb} that this is optimal under the pw-SETH,
even for the more restricted parameterization by pathwidth.
Lastly, in \cref{subsec:acyclic:cw} we employ the rank-based approach introduced by
Bodlaender, Cygan, Kratsch, and Nederlof~\cite{iandc/BodlaenderCKN15}
and adapted to the parameterization by clique-width by
Bergougnoux and Kant\'e~\cite{tcs/BergougnouxK19}
to develop a $2^{\bO(\cw)} n^{\bO(1)}$ algorithm for {\AcyclicM} parameterized by the clique-width $\cw$
of the input graph.

\begin{theoremrep}[\appsymb]\label{thm:acyclic:tw_algo}
    Given a graph $G$ along with a nice tree decomposition of $G$ of width $\tw$,
    there is an algorithm that computes the size of the maximum acyclic matching in $G$
    in time $\sO(5^{\tw})$.
    The algorithm is randomized, cannot give false positives,
    and may give false negatives with probability at most $1/3$.
\end{theoremrep}

\begin{proof}
    The algorithm is identical to the one of Chaudhary and Zehavi~\cite{siamdm/ChaudharyZ25},
    with the only difference being that we analyze its running time slightly more carefully.
    In the following we only sketch the main ideas and argue about the claimed running time.

    This is a standard DP algorithm over the nodes of the tree decomposition.
    In order to deal with the acyclicity constraint, the authors employ the Cut \& Count technique~\cite{talg/CyganNPPRW22},
    while fast subset convolution is used to speed up the computation in the Join nodes of the tree decomposition (see~\cite[Chapter~11]{books/CyganFKLMPPS15}).
    Observe that the $\sO(6^{\tw})$ running time claimed in~\cite{siamdm/ChaudharyZ25} is solely due to the Join nodes,
    as for any other type of node $\sO(5^{\tw})$ time suffices to populate all the entries of the corresponding table.

    The key observation is that computing all entries of the table of a Join node can be done in time $\sO(5^{\tw})$.
    Following the same notation, let $\mathcal{B}_x$ denote a bag of a Join node $x$ of the tree decomposition.
    We aim to populate the table $\mathcal{A}_x[a,b,c,d,w,s]$, where $a,c \in [0,n]$, $b \in [0,n-1]$,
    $w \in [0,12n^2]$,
    and $d$ and $s$ are colorings such that $d \colon \mathcal{B}_x \to \{0,1,2\}$ and $s \colon \mathcal{B}_x \to \{0,l,r\}$
    with $d(v) = 0 \iff s(v) = 0$ for all $v \in \mathcal{B}_x$; we call such colorings $d$ and $s$ \emph{compatible}.
    Notice that the table $\mathcal{A}_x$ has $\sO(5^{\tw})$ entries.

    The authors prove that for fixed values of $a,b,c,w,s$,
    and for all colorings $d$ such that $d^{-1}(0) = R \subseteq \mathcal{B}_x$,
    one can compute the entries $\mathcal{A}_x [a,b,c,d,w,s]$ in time $2^{|\mathcal{B}_x \setminus R|} n^{\bO(1)}$.
    Consequently, summing over all subsets $R \subseteq \mathcal{B}_x$ and noticing that
    $\sum_{R \subseteq \mathcal{B}_x} 2^{|\mathcal{B}_x \setminus R|} = 3^{|\mathcal{B}_x|} \le 3^{\tw+1}$
    implies that for fixed values of $a,b,c,w,s$, the total time to compute the entries
    $\mathcal{A}_x [a,b,c,d,w,s]$ for \emph{all} colorings $d$ is $\sO(3^{\tw})$.
    Since $a,c \in [0,n]$, $b \in [0,n-1]$, $w \in [0,12n^2]$, and for every $d$ there are at most $2^{\tw}$
    compatible colorings $s$, it follows that there are $\sO(2^{\tw})$ ways to fix our parameters,
    resulting in a total time of $\sO(6^{\tw})$ to compute all entries of the table $\mathcal{A}_x$.
    Notice however that for a coloring $d$ where $d^{-1}(0) = R \subseteq \mathcal{B}_x$,
    it holds that there are at most $2^{|\mathcal{B}_x \setminus R|}$ compatible colorings $s$.
    Since $\sum_{R \subseteq \mathcal{B}_x} 2^{|\mathcal{B}_x \setminus R|} \cdot 2^{|\mathcal{B}_x \setminus R|} =
    5^{|\mathcal{B}_x|} \le 5^{\tw+1}$
    it holds that for fixed values of $a,b,c,w$,
    the total time to compute the entries of $\mathcal{A}_x [a,b,c,d,w,s]$
    for \emph{all compatible} colorings $d$ and $s$ is $\sO(5^{\tw})$,
    and thus the total time to compute all entries of the table $\mathcal{A}_x$ is $\sO(5^{\tw})$.
\end{proof}

\subsection{Lower Bound}\label{subsec:acyclic:lb}

Here we state the following lower bound for \AcyclicM.

\begin{theoremrep}[\appsymb]\label{thm:acyclic:lb}
    For any constant $\varepsilon > 0$,
    there is no $\sO((5-\varepsilon)^\pw)$ algorithm deciding \AcyclicM,
    where $\pw$ denotes the pathwidth of the input graph,
    unless the pw-SETH is false.
\end{theoremrep}

\begin{proof}
    We present a reduction from the $4$-CSP-$5$ problem of \cref{cor:weird} to \AcyclicM.
    We are given a \textsc{CSP} instance $\psi$ whose variables take values from $[5]$,
    a partition of its variables into two sets $V_1, V_2$,
    a path decomposition $B_1, \ldots, B_t$ of $\psi$ of width $p$
    such that each bag contains $\bO(\log p)$ variables of $V_2$,
    and an injective function $b$ mapping each constraint to a bag that contains its variables.
    We want to construct an {\AcyclicM} instance $(G,\ell)$ with $\pw(G) = p + o(p)$,
    such that if $\psi$ is satisfiable, then $G$ has an acyclic matching of size $\ell$,
    while if $G$ has an acyclic matching of size $\ell$,
    $\psi$ admits a monotone satisfying multi-assignment which is consistent for $V_2$.
    We assume that the variables of $\psi$ are numbered $X = \{ x_1,\ldots, x_n \}$,
    the constraints are $c_1, \ldots, c_m$,
    and that the given path decomposition is nice.
    We construct $G$ as follows.

    \proofsubparagraph{XOR Gadget.}
    Given two vertices $v_1$ and $v_2$, when we say that we connect them via a \emph{XOR gadget} $\hat{O}(v_1,v_2)$
    we first add the edge $\{v_1,v_2\}$,
    then introduce two vertices $p^{v_1,v_2}_1$ and $p^{v_1,v_2}_2$ called the \emph{private vertices of the XOR gadget},
    and lastly add edges so that $v_1$, $v_2$, and the private vertices of the gadget form a cycle;
    for an illustration see \cref{fig:acyclic:lb_or_gadget}.
    We say that $v_1,v_2,p^{v_1,v_2}_1,p^{v_1,v_2}_2$ are the \emph{vertices of the XOR gadget $\hat{O}(v_1,v_2)$},
    while the \emph{edges of a XOR gadget} are the edges whose endpoints both are among its vertices.

    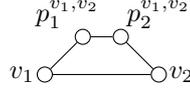
\begin{figure}[htb]
        \centering
        \begin{tikzpicture}[scale=1, transform shape]

        %%%%%%%%%% vertices and text

        \node[vertex] (u) at (1,5) {};
        \node[] () at (0.7,5) {$v_1$};

        \node[vertex] (x1) at (1.5,5.5) {};
        \node[] () at (1.3,5.8) {$p^{v_1,v_2}_1$};
        \node[vertex] (x2) at (2,5.5) {};
        \node[] () at (2.5,5.8) {$p^{v_1,v_2}_2$};

        \node[vertex] (v) at (2.5,5) {};
        \node[] () at (2.8,5) {$v_2$};

        %%%%%%%%% edges / arcs

        \draw[] (u)--(x1)--(x2)--(v)--(u);

        \end{tikzpicture}
        \caption{XOR gadget $\hat{O}(v_1,v_2)$.}
        \label{fig:acyclic:lb_or_gadget}
    \end{figure}

    \proofsubparagraph{Root vertex.}
    We start our construction by introducing a vertex $r$, which we refer to as the \emph{root vertex}.
    Furthermore, we attach a leaf $r'$ to $r$.

    \proofsubparagraph{Block and Variable Gadgets.}
    For every variable $x_i$ and every bag with $x_i \in B_j$, where $j \in [j_1,j_2]$,
    construct a \emph{block gadget} $\hat{B}_{i,j}$.
    Here we consider two cases, depending on whether $x_i \in V_1$ or $x_i \in V_2$.

    If $x_i \in V_1$, then we construct the block gadget as depicted in \cref{fig:acyclic:lb_block_gadget}.
    In order to do so, we introduce vertices $a, a', \chi_1, \chi_2, y_1, y_2, b_1, b_2$.
    Then, we attach leaves $l_1$ and $l_2$ to $\chi_1$ and $\chi_2$ respectively.
    Next, we add edges from $a$ to $\chi_1$ and $y_1$, from $a'$ to $\chi_2$ and $y_2$,
    from the root vertex $r$ to $\chi_1$ and $\chi_2$, as well as the edge $\{b_1,b_2\}$.
    Finally, we add the XOR gadgets $\hat{O}(a,b_1)$, $\hat{O}(a',b_2)$, $\hat{O}(\chi_1,\chi_2)$,
    and $\hat{O}(y_1,y_2)$.
    Next, for $j \in [j_1,j_2-1]$, we serially connect the block gadgets $\hat{B}_{i,j}$ and $\hat{B}_{i,j+1}$ so that the vertex $a'$ of $\hat{B}_{i,j}$ is the vertex $a$ of $\hat{B}_{i,j+1}$,
    thus resulting in the ``path'' $\hat{P}_i$ consisting of such serially connected block gadgets, called a \emph{variable gadget}.
    For an illustration see \cref{fig:acyclic:lb_serially_connected}.
    Intuitively, for any optimal matching $M$,
    it will hold that either $a, a' \notin V_M$ or $a, a' \in V_M$.
    In the latter case, it will hold that $|V_M \cap \{\chi_1,\chi_2\}| = |V_M \cap \{y_1,y_2\}| = 1$,
    resulting in $4$ choices for a total of $5$ possible choices per path gadget;
    we map each choice to an assignment where $x_i$ receives a value in $[5]$.

    \begin{figure}[ht]
        \centering
          \begin{subfigure}[b]{0.4\linewidth}
          \centering
            \begin{tikzpicture}[scale=0.75, transform shape]

            %%%%%%%%%% vertices and text

            \node[vertex] (a) at (0,5) {};
            \node[] () at (0,5.3) {$a$};

            \node[gray_vertex] (x1) at (1.5,6) {};
            \node[] () at (1.5,6.3) {$\chi_1$};

            \node[gray_vertex] (x2) at (4.5,6) {};
            \node[] () at (4.5,6.3) {$\chi_2$};

            \node[vertex] (y1) at (1.5,4) {};
            \node[] () at (1.5,4.3) {$y_1$};

            \node[vertex] (y2) at (4.5,4) {};
            \node[] () at (4.5,4.3) {$y_2$};

            \node[vertex] (a') at (6,5) {};
            \node[] () at (6,5.3) {$a'$};

            \node[vertex] (b1) at (2.5,2.5) {};
            \node[] () at (2.5,2.8) {$b_1$};
            \node[vertex] (b2) at (3.5,2.5) {};
            \node[] () at (3.5,2.8) {$b_2$};

            \node[gray_vertex] (r) at (3,7.5) {};
            \node[] () at (3,7.8) {$r$};

            %%%%%%%%% edges / arcs

            \draw[] (r)--(x1);
            \draw[] (r)--(x2);
            \draw[] (a)--(x1);
            \draw[] (x2)--(a');
            \draw[dashed] (x1)--(x2);
            \draw[] (a)--(y1);
            \draw[] (y2)--(a');
            \draw[dashed] (y1)--(y2);
            \draw[] (b1)--(b2);
            \draw[dashed] (a) edge [bend right] (b1);
            \draw[dashed] (a') edge [bend left] (b2);

            \end{tikzpicture}
            \caption{Block gadget.}
            \label{fig:acyclic:lb_block_gadget}
          \end{subfigure}
        \begin{subfigure}[b]{0.4\linewidth}
        \centering
            \begin{tikzpicture}[scale=0.5, transform shape]

            \node[vertex] (a) at (0,5) {};

            \node[gray_vertex] (x1) at (1.5,6) {};

            \node[gray_vertex] (x2) at (4.5,6) {};

            \node[vertex] (y1) at (1.5,4) {};

            \node[vertex] (y2) at (4.5,4) {};

            \node[vertex] (a') at (6,5) {};

            \node[vertex] (b1) at (2.5,2.5) {};
            \node[vertex] (b2) at (3.5,2.5) {};

            \begin{scope}[shift={(6,0)}]

                \node[gray_vertex] (x1') at (1.5,6) {};

                \node[gray_vertex] (x2') at (4.5,6) {};

                \node[vertex] (y1') at (1.5,4) {};

                \node[vertex] (y2') at (4.5,4) {};

                \node[vertex] (b1') at (2.5,2.5) {};
                \node[vertex] (b2') at (3.5,2.5) {};

                \node[vertex] (a'') at (6,5) {};

            \end{scope}

            \node[gray_vertex] (r) at (6,8.5) {};

            %%%%%%%%% edges / arcs

            \draw[] (r) edge [bend right] (x1);
            \draw[] (r) edge [bend right] (x2);
            \draw[] (r) edge [bend left] (x1');
            \draw[] (r) edge [bend left] (x2');
            \draw[] (a)--(x1);
            \draw[] (x2)--(a');
            \draw[dashed] (x1)--(x2);
            \draw[] (a)--(y1);
            \draw[] (y2)--(a');
            \draw[dashed] (y1)--(y2);
            \draw[] (b1)--(b2);
            \draw[dashed] (a) edge [bend right] (b1);
            \draw[dashed] (a') edge [bend left] (b2);
            \draw[] (a')--(x1');
            \draw[] (x2')--(a'');
            \draw[dashed] (x1')--(x2');
            \draw[] (a')--(y1');
            \draw[] (y2')--(a'');
            \draw[dashed] (y1')--(y2');
            \draw[] (b1')--(b2');
            \draw[dashed] (a') edge [bend right] (b1');
            \draw[dashed] (a'') edge [bend left] (b2');

            \end{tikzpicture}
            \caption{Serially connected block gadgets.}
            \label{fig:acyclic:lb_serially_connected}
          \end{subfigure}
        \caption{Case where $x_i \in V_1$. Gray vertices have a leaf attached, dashed edges denote XOR gadgets.}
        \label{fig:acyclic:lb}
        \end{figure}
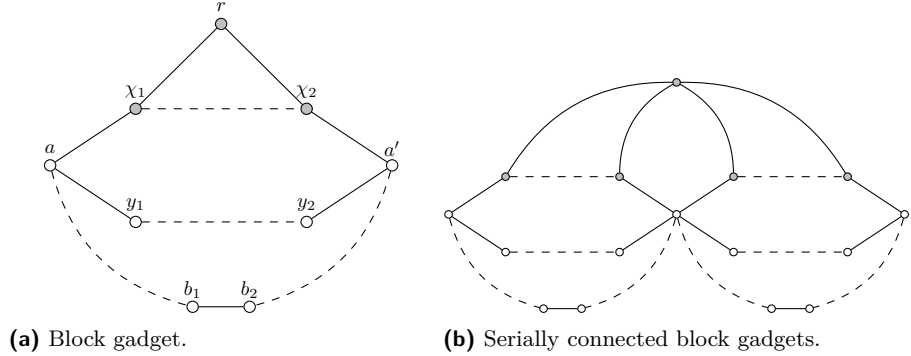

        As for the case where $x_i \in V_2$, then the block gadget $\hat{B}_{i,j}$
        is constructed as follows.
        We introduce $6$ vertices $u^{i,j}, u^{i,j}_1, u^{i,j}_2, u^{i,j}_3, u^{i,j}_4, u^{i,j}_5$,
        we add an edge from $u^{i,j}$ to all vertices in $\setdef{u^{i,j}_k}{k \in [5]}$,
        and for all $1 \le k_1 < k_2 \le 5$ we add a XOR gadget
        $\hat{O}(u^{i,j}_{k_1},u^{i,j}_{k_2})$.
        Next, for $j \in [j_1,j_2-1]$, for all \emph{distinct} $k_1,k_2 \in [5]$,
        we add a XOR gadget $\hat{O}(u^{i,j}_{k_1},u^{i,j+1}_{k_2})$,
        and we call the \emph{variable gadget} $\hat{P}_i$ this sequence of serially connected block gadgets.

        \proofsubparagraph{Constraint Gadget.}
        This gadget is responsible for determining constraint satisfaction,
        based on the choices made in the rest of the graph.
        Let $c$ be a constraint of $\psi$, where $b(c) = j$ and
        $c$ involves variables $x_{i_1},x_{i_2},x_{i_3}, x_{i_4} \in B_j$.
        Consider the set $\mathcal{S}_c$ of the at most $5^4$ satisfying assignments of $c$.
        We construct the \emph{constraint gadget} $\hat{C}_c$ as follows.
        \begin{itemize}
            \item For each $\sigma \in \mathcal{S}_c$ construct a vertex $v_{c,\sigma}$.

            \item For distinct $\sigma_1,\sigma_2 \in \mathcal{S}_c$,
            introduce a XOR gadget $\hat{O}(v_{c,\sigma_1},v_{c,\sigma_2})$.

            \item Introduce a vertex $v_c$ and add edges between $v_c$ and all vertices $\setdef{v_{c,\sigma}}{\sigma \in \mathcal{S}_c}$.

            \item For each $\sigma \in \mathcal{S}_c$ and $\alpha \in [4]$,
            consider the following cases:
            \begin{itemize}
                \item if $x_{i_\alpha} \in V_1$ and $\sigma(x_{i_\alpha}) = 1$,
                then add a XOR gadget between $v_{c,\sigma}$ and vertices $\{b_1, b_2, \chi_2, y_2\}$ of $\hat{B}_{i,j}$,

                \item if $x_{i_\alpha} \in V_1$ and $\sigma(x_{i_\alpha}) = 2$,
                then add a XOR gadget between $v_{c,\sigma}$ and vertices $\{b_1, b_2, \chi_1, y_2\}$ of $\hat{B}_{i,j}$,

                \item if $x_{i_\alpha} \in V_1$ and $\sigma(x_{i_\alpha}) = 3$,
                then add a XOR gadget between $v_{c,\sigma}$ and vertices $\{a, a',y_1, y_2\}$ of $\hat{B}_{i,j}$,

                \item if $x_{i_\alpha} \in V_1$ and $\sigma(x_{i_\alpha}) = 4$,
                then add a XOR gadget between $v_{c,\sigma}$ and vertices $\{b_1, b_2, \chi_2, y_1\}$ of $\hat{B}_{i,j}$,

                \item if $x_{i_\alpha} \in V_1$ and $\sigma(x_{i_\alpha}) = 5$,
                then add a XOR gadget between $v_{c,\sigma}$ and vertices $\{b_1, b_2, \chi_1, y_1\}$ of $\hat{B}_{i,j}$,

                \item if $x_{i_\alpha} \in V_2$,
                then add a XOR gadget between $v_{c,\sigma}$ and all vertices in $\setdef{u^{i,j}_k}{k \in [5] \setminus \{ \sigma(x_{i_\alpha}) \}}$.
            \end{itemize}
        \end{itemize}

        This completes the construction of $G$.
        We set $L$ to be the number of XOR gadgets introduced in $G$,
        and we set $\ell = 1 + L + 2L_1 + L_2 + m$,
        where $L_i$ denotes the number of block gadgets constructed due to variables belonging to $V_i$,
        and $m$ is the number of constraints of $\psi$.
        In that case, $(G,\ell)$ is the constructed instance of \AcyclicM.

        \begin{lemma}\label{lem:acyclic:lb:csp->acyclic}
            If $\psi$ is satisfiable,
            then $G$ has an acyclic matching of size $\ell$.
        \end{lemma}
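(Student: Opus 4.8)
The plan is to start from a satisfying assignment $\sigma \colon X \to [5]$ of $\psi$ and build an acyclic matching $M$ of $G$ of size exactly $\ell$ that records $\sigma$ in every block gadget. I would take $M$ to consist of four groups of edges: (i) the edge $\{r,r'\}$, contributing the additive $1$; (ii) for every XOR gadget $\hat{O}(v_1,v_2)$ in $G$, exactly one edge — its private edge $\{p^{v_1,v_2}_1,p^{v_1,v_2}_2\}$ in general, but near a free end of a $V_1$-variable gadget, when the extreme vertex ($a$ of the first block, $a'$ of the last block) is not matched inside a block, the edge matching that vertex to a private vertex of an incident XOR gadget — contributing the term $L$; (iii) inside each block gadget, edges encoding $\sigma(x_i)$: for $x_i \in V_2$ the single edge $\{u^{i,j},u^{i,j}_{\sigma(x_i)}\}$ (contributing $L_2$), and for $x_i \in V_1$ two edges per block gadget (contributing $2L_1$); (iv) for every constraint $c$, the edge $\{v_c,v_{c,\sigma_c}\}$, where $\sigma_c \in \mathcal{S}_c$ is the restriction of $\sigma$ to the variables of $c$ (well defined since $\sigma$ satisfies $c$), contributing $m$. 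Summing the four groups gives $|M| = 1 + L + 2L_1 + L_2 + m = \ell$.

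For the $V_1$ block gadgets I would fix a dictionary between $\sigma(x_i)$ and the set of \emph{saturated} vertices of $\hat{B}_{i,j}$: value $3$ leaves $a$ and $a'$ unsaturated, so the block contributes $\{b_1,b_2\}$ together with one edge of the form $\{\chi_\bullet,l_\bullet\}$; each of the values $1,2,4,5$ keeps $a$ and $a'$ saturated, leaves $b_1,b_2$ unsaturated, and saturates exactly one of $\chi_1,\chi_2$ and exactly one of $y_1,y_2$ (the four combinations), the two chosen ``proper'' edges then being the natural ones inside $\hat{B}_{i,j}$ — one of type $\{\chi_\bullet,l_\bullet\}$ and one incident to $a$ or to $a'$. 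Since $a'$ of $\hat{B}_{i,j}$ is identified with $a$ of $\hat{B}_{i,j+1}$, I would arrange that each such shared vertex is saturated by exactly one of the two incident blocks, which the dictionary makes consistent along the whole variable gadget. I would then check that $M$ is a matching: the only vertices shared between distinct gadgets are these $a'=a$ identifications inside a variable gadget, handled as above, and the block-gadget vertices touched by a constraint gadget, which cause no conflict because $b$ is injective, so no bag index is reused by two constraints.

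It remains to verify that $G[V_M]$ is a forest, and this is where I expect essentially all the work to lie. The invariant driving everything is that \emph{for every XOR gadget, at most one of its two terminals $v_1,v_2$ lies in $V_M$}; I would argue this for the ``variable'' XOR gadgets $\hat{O}(\chi_1,\chi_2)$, $\hat{O}(y_1,y_2)$, $\hat{O}(a,b_1)$, $\hat{O}(a',b_2)$ directly from the value dictionary, and for the XOR gadgets inside a constraint gadget, as well as those joining $v_{c,\sigma}$ to block vertices, from the facts that $v_{c,\sigma_c}$ is the unique saturated vertex among the $v_{c,\sigma}$ and that the block vertices joined to $v_{c,\sigma_c}$ are exactly the ones the chosen value leaves unsaturated. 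Given this invariant, adding the private edge of an XOR gadget never closes its $4$-cycle, and a short structural check then shows that every connected component of $G[V_M]$ is a tree: each component consists of a few ``hub'' vertices (certain $a_{i,j}$, certain $u^{i,j}$, certain $v_c$) with a bounded number of pendant paths attached, running through private vertices or leaves and always terminating. The hard part will be exactly this bookkeeping for the $V_1$ block gadgets — making the five-value dictionary consistent along the path decomposition, compatible with the serial $a'=a$ identifications, and strong enough to force the ``at most one terminal saturated'' property for every incident XOR gadget, including the ones contributed by constraint gadgets.
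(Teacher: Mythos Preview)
Your construction is essentially the paper's: $\{r,r'\}$, the private edge of every XOR gadget, $\{v_c,v_{c,\sigma_c}\}$ for each constraint, $\{u^{i,j},u^{i,j}_{\sigma(x_i)}\}$ for each $V_2$-block, and two edges per $V_1$-block encoding the value. Two remarks.

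First, the endpoint workaround is unnecessary. The paper keeps every XOR gadget's contribution as the private edge $\{p_1,p_2\}$ and simply lets the extreme $a$ (for values $4,5$) or the extreme $a'$ (for values $1,2$) stay \emph{out} of $V_M$; for value $3$ neither endpoint is in $V_M$ anyway. This is harmless because the constraint-gadget XOR connections attached to the chosen $v_{c,\sigma_c}$ for value $k$ never touch that missing endpoint, so the ``at most one terminal saturated'' invariant still holds. Your switch to an edge $\{a',p_1\}$ at the free ends is extra bookkeeping the paper avoids entirely.

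Second, and this is a real gap, your acyclicity sketch omits the root $r$. Since $r\in V_M$ and $r$ is adjacent to one of $\chi_1,\chi_2$ of \emph{every} $V_1$-block, the component of $G[V_M]$ containing $r$ is not ``a few hub vertices with pendant paths'' --- it spans all $V_1$-block gadgets of all variables. The danger is a cycle of the shape $r - \chi_\bullet^{(j)} - a^{(j)} - \cdots - a'^{(j)} - \chi_\bullet^{(j+1)} - r$ exploiting the identification $a'^{(j)}=a^{(j+1)}$. The paper kills this by checking that, for each of the five values, $a$ and $a'$ are disconnected in $G[V_M]-r$ \emph{within their own block}: the only in-$V_M$ neighbors of $a'$ internal to block $j$ are private vertices of $\hat{O}(a',b_2)$, which dead-end since $b_2\notin V_M$ (and $\chi_2,y_2$ are not both present). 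Hence any path between two blocks of the same variable must pass through $r$, so removing $r$ shatters the big component into block-sized trees, forcing the $r$-component itself to be a tree. Your hub-plus-pendants description does not capture this and, as written, does not exclude such cycles through $r$.
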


        \begin{nestedproof}
            Let $\sigma \colon X \to [5]$ be a satisfying assignment for $\psi$, that is,
            $\sigma$ satisfies all constraints $c_1, \ldots, c_m$.
            We will construct a matching $M \subseteq E(G)$ of size $|M| \geq \ell$ such that $G[V_M]$ is a forest.
            \begin{itemize}
                \item First, add to $M$ the edge $\{r,r'\}$.

                \item Then, for every XOR gadget $\hat{O}(v_1,v_2)$ in $G$, include in $M$ the edge between
                its two private vertices $\{p^{v_1,v_2}_1,p^{v_1,v_2}_2\}$.

                \item Next, for every constraint gadget $\hat{C}_c$, since $\sigma$ is a satisfying assignment,
                there exists some vertex $v_{c,\sigma_c}$ with $\sigma_c$ being the restriction of $\sigma$
                to the variables involved in $c$.
                Add in $M$ the edge $\{v_c,v_{c,\sigma_c}\}$.

                \item For each $x_i \in V_2$ and $j \in [t]$ such that $x_i \in B_j$
                we select into $M$ the edge $\{u^{i,j}, u^{i,j}_k\}$, where $\sigma(x_i) = k$.

                \item Finally, for each $x_i \in V_2$ and $j \in [t]$ such that $x_i \in B_j$,
                we include in $M$ the following edges from the block gadget $\hat{B}_{i,j}$:
                \begin{itemize}
                    \item if $\sigma(x_i) = 1$, the edges $\{ y_1, a \}$ and $\{ \chi_1, l_1 \}$,

                    \item if $\sigma(x_i) = 2$, the edges $\{ y_1, a \}$ and $\{ \chi_2, l_2 \}$,

                    \item if $\sigma(x_i) = 3$, the edges $\{ b_1, b_2 \}$ and $\{ \chi_1, l_1 \}$,

                    \item if $\sigma(x_i) = 4$, the edges $\{ y_2, a' \}$ and $\{ \chi_1, l_1 \}$,

                    \item if $\sigma(x_i) = 5$, the edges $\{ y_2, a' \}$ and $\{ \chi_2, l_2 \}$.
                \end{itemize}
            \end{itemize}
            Consequently, $M$ is a matching of size $1 + L + m + L_2 + 2L_1 = \ell$.

            We claim that $G[V_M]$ is a forest.
            Consider any constraint gadget $\hat{C}_c$,
            and let $v_{c,\sigma_c}$ such that $\{v_c, v_{c,\sigma_c}\} \in M$.
            Notice that since $\sigma_c$ is the restriction of $\sigma$ to its involved variables,
            one can easily verify that none of the vertices that $v_{c,\sigma_c}$ has a XOR gadget with belongs to $V_M$.
            Consequently, the vertices of the constraint gadgets belonging to $V_M$ induce acyclic components.

            It remains to argue that the graph induced by $r$, $r'$, and the vertices of the block gadgets in $V_M$ is acyclic.
            Notice that this is indeed the case for the block gadgets corresponding to variables in $V_2$,
            thus in the following we focus on the block gadgets corresponding to variables in $V_1$.
            Let $x_i \in V_1$ and $j \in [t]$ such that $x_i \in B_j$.
            First, notice that the graph induced by the vertices of any single block gadget $\hat{B}_{i,j}$ in $V_M$ along with $\{r,r'\}$
            is acyclic and vertices $a$ and $a'$ are disconnected.
            Furthermore, for $j_1, j_2 \in [t]$ and distinct $x_{i_1},x_{i_2} \in V_1$,
            notice that the block gadgets $\hat{B}_{i_1,j_1}$ and $\hat{B}_{i_2,j_2}$
            are disconnected in $G[V_M] - r$, thus if $G[V_M]$ contains a cycle,
            then this cycle involves only block gadgets corresponding to a single variable $x_i$.
            Finally, it holds that any two sequential block gadgets (as in \cref{fig:acyclic:lb_serially_connected})
            are connected in $G - r$ via their single common vertex which appears as $a$ in the first and $a'$ in the second gadget.
            Consequently, any cycle in $G[V_M]$ involving at least two block gadgets $\hat{B}_{i,j_1}$ and $\hat{B}_{i,j_2}$ with $j_1 < j_2$
            must contain $r$, the vertex $a'$ of $\hat{B}_{i,j_1}$, and the vertex $a$ of $\hat{B}_{i,j_2}$.
            If $j_2 = j_1 + 1$, it is easy to verify that there exists only a single path in $G[V_M]$ that connects $r$ and
            vertex $a'$ of $\hat{B}_{i,j_1}$ (which coincides with the vertex $a$ of $\hat{B}_{i,j_2}$).
            Otherwise, i.e., when $j_2 > j_1+1$, given that the vertices $a$ and $a'$ of any single block gadget are disconnected in $G[V_M] - r$,
            it follows that vertex $a'$ of $\hat{B}_{i,j_1}$ and $a$ of $\hat{B}_{i,j_2}$
            are disconnected in $G[V_M] - r$, thus no cycle exists in $G[V_M]$.
            This completes the proof.
        \end{nestedproof}

        \begin{lemma}
            If $G$ has an acyclic matching of size $\ell$,
            then $\psi$ admits a monotone satisfying multi-assignment which is consistent for $V_2$.
        \end{lemma}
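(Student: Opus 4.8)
The plan is as follows. Given an acyclic matching $M$ of $G$ with $|M| \ge \ell = 1 + L + 2L_1 + L_2 + m$, I would first normalize $M$ and then read a multi-assignment $\sigma \colon X \times [t] \to [5]$ off the edges of $M$ inside the block gadgets, showing that $\sigma$ is satisfying, consistent for $V_2$, and monotone for $V_1$. For the normalization, a sequence of size-preserving exchanges lets us assume: for every XOR gadget $\hat{O}(v_1,v_2)$, $M$ uses exactly the private edge $\{p^{v_1,v_2}_1, p^{v_1,v_2}_2\}$ and no other edge of the gadget (rerouting any other choice onto the private edge only deletes vertices from $V_M$ and hence cannot create a cycle), which in particular means that in the normalized matching at most one of $v_1,v_2$ belongs to $V_M$, since otherwise $G[V_M]$ would contain the $4$-cycle of the gadget; $\{r,r'\} \in M$; $M$ meets each constraint gadget $\hat{C}_c$ in a single edge $\{v_c,v_{c,\sigma_c}\}$; $M$ meets each $V_2$-block $\hat{B}_{i,j}$ in a single edge $\{u^{i,j},u^{i,j}_k\}$; and $M$ meets each $V_1$-block $\hat{B}_{i,j}$ in exactly the two edges of one of the five ``states'' listed in \cref{lem:acyclic:lb:csp->acyclic}.

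Once $M$ is normalized, every edge of $G$ lies in exactly one ``slot'': the edge $\{r,r'\}$, one per XOR gadget, one per constraint gadget, one per $V_2$-block, and at most two per $V_1$-block; hence $|M| \le 1 + L + m + L_2 + 2L_1 = \ell$, so $|M| = \ell$ and every slot is tight. I then define $\sigma(x_i,j)$ to be $k$ when $\{u^{i,j},u^{i,j}_k\} \in M$ for $x_i \in V_2$, and to be the index in $[5]$ of the state realized inside $\hat{B}_{i,j}$ for $x_i \in V_1$, exactly as in \cref{lem:acyclic:lb:csp->acyclic}. Consistency for $V_2$ is then immediate: if $\{u^{i,j},u^{i,j}_k\} \in M$ then $u^{i,j}_k \in V_M$, so for every $k' \neq k$ the XOR gadget $\hat{O}(u^{i,j}_k, u^{i,j+1}_{k'})$ forces $u^{i,j+1}_{k'} \notin V_M$, and since exactly one edge $\{u^{i,j+1}, u^{i,j+1}_{k''}\}$ lies in $M$ we must have $k'' = k$; transitivity along the bags containing $x_i$ finishes this point. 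For satisfaction, let $c$ be a constraint with $b(c)=j$ and $\{v_c, v_{c,\sigma_c}\} \in M$; then for each of the four variables $x_{i_\alpha}$ of $c$, the XOR gadgets attached to $v_{c,\sigma_c}$ force a prescribed set of vertices of $\hat{B}_{i_\alpha, j}$ out of $V_M$, and matching this against the five states (resp.\ against the $u$-edges) pins the state, i.e.\ value, realized in $\hat{B}_{i_\alpha,j}$ to be exactly $\sigma_c(x_{i_\alpha})$. Thus $\sigma_c$ is the restriction of $\sigma(\cdot,j)$ to the variables of $c$, and since $\sigma_c \in \mathcal{S}_c$ the constraint is satisfied.

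The hard part is monotonicity for $V_1$, for which it suffices to show $\sigma(x_i,j) \le \sigma(x_i,j+1)$ for consecutive bags containing $x_i \in V_1$; the general case follows by transitivity. Let $v$ denote the vertex shared by $\hat{B}_{i,j}$ and $\hat{B}_{i,j+1}$ (the copy of $a'$ in the former, of $a$ in the latter), and proceed by a case analysis on $(s,s') := (\sigma(x_i,j),\sigma(x_i,j+1))$ ruling out each of the ten pairs with $s>s'$ by one of three mechanisms. First, since states $4$ and $5$ of $\hat{B}_{i,j}$ and states $1$ and $2$ of $\hat{B}_{i,j+1}$ all cover $v$ by a matching edge, $(s,s') \in \{4,5\}\times\{1,2\}$ is impossible. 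Second, state $3$ of a block puts $b_1,b_2$ in $V_M$, so if additionally $v \in V_M$ (which is forced whenever the other block is in a state covering $v$), then the private edge of $\hat{O}(a,b_1)$ or of $\hat{O}(a',b_2)$, together with $v$ and the relevant $b$-vertex, closes a $4$-cycle in $G[V_M]$; this eliminates $(s,s') \in \{4,5\}\times\{3\}$ and $(s,s') \in \{3\}\times\{1,2\}$. Third, since $\{r,r'\}\in M$ we have $r \in V_M$, and $r$ is adjacent to $\chi_1,\chi_2$ of every $V_1$-block; in the two remaining non-monotone pairs $(2,1)$ and $(5,4)$ the vertex $v$ lies in $V_M$ and is adjacent in $G[V_M]$ to the copy of $\chi_2$ in $\hat{B}_{i,j}$ and to the copy of $\chi_1$ in $\hat{B}_{i,j+1}$, producing the forbidden $4$-cycle through $r$ on these two $\chi$-vertices. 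As all ten non-monotone pairs are excluded, $\sigma$ is monotone for $V_1$, which completes the proof. I expect this monotonicity analysis to be the main obstacle, since it must balance the matching constraint, the $4$-cycles concealed inside the XOR gadgets, and the acyclicity forced by the single root $r$ across a whole variable gadget; the second most delicate point is making the normalization fully rigorous, in particular the reduction to $\{r,r'\}\in M$ and the bound of two matching edges per $V_1$-block.
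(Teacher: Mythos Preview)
Your proposal is correct and follows essentially the same approach as the paper: normalize the matching so that every XOR gadget contributes exactly its private edge (hence at most one endpoint survives in $V_M$), push edges onto leaves via a local exchange rule, use a counting argument to force exactly one edge per constraint gadget, one per $V_2$-block, and two per $V_1$-block, then read off the multi-assignment and verify consistency, satisfaction, and monotonicity. Your three ``mechanisms'' for monotonicity (double-matching of the shared vertex $v$, violation of an XOR pair $\{a,b_1\}$ or $\{a',b_2\}$, and a $4$-cycle through $r$ on two $\chi$-vertices) are exactly the arguments the paper uses; you merely organize them as a case analysis over the ten non-monotone pairs $(s,s')$, whereas the paper phrases the same facts as forward implications (state $\ge 4$ forces next state $\ge 4$, state $5$ forces next state $5$, state $3$ forces next state $\ge 3$, state $2$ forces next state $\neq 1$).

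Two minor points worth tightening. First, in your normalization you assert that each $V_1$-block realizes one of the five states of \cref{lem:acyclic:lb:csp->acyclic}; strictly speaking there are six configurations (two choices for the $\chi$-edge times three for the other edge), and the paper simply declares $\sigma(x_i,j)=3$ whenever $\{b_1,b_2\}\in M$ regardless of which $\chi$-edge is present --- this does not affect any of your downstream arguments. Second, your mechanism~2 phrases the contradiction as ``closing a $4$-cycle'' through the private vertices of $\hat O(a,b_1)$ or $\hat O(a',b_2)$; this is valid, but once you have normalized the XOR gadgets the simpler observation is that $|\{a,b_1\}\cap V_M|\le 1$ (resp.\ $|\{a',b_2\}\cap V_M|\le 1$) is violated directly.
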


        \begin{nestedproof}
            We start with the following claim.

            \begin{claim}\label{claim:acyclic:lb:acyclic->csp:maximum_matching}
                There exists an acyclic matching $M_{\max}$ of $G$ of maximum size such that
                for any XOR gadget $\hat{O}(v_1,v_2)$ of $G$ it holds that $|\{v_1,v_2\} \cap V_{M_{\max}}| \le 1$
                and $\{p^{v_1,v_2}_1,p^{v_1,v_2}\} \in M_{\max}$.
            \end{claim}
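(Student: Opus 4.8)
The plan is to repair an arbitrary maximum acyclic matching $M$ and argue that the repair can be carried out one XOR gadget at a time. First I would record the only local structure that matters: in an XOR gadget $\hat{O}(v_1,v_2)$ the two private vertices, which I abbreviate $p_1 := p^{v_1,v_2}_1$ and $p_2 := p^{v_1,v_2}_2$, have degree exactly $2$ in $G$, with $N(p_1)=\{v_1,p_2\}$ and $N(p_2)=\{p_1,v_2\}$, and the four vertices $v_1,p_1,p_2,v_2$ together with the edge $\{v_1,v_2\}$ span a $4$-cycle. An immediate consequence is that the second half of the claim is automatic given the first: if $\{p_1,p_2\}\in M$ and both $v_1,v_2\in V_M$, then $G[V_M]$ contains this $4$-cycle, contradicting acyclicity, so $|\{v_1,v_2\}\cap V_M|\le 1$. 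Hence it suffices to transform a maximum acyclic matching, without decreasing its size and without losing acyclicity, into one that contains the private edge $\{p_1,p_2\}$ of \emph{every} XOR gadget.

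Next I would give a single local exchange per gadget. Fix an XOR gadget with $\{p_1,p_2\}\notin M$. Then each of $p_1,p_2$ is either free or matched to its unique non-private neighbour ($v_1$ and $v_2$ respectively), and I would split into cases: (i) $\{v_1,p_1\},\{v_2,p_2\}\in M$ is impossible, since then $G[V_M]$ contains the $4$-cycle; (ii) exactly one private vertex is matched, say (up to symmetry) $\{v_1,p_1\}\in M$ and $p_2$ free: set $M':=(M\setminus\{\{v_1,p_1\}\})\cup\{\{p_1,p_2\}\}$; (iii) both $p_1,p_2$ free: here maximality of $M$ enters, since $M\cup\{\{p_1,p_2\}\}$ is a strictly larger matching and must therefore fail to be acyclic, and by the degree constraints on $p_1,p_2$ together with the edge $\{v_1,v_2\}$ any cycle so created forces $v_1,v_2\in V_M$; then set $M':=(M\setminus\{e\})\cup\{\{p_1,p_2\}\}$, where $e$ is the $M$-edge incident to $v_1$. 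In cases (ii) and (iii) I would check three routine facts: $|M'|=|M|$; $\{p_1,p_2\}\in M'$; and, crucially for the global argument, $V_{M'}\subseteq V_M\cup\{p_1,p_2\}$, i.e. the only vertices that the repair newly places into the matched set are the two private vertices of the current gadget, while $v_1$ (in case (ii)) or one endpoint of $e$, including $v_1$ (in case (iii)), leaves it.

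Acyclicity of $M'$ I would handle uniformly: deleting an edge together with its endpoints from an acyclic matching leaves it acyclic, and re-inserting $p_1,p_2$ cannot create a cycle, because any cycle through $p_1$ must traverse both of its edges $\{p_1,v_1\}$ and $\{p_1,p_2\}$, hence pass through $p_2$ (and symmetrically through $p_1$ for $p_2$), yet in both repairs $v_1$ has been removed from $V_{M'}$, leaving $p_1$ with degree at most $1$ in $G[V_{M'}]$. Finally I would iterate the repair over all XOR gadgets in turn: since each repair only deletes non-private vertices from the matched set (besides adding the current gadget's two private vertices, which belong to no other XOR gadget), it preserves, for every gadget already processed, both the fact that its private edge lies in the matching (that edge is never touched) and the bound $|\{v_1,v_2\}\cap V_M|\le 1$ (which can only decrease), so after finitely many steps we obtain the desired $M_{\max}$. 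I expect case (iii) to be the main obstacle: one must invoke maximality of $M$ at exactly the right point to force $v_1,v_2\in V_M$, and then verify that rerouting through the private edge changes neither the size nor the acyclicity — which is precisely where the $4$-cycle structure of the XOR gadget does the work.
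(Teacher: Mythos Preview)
Your proof is correct and follows the same local-exchange strategy as the paper: repair a maximum acyclic matching one XOR gadget at a time so that it contains the private edge, checking that size and acyclicity are preserved. The paper organises its case analysis by $|\{v_1,v_2\}\cap V_M|$ rather than by which of $p_1,p_2$ is matched, and leaves the iteration implicit, but the two arguments are equivalent; your preliminary observation that the endpoint condition is automatic once $\{p_1,p_2\}\in M$ is a clean simplification.
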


            \begin{claimproof}
                Let $M_{\max}$ be a maximum acyclic matching of $G$, and assume that $G$ contains a XOR gadget $\hat{O}(v_1,v_2)$.
                We show that we can obtain an acyclic matching $M'_{\max}$ of $G$ of the same size
                such that $|\{v_1,v_2\} \cap V_{M'_{\max}}| \le 1$ and $\{p^{v_1,v_2}_1,p^{v_1,v_2}\} \in M'_{\max}$.
                To this end, we consider different cases depending on the vertices of the XOR gadget that belong to $V_{M_{\max}}$.

                If $v_1,v_2 \notin V_{M_{\max}}$, then $\{p^{v_1,v_2}_1,p^{v_1,v_2}_2\} \in M_{\max}$ as
                $M_{\max}$ is of maximum size.

                Now assume that $|\{v_1,v_2\} \cap V_{M_{\max}}| = 1$, and without loss of generality $v_1 \in V_{M_{\max}}$ (the other case is analogous).
                Let $e \in M_{\max}$ be the edge incident with $v_1$.
                If $e \neq \{v_1,p^{v_1,v_2}_1\}$, then it holds that $\{p^{v_1,v_2}_1,p^{v_1,v_2}_2\} \in M_{\max}$ as $M_{\max}$ is of maximum size.
                Otherwise it holds that $p^{v_1,v_2}_2 \notin V_{M_{\max}}$ and the acyclic matching $M'_{\max} = (M_{\max} \setminus \{e\}) \cup \{\{p^{v_1,v_2}_1,p^{v_1,v_2}_2\}\}$
                has the same size.

                Lastly, assume that $v_1,v_2 \in V_{M_{\max}}$.
                Notice that in that case, either $p^{v_1,v_2}_1 \notin V_{M_{\max}}$ or $p^{v_1,v_2}_2 \notin V_{M_{\max}}$, as otherwise $G[V_{M_{\max}}]$
                contains a cycle.
                If $\{v_1,v_2\} \in M_{\max}$, then $M'_{\max} = (M_{\max} \setminus \{\{v_1,v_2\}\}) \cup \{\{p^{v_1,v_2}_1,p^{v_1,v_2}_2\}\}$
                is an acyclic matching of the same size.
                Otherwise, it holds that for either $v_1$ or $v_2$ there exists an edge $e \in M_{\max}$ incident with it
                that does not belong to the XOR gadget $\hat{O}(v_1,v_2)$.
                Assume without loss of generality that $e$ is incident with $v_1$, and let $e' \in M_{\max}$ be the edge incident with $v_2$ belonging to $M_{\max}$.
                Then, the acyclic matching $M'_{\max} = (M_{\max} \setminus \{e'\}) \cup \{\{p^{v_1,v_2}_1,p^{v_1,v_2}_2\}\}$ is of the same size.
            \end{claimproof}

            Armed with the previous claim, we can now prove the existence of an acyclic matching in $G$ with specific properties.

            \begin{claim}\label{claim:acyclic:lb:acyclic->csp:matching_properties}
                There exists an acyclic matching $M$ of maximum size of $G$ such that,
                for any XOR gadget $\hat{O}(v_1,v_2)$ of $G$ it holds that $|\{v_1,v_2\} \cap V_M| \le 1$ and $\{p^{v_1,v_2}_1,p^{v_1,v_2}\} \in M$.
                Furthermore, it holds that
                \begin{itemize}
                    \item $\{r,r'\} \in M$,

                    \item for all constraint gadgets $\hat{C}_c$,
                    there exists $\sigma \in \mathcal{S}_c$ such that $\{v_c, v_{c,\sigma_c}\} \in M$,

                    \item for all block gadgets $\hat{B}_{i,j}$,
                    \begin{itemize}
                        \item if $x_i \in V_1$ then
                        $| \{ \{\chi_1,l_1\}, \{\chi_2,l_2\} \} \cap M| =
                        |\{\{ y_1, a \}, \{ y_2, a' \}, \{ b_1, b_2 \}\} \cap M| = 1$,

                        \item otherwise if $x_i \in V_2$ there exists $k \in [5]$ such that $\{u^{i,j}, u^{i,j}_k\} \in M$.
                    \end{itemize}
                \end{itemize}
            \end{claim}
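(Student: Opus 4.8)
\emph{Proof plan.} I would take $M$ to be an acyclic matching of $G$ of maximum size satisfying the conclusion of \cref{claim:acyclic:lb:acyclic->csp:maximum_matching} --- so $|M|\ge\ell$ by assumption, and the private edge of every \textsc{XOR} gadget lies in $M$ --- and, among all such matchings, one that minimises the number of $M$-edges of the form $\{a,\chi_1\}$ or $\{a',\chi_2\}$ lying inside a block gadget of a variable of $V_1$. The workhorse is the following ``\textsc{NAND}'' fact, used everywhere: for every \textsc{XOR} gadget $\hat{O}(v_1,v_2)$ the two private vertices lie in $V_M$ and together with $v_1,v_2$ span a $4$-cycle, so at most one of $v_1,v_2$ lies in $V_M$; moreover if $v_i\in V_M$, its incident $M$-edge is a ``genuine'' edge that does not belong to the gadget (it is neither $\{v_1,v_2\}$ nor an edge to a private vertex). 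Consequently no $M$-edge joins the two endpoints of an \textsc{XOR} gadget, hence no $M$-edge joins a constraint gadget to a block gadget, two distinct variable gadgets, or two consecutive $V_2$-block gadgets; distinct block gadgets of a $V_1$-variable meet $M$ only through their shared vertex $a'=a$.

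Next I would set up an exact accounting of the non-private edges of $M$: charge such an edge to the \emph{root part} if it meets $r'$, to $\hat{C}_c$ if both ends lie in $\{v_c\}\cup\{v_{c,\sigma}:\sigma\in\mathcal{S}_c\}$, to a $V_2$-block gadget if both ends lie in $\{u^{i,j}\}\cup\{u^{i,j}_k:k\in[5]\}$, and to a $V_1$-block gadget $\hat{B}_{i,j}$ if it meets one of its vertices $\chi_1,\chi_2,y_1,y_2,b_1,b_2,l_1,l_2$. By the \textsc{NAND} fact (and since $v_c$ and $u^{i,j}$ have no neighbours outside their own gadget) every non-private $M$-edge is charged to exactly one part. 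The per-part bounds are: root part $\le 1$ ($r'$ is a leaf, so only $\{r,r'\}$ qualifies); each $\hat{C}_c\le 1$ (the $v_{c,\sigma}$ are pairwise joined by \textsc{XOR} gadgets, so at most one of them is in $V_M$, and $v_c$ is adjacent only to the $v_{c,\sigma}$); each $V_2$-block gadget $\le 1$ (the $u^{i,j}_k$ are pairwise joined by \textsc{XOR} gadgets, so at most one is in $V_M$, which forces the unique possible edge to be $\{u^{i,j},u^{i,j}_k\}$); and each $V_1$-block gadget $\le 2$, by a short case analysis using the \textsc{NAND} facts on the pairs $(\chi_1,\chi_2)$, $(y_1,y_2)$, $(a,b_1)$, $(a',b_2)$: there is at most one $M$-edge meeting $\{\chi_1,\chi_2\}$, at most one meeting $\{y_1,y_2\}$, at most the edge $\{b_1,b_2\}$, and $\{b_1,b_2\}\in M$ forces $a,a'\notin V_M$ and hence forbids the $\{y_1,y_2\}$-edge, so the latter two together contribute at most one. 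Summing gives $|M|-L\le 1+m+L_2+2L_1=\ell-L$; combined with $|M|\ge\ell$ every inequality is tight, whence $|M|=\ell$, the root part is exactly $\{r,r'\}$, each $\hat{C}_c$ contributes one edge $\{v_c,v_{c,\sigma}\}$ (for a unique $\sigma$, which we denote $\sigma_c$), each $V_2$-block gadget contributes one edge $\{u^{i,j},u^{i,j}_k\}$, and each $V_1$-block gadget contributes exactly two edges.

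Finally I would pin down the shape of the two $M$-edges of a $V_1$-block gadget. From the case analysis, the value $2$ is attained only as (one edge meeting $\{\chi_1,\chi_2\}$) together with (exactly one of $\{y_1,a\},\{y_2,a'\},\{b_1,b_2\}$), which already gives $|\{\{y_1,a\},\{y_2,a'\},\{b_1,b_2\}\}\cap M|=1$. For the $\{\chi_1,\chi_2\}$-edge: it is not $\{r,\chi_1\}$ or $\{r,\chi_2\}$ because $r$ is already matched to $r'$, and it is not $\{a,\chi_1\}$ or $\{a',\chi_2\}$ by the minimality of $M$ --- if, say, $\{a',\chi_2\}\in M$, replace it by the (free) edge $\{\chi_2,l_2\}$: this keeps the size, keeps all private edges, preserves acyclicity (we delete $a'$ from $V_M$, which keeps the graph a forest, and add $l_2$ as a pendant at $\chi_2$), preserves the conclusion of \cref{claim:acyclic:lb:acyclic->csp:maximum_matching}, and touches no edge charged to the neighbouring block gadget, so it produces a matching with strictly fewer forbidden edges, a contradiction. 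Hence the $\{\chi_1,\chi_2\}$-edge is $\{\chi_1,l_1\}$ or $\{\chi_2,l_2\}$, giving $|\{\{\chi_1,l_1\},\{\chi_2,l_2\}\}\cap M|=1$, as required. The step I expect to need the most care is precisely this $V_1$-block analysis: the $\le 2$ bound and the exchange $\{a',\chi_2\}\to\{\chi_2,l_2\}$ (and its mirror image) are where one must genuinely reason about acyclicity and about the interaction across the shared $a'=a$ vertices of serially connected block gadgets, rather than relying only on the local \textsc{NAND}/$C_4$ argument; everything else is routine counting once the \textsc{NAND} fact is in place, much as in the proof of \cref{lem:acyclic:lb:csp->acyclic}.
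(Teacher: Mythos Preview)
Your argument is correct and follows the same overall blueprint as the paper: start from a maximum acyclic matching satisfying \cref{claim:acyclic:lb:acyclic->csp:maximum_matching}, normalize it, then do an exact edge-count against $\ell = 1 + L + 2L_1 + L_2 + m$ to force every per-gadget bound to be tight. The per-gadget upper bounds and the conclusion from tightness are argued exactly as in the paper.

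The only genuine difference is the normalization device. The paper uses a single, general local rule $(\dagger)$: whenever $\{u,v\}\in M$ and $u$ has an attached leaf $l\neq v$ (with $v$ not a leaf), replace $\{u,v\}$ by $\{u,l\}$; applied exhaustively this simultaneously forces $\{r,r'\}\in M$ once $r\in V_M$ and, for each $V_1$-block, forces the $\chi$-edge to be $\{\chi_i,l_i\}$ once $\chi_i\in V_M$. You instead obtain $\{r,r'\}\in M$ \emph{a posteriori} from the tight count on the root part, and then deal with the $\chi$-edges by an ad-hoc minimality assumption (fewest edges of the form $\{a,\chi_1\},\{a',\chi_2\}$) plus the exchange $\{a',\chi_2\}\to\{\chi_2,l_2\}$ (and its mirror). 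Both devices work; the paper's rule is a bit cleaner because one sweep handles all leaf-carrying vertices at once and one does not need to verify separately that the exchange stays inside the class of matchings over which one minimizes. Your approach, on the other hand, makes explicit exactly which two ``bad'' $\chi$-edges need to be ruled out and why the swap is safe across the shared $a'=a$ vertex, which is arguably more transparent at that specific step.
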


            \begin{claimproof}
                Let $M_{\max}$ be an acyclic matching of $G$ of maximum size as in
%--------------- EDO EXO KANEI PATENTA ---------------%
                \iflncs
                the previous claim,
                \else
                \cref{claim:acyclic:lb:acyclic->csp:maximum_matching},
                \fi
                where $|M_{\max}| \ge \ell$ since by assumption $G$ has an acyclic matching of size at least $\ell$.
                Consider the following reduction rule.

                \proofsubparagraph{Rule~$(\dagger)$.}
                Let $\{u,v\} \in M_{\mathsf{in}}$, and let $l \neq v$ be a leaf attached to $u$ in $G$, while $v$ is not a leaf.
                Then, replace $M_{\mathsf{in}}$ with $M_{\mathsf{out}} = (M_{\mathsf{in}} \setminus \{\{u,v\}\}) \cup \{\{u,l\}\}$.

                It is easy to see that in Rule~$(\dagger)$ if $M_{\mathsf{in}}$ is an acyclic matching,
                then so is $M_{\mathsf{out}}$, while the size of the two matchings is the same.
                Let $M$ denote the acyclic matching obtained after exhaustively applying Rule~$(\dagger)$ in $M_{\max}$;
                since all vertices of $G$ have at most one attached leaf, such a matching is unique.
                Notice that $|M| \ge \ell$, while for any XOR gadget $\hat{O}(v_1,v_2)$ of $G$ it holds that $|\{v_1,v_2\} \cap V_M| \le 1$
                and $\{p^{v_1,v_2}_1,p^{v_1,v_2}\} \in M$.

                Notice that due to Rule~$(\dagger)$, if $r \in V_M$, then $\{r,r'\} \in M$.
                Furthermore, for any edge in $M \setminus \{\{r,r'\}\}$ that is not due to a XOR gadget
                it holds that either both of its endpoints belong to the vertices of a block gadget,
                or they both belong to the vertices of a constraint gadget;
                that is due to the fact that any connected pair of vertices from both block and constraint gadgets
                or from different block gadgets is connected via a XOR gadget.

                Notice that in the constraint gadget $\hat{C}_c$,
                all vertices $\setdef{v_{c,\sigma}}{\sigma \in \mathcal{S}_c}$ are pairwise connected via XOR gadgets,
                thus at most one of them belongs to $V_M$.
                Along with the previous paragraph, it follows that either $V_M$ contains no vertex from $\hat{C}_c$,
                or there exists $\sigma_c \in \mathcal{S}_c$ such that $\{v_c,v_{c,\sigma_c}\} \in M$.

                Now consider the block gadget $\hat{B}_{i,j}$, where $j \in [t]$ such that $x_i \in B_j$.
                Assume first that $x_i \in V_1$,
                and notice that if $\chi_1 \in V_M$, then $\{\chi_1,l_1\} \in M$ and $\chi_2 \notin V_M$ (similarly for $\chi_2$).
                Furthermore, one can easily verify that $|\{\{ y_1, a \}, \{ y_2, a' \}, \{ b_1, b_2 \}\} \cap M| \le 1$.
                If on the other hand $x_i \in V_2$, then
                since vertices $\setdef{u^{i,j}_k}{k \in [5]}$ are pairwise connected via XOR gadgets,
                it holds that either $V_M$ contains no vertex of $\hat{B}_{i,j}$,
                or there exists $k \in [5]$ such that $\{u^{i,j}, u^{i,j}_k\} \in M$.

                Recall that $|M| \ge \ell = 1 + L + 2L_1 + L_2 + m$,
                where $L$ is the number of XOR gadgets in $G$,
                $L_i$ is the number of block gadgets constructed due to variables belonging to $V_i$,
                and $m$ is the number of constraints of $\psi$.
                Due to the discussion so far and summing over all constraint and block gadgets,
                $M$ contains at most $\ell$ edges.
                Consequently, it follows that the bounds obtained in the previous paragraphs are tight for all constraint and block gadgets,
                which implies that the inclusions are as described in the statement and $\{r,r'\} \in M$.
            \end{claimproof}

            Now let $M$ be an acyclic matching of $G$ as in
%--------------- EDO EXO KANEI PATENTA ---------------%
            \iflncs
            the previous claim.
            \else
            \cref{claim:acyclic:lb:acyclic->csp:matching_properties}.
            \fi
            Consider a multi-assignment $\sigma \colon X \times [t] \to [5]$ over all pairs $x_i \in X$ and $j \in [t]$
            with $x_i \in B_j$.
            In particular, consider two cases.
            If $x_i \in V_2$ and $\{u^{i,j}, u^{i,j}_k\} \in M$, then $\sigma(x_i,j) = k \in [5]$.
            Otherwise (i.e., $x_i \in V_1$),
            if for the edges of $\hat{B}_{i,j}$ it holds that
            \begin{itemize}
                \item $\{y_1,a\},  \{\chi_1,l_1\} \in M$, then $\sigma(x_i,j) = 1$,
                \item $\{y_1,a\},  \{\chi_2,l_2\} \in M$, then $\sigma(x_i,j) = 2$,
                \item $\{b_1,b_2\} \in M$,                then $\sigma(x_i,j) = 3$,
                \item $\{y_2,a'\}, \{\chi_1,l_1\} \in M$, then $\sigma(x_i,j) = 4$,
                \item $\{y_2,a'\}, \{\chi_2,l_2\} \in M$, then $\sigma(x_i,j) = 5$.
            \end{itemize}
            Notice that $\sigma$ is well-defined due to
%--------------- EDO EXO KANEI PATENTA ---------------%
            \iflncs
            the previous claim.
            \else
            \cref{claim:acyclic:lb:acyclic->csp:matching_properties}.
            \fi

            \begin{claim}
                It holds that $\sigma$ is monotone-increasing, as well as consistent for all $x_i \in V_2$.
            \end{claim}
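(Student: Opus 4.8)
The plan is to establish the two assertions --- monotonicity of $\sigma$ and consistency of $\sigma$ on $V_2$ --- separately, both times relying on \cref{claim:acyclic:lb:acyclic->csp:matching_properties}. From that claim I will repeatedly use that $\{r,r'\}\in M$ (so $r\in V_M$); that for every XOR gadget $\hat{O}(v_1,v_2)$ of $G$ we have $\{p^{v_1,v_2}_1,p^{v_1,v_2}_2\}\in M$ and $|\{v_1,v_2\}\cap V_M|\le 1$, whence none of $\{v_1,v_2\}$, $\{v_1,p^{v_1,v_2}_1\}$, $\{v_2,p^{v_1,v_2}_2\}$ lies in $M$ (each would close a cycle in $G[V_M]$ together with the edge $\{p^{v_1,v_2}_1,p^{v_1,v_2}_2\}$); and that inside each block gadget $M$ looks exactly as the claim describes. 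Because the bags containing a fixed variable form a contiguous interval $[j_1,j_2]$ of the path decomposition, in each case it suffices to treat two consecutive bags $j,j+1\in[j_1,j_2]$ and then chain the resulting (in)equalities.

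For consistency on $V_2$, fix $x_i\in V_2$ and $j\in[j_1,j_2-1]$. By the claim there are $k,k'\in[5]$ with $\{u^{i,j},u^{i,j}_k\}\in M$ and $\{u^{i,j+1},u^{i,j+1}_{k'}\}\in M$, so $\sigma(x_i,j)=k$ and $\sigma(x_i,j+1)=k'$. Since $u^{i,j}_k\in V_M$ and $G$ contains the XOR gadget $\hat{O}(u^{i,j}_k,u^{i,j+1}_{k''})$ for every $k''\in[5]\setminus\{k\}$, we get $u^{i,j+1}_{k''}\notin V_M$ for all such $k''$; as $u^{i,j+1}_{k'}\in V_M$, this forces $k'=k$. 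Chaining along $[j_1,j_2]$ shows $\sigma(x_i,\cdot)$ is constant, hence consistent, and in particular monotone, for $x_i$.

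The heart of the proof is monotonicity for $x_i\in V_1$. Fix such an $x_i$ and $j\in[j_1,j_2-1]$, and let $v$ denote the vertex identified as $a'$ of $\hat{B}_{i,j}$ and as $a$ of $\hat{B}_{i,j+1}$. First I would observe that the unique (if any) $M$-edge at $v$ is $\{y_2,a'\}$ of $\hat{B}_{i,j}$ or $\{y_1,a\}$ of $\hat{B}_{i,j+1}$: every edge at $v$ other than these two is either an XOR-gadget edge --- already excluded --- or one of $\{a',\chi_2\}$, $\{a,\chi_1\}$, and e.g.\ $\{a',\chi_2\}\in M$ is impossible since then $\chi_2$ is matched to $a'$ (not to $l_2$), so $\{\chi_1,l_1\}\in M$ by the claim, putting both $\chi_1,\chi_2$ of $\hat{B}_{i,j}$ in $V_M$ and contradicting $\hat{O}(\chi_1,\chi_2)$. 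Hence $v\in V_M$ iff $\sigma(x_i,j)\in\{4,5\}$ or $\sigma(x_i,j+1)\in\{1,2\}$, and these alternatives are mutually exclusive ($v$ lies in at most one $M$-edge). I would add two more observations: if $\sigma(x_i,j)=3$ then $\{b_1,b_2\}\in M$ in $\hat{B}_{i,j}$ and $\hat{O}(a',b_2)$ forces $v=a'\notin V_M$; symmetrically, if $\sigma(x_i,j+1)=3$ then $\hat{O}(a,b_1)$ forces $v=a\notin V_M$. Together these exclude every order-reversing transition except $(\sigma(x_i,j),\sigma(x_i,j+1))\in\{(2,1),(5,4)\}$: from $\sigma(x_i,j)\in\{4,5\}$ we get $v\in V_M$, ruling out $\sigma(x_i,j+1)\in\{1,2\}$ and also $\sigma(x_i,j+1)=3$; from $\sigma(x_i,j)=3$ we get $v\notin V_M$, ruling out $\sigma(x_i,j+1)\in\{1,2\}$. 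For the two surviving pairs, both place $\chi_2$ of $\hat{B}_{i,j}$ and $\chi_1$ of $\hat{B}_{i,j+1}$ in $V_M$ (values $2,5$ use $\{\chi_2,l_2\}$ and values $1,4$ use $\{\chi_1,l_1\}$, by the definition of $\sigma$) and also $v\in V_M$; then $r$, $\chi_2$ of $\hat{B}_{i,j}$, $v$, and $\chi_1$ of $\hat{B}_{i,j+1}$ span a $4$-cycle in $G[V_M]$ via the edges $\{r,\chi_2\}$, $\{\chi_2,v\}$, $\{v,\chi_1\}$, $\{\chi_1,r\}$ (recall $r$ is adjacent to all $\chi$-vertices, $\chi_2\sim a'$, and $\chi_1\sim a$), contradicting that $G[V_M]$ is a forest. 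Thus $\sigma(x_i,j)\le\sigma(x_i,j+1)$, and chaining gives monotonicity for $x_i$.

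I expect the $V_1$ case to be the main obstacle. The delicate point is that value $3$ does not fix the choice between $\{\chi_1,l_1\}$ and $\{\chi_2,l_2\}$, so transitions into or out of value $3$ cannot be excluded via the root $4$-cycle and must be dispatched through the XOR gadgets $\hat{O}(a,b_1)$ and $\hat{O}(a',b_2)$ instead; assembling the observations above into a clean, exhaustive case analysis over the ordered value-pairs is where the care is required.
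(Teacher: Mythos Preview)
Your proposal is correct and uses essentially the same ingredients as the paper's proof: the $V_2$ case via the inter-block XOR gadgets, and for $V_1$ the analysis of the shared vertex $v=a'=a$, the XOR gadgets $\hat{O}(a',b_2)$ and $\hat{O}(a,b_1)$ to handle value $3$, and the $4$-cycle $r\text{--}\chi_2(j)\text{--}v\text{--}\chi_1(j{+}1)\text{--}r$ to handle the $\chi$-constraints. The only difference is organizational: the paper proceeds value-by-value showing forward implications ($\sigma(x_i,j)=5\Rightarrow\sigma(x_i,j{+}1)=5$, then $4\Rightarrow\{4,5\}$, then $3\Rightarrow\{3,4,5\}$, then $2\Rightarrow{\neq}1$), whereas you first isolate the structural facts about $v$ and then dispatch the order-reversing pairs, treating $(2,1)$ and $(5,4)$ uniformly via the same $4$-cycle.
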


            \begin{claimproof}
                Regarding the consistency for $x_i \in V_2$,
                notice that it follows due to the XOR gadgets between vertices belonging to sequential
                block gadgets $\hat{B}_{i,j}$ and $\hat{B}_{i,j+1}$.
                In the following we argue about the monotonicity.
                Assume that $x_i \in V_1$ and $j \in [j_1,j_2-1]$,
                where $B_{j_1}$ and $B_{j_2}$ denote the first and last bag of the decomposition containing $x_i$ respectively.

                We first argue that if for $\hat{B}_{i,j}$ it holds that
                $\{y_2 , a'\} \in M$,
                then for $\hat{B}_{i,j+1}$ it holds that
                $\{y_2 , a'\} \in M$.
                Indeed, since vertices $a'$ and $a$ of the first and the latter block gadget coincide,
                it follows that for $\hat{B}_{i,j+1}$ we have $\{y_1, a\}, \{b_1,b_2\} \notin M$,
                and since $|\{\{ y_1, a \}, \{ y_2, a' \}, \{ b_1, b_2 \}\} \cap M| = 1$ for any block gadget,
                the statement follows.

                Now we argue that if for $\hat{B}_{i,j}$ it holds that
                $\{y_2 , a'\}, \{\chi_2,l_2\} \in M$,
                then for $\hat{B}_{i,j+1}$ it holds that
                $\{y_2 , a'\}, \{\chi_2,l_2\} \in M$.
                In this case, due to the previous paragraph we have that in $\hat{B}_{i,j+1}$,
                $\{y_2 , a'\} \in M$ as well as $a \in V_M$,
                implying that $\chi_1 \notin V_M$ (as otherwise $G[V_M]$ has a cycle),
                which in turn implies that $\{\chi_2 , l_2\} \in M$.
                Consequently, it holds that if $\sigma (x_i,j) = 5$ then $\sigma (x_i,j+1) = 5$,
                while in conjunction with the previous paragraph we get that if $\sigma (x_i,j) = 4$
                then $\sigma (x_i,j+1) \in \{4,5\}$.

                Next, it is easy to see that if in $\hat{B}_{i,j}$ we have that $\{b_1 ,b_2 \} \in M$,
                then in $\hat{B}_{i,j+1}$ we have that $\{y_1 , a \} \notin M$;
                indeed, in $\hat{B}_{i,j}$ since $b_2 \in V_M$ it holds that $a' \notin V_M$,
                however this implies that for $\hat{B}_{i,j+1}$  it holds that $a \notin V_M$.
                Consequently, if $\sigma (x_i,j) = 3$ then $\sigma (x_i,j+1) \in \{3,4,5\}$.

                Finally assume that in $\hat{B}_{i,j}$
                we have that $\{y_1 ,a\} , \{\chi_2 , l_2\} \in M$.
                In that case, it holds that in $\hat{B}_{i,j+1}$ we have $\chi_1 \notin V_M$
                (otherwise $G[V_M]$ contains a cycle),
                implying that $\{\chi_2, l_2\} \in M$.
                Consequently, we have that if $\sigma (x_i,j) = 2$ then
                $\sigma (x_i,j+1) \neq 1$.
            \end{claimproof}

            It remains to argue that $\sigma$ is a satisfying multi-assignment for $\psi$.
            Consider a constraint $c$ with $b(c)=j$,
            involving four variables from $V_{i_1}, V_{i_2}, V_{i_3}, V_{i_4}$.
            Let $\sigma_c \in \mathcal{S}_c$ such that $\{v_c, v_{c,\sigma_c}\} \in M$.
            We claim that $\sigma_c$ must be consistent with $\sigma$.
            Indeed, if $\sigma_c$ assigns value $k \in [5]$ to $x_i \in V_2$,
            then it follows that $(\setdef{u^{i,j}_k}{k \in [5]} \setminus \{u^{i,j}_{\sigma(x_i, j)}\}) \notin V_M$
            due to the associated XOR gadgets, thus $\{u^{i,j}, u^{i,j}_{\sigma(x_i, j)}\} \in M$.
            On the other hand, if $\sigma$ assigns value $k \in [5]$ to $x_i \in V_2$,
            then
            \begin{itemize}
                \item if $k=1$, then in $\hat{B}_{i,j}$ we have that $b_1,b_2,\chi_2,y_2 \notin V_M$, implying that $\{y_1,a\}, \{\chi_1,l_1\} \in M$,
                \item if $k=2$, then in $\hat{B}_{i,j}$ we have that $b_1,b_2,\chi_1,y_2 \notin V_M$, implying that $\{y_1,a\}, \{\chi_2,l_2\} \in M$,
                \item if $k=3$, then in $\hat{B}_{i,j}$ we have that $a,a',y_1,y_2 \notin V_M$, implying that $\{b_1,b_2\} \in M$,
                \item if $k=4$, then in $\hat{B}_{i,j}$ we have that $b_1,b_2,\chi_2,y_1 \notin V_M$, implying that $\{y_2,a'\}, \{\chi_1,l_1\} \in M$,
                \item if $k=5$, then in $\hat{B}_{i,j}$ we have that $b_1,b_2,\chi_1,y_1 \notin V_M$, implying that $\{y_2,a'\}, \{\chi_1,l_1\} \in M$.
            \end{itemize}
            This completes the proof.
        \end{nestedproof}

        Finally, to bound the pathwidth of the graph $G$,
        start with the decomposition of the primal graph of $\psi$ and
        in each $B_j$ replace each $x_i \in V_2 \cap B_j$ with all the vertices of $\hat{B}_{i,j}$.
        We further add in $B_j$ all the vertices of $\hat{B}_{i,j+1}$ for $x_i \in V_2 \cap B_{j+1}$,
        as well as the vertices $r$ and $r'$.
        For each constraint $c$, let $b(c)=j$ and add into $B_j$ all the
        vertices of $\hat{C}_c$ and any private vertex of a XOR gadget involving a vertex of $\hat{C}_c$,
        for a total of at most $1 + 5^4 + 2\binom{5^4}{2} + 8 \cdot 5^4$ vertices.
        So far each bag contains $p + \bO(\log p)$ vertices.
        To cover the remaining vertices,
        for each $j \in [t]$ we replace the bag $B_j$ with a sequence of bags such that
        all of them contain the vertices we have added to $B_j$ so far,
        the first bag contains the vertices $a$ belonging to all $\hat{B}_{i,j}$ for all $x_i \in V_1 \cap B_j$ and
        the last bag contains the vertices $a'$ belonging to all $\hat{B}_{i,j}$ for all $x_i \in V_1 \cap B_j$.
        We insert a sequence of $\bO(p)$ bags between these two,
        at each step adding all vertices apart from $a$ of a block gadget $\hat{B}_{i,j}$ and then removing
        all vertices of the same block gadget apart from $a'$, for $x_i \in V_1 \cap B_j$.
\end{proof}

\subsection{Parameterization by Clique-width}\label{subsec:acyclic:cw}

Before describing our algorithm, we first define some necessary notions.

% \begin{toappendix}

\subparagraph{Partitions.}
A \emph{partition} $p$ of a set $L$ is a collection of non-empty subsets of $L$
that are pairwise non-intersecting and such that $\bigcup_{p_i \in p} p_i = L$;
each set in $p$ is called a \emph{block} of $p$.
The set of partitions of a finite set $L$ is denoted by $\Pi(L)$,
and $(\Pi(L),\sqsubseteq)$ forms a lattice where $p \sqsubseteq q$
if for each block $p_i$ of $p$ there is a block $q_j$ of $q$ with
$p_i \subseteq q_j$.
The join operation of this lattice is denoted by $\sqcup$.
For example, we have $\{ \{ 1,2\},\{3,4\},\{5\}\} \sqcup \{\{1\},\{2,3\},\{4\},\{5\}\}= \{ \{ 1,2,3,4\}, \{5\}\}$.
Let $\block(p)$ denote the number of blocks of a partition $p$.
Observe that $\varnothing$ is the only partition of the empty set.
A \emph{weighted partition} is an element of $\Pi(L) \times \mathbb{N}$ for some finite set $L$.
For $p \in \Pi(L)$ and $X \subseteq L$, let $p_{\downarrow X} \in \Pi(X)$ be the partition
$\setdef{p_i \cap X}{p_i \in p} \setminus \{\varnothing\}$,
and for a set $Y$, let $p_{\uparrow Y} \in \Pi(L \cup Y)$ be the partition
$p \cup \left(\bigcup_{y \in Y \setminus L} \{\{y\}\}\right)$.

% \end{toappendix}

\begin{theoremrep}[\appsymb]
    There is an algorithm that,
    given an integer $\ell$ as well as a graph $G$ along with an irredundant clique-width expression $\psi$ of $G$ of width $\cw$,
    determines whether $G$ has an acyclic matching of size at least $\ell$ in time $2^{\bO(\cw)} n^{\bO(1)}$.
\end{theoremrep}

\begin{proof}
    The proof follows along the lines of the $2^{\bO(\cw)} n^{\bO(1)}$ algorithm for {\FVS} by Bergougnoux and Kant\'e~\cite{tcs/BergougnouxK19}.
    There, the authors develop a general framework, expanding upon the one of Bodlaender et al.~\cite{iandc/BodlaenderCKN15},
    in order to cope with various problems with connectivity constraints on graphs of bounded clique-width.
    Using their framework, we can perform DP over the clique-width expression and, starting from a partial solution,
    eventually build an optimal one.

    Instead of \AcyclicM, we solve the equivalent problem of asking,
    given a graph $G$ and an integer $k$, whether there exists $S \subseteq V(G)$ of size $|S| \ge k$ such that
    $G[S]$ is a forest that contains a perfect matching.
    Notice that $G$ has an acyclic matching of size $\ell$ if and only if there exists such a set $S$
    of size $|S| \ge 2\ell$.
    In the following let $\mathcal{H}$ denote the set of all $\cw$-labeled graphs generated
    by a subexpression of the given clique-width expression $\psi$ of the input graph $G$.

    For every $H \in \mathcal{H}$ and for every subset $L \subseteq \setdef{i \in [\cw]}{\lab^{-1}_H(i) \neq \varnothing}$,
    we compute a set of weighted partitions $\mathcal{A} \subseteq \Pi(L) \times \mathbb{N}$.
    Each weighted partition $(p,w) \in \mathcal{A} \subseteq \Pi(L) \times \mathbb{N}$ is intended to mean that
    there is a partial solution whose vertices $S \subseteq V(H)$ have weight $w$ and $S \cap \lab^{-1}_H(i) \neq \varnothing$
    for all labels $i \in L$ and $p$ is the transitive closure of the following equivalence relation $\sim$ on $L$:
    $i \sim j$ if there exist an $i$-vertex and a $j$-vertex in the same component of $H[S]$.
    Intuitively, for each label $i$ in $L$, we expect the $i$-vertices of $S$ to have an additional neighbor in any
    extension of $S$ into an optimum solution, therefore, we can consider all vertices of $S \cap \lab^{-1}_H(i)$ as
    one vertex in terms of connectivity.
    On the other hand, the vertices of $S \cap \lab^{-1}_H(j)$, where $j \in [\cw] \setminus L$ and $S$ contains at least one $j$-vertex,
    are expected to have no additional neighbor in any extension of $S$ into an optimum solution.
    Consequently, those vertices no longer play a role in the connectivity of the solution.
    These expectations allow us to represent the connected components of $H[S]$ by $p$.
    Our algorithm will guarantee that the weighted partitions computed from $(p,w)$ are computed accordingly to these expectations.
    It remains to deal with the acyclicity constraint, and to do this,
    we need to certify that whenever we join two weighted partitions and keep the result as a partial solution,
    it does not correspond to a partial solution with cycles.
    To this end, we employ the machinery introduced by~\cite{tcs/BergougnouxK19} and is based on~\cite{iandc/BodlaenderCKN15},
    which, for the sake of completeness, we present in \cref{sec:acyclic:cw:framework}.

    We use the weighted partitions defined in~\cite{tcs/BergougnouxK19} to represent the partial solutions.
    At each step of our algorithm we will ensure that the stored weighted partitions correspond to acyclic partial solutions.
    Since the framework of~\cite{tcs/BergougnouxK19} deals only with \emph{connected} acyclic solutions,
    as in~\cite{tcs/BergougnouxK19}, we introduce a hypothetical new vertex $v_0$ that is universal,
    and we compute a pair $(F,E_0)$ so that $F$ is a maximum induced forest of $G$ that has a perfect matching,
    $E_0$ is a subset of edges incident with $v_0$,
    and $(V(F) \cup \{v_0\}, E(F) \cup E_0)$ is a tree.

    We start with some definitions and notation.
    Let $H \in \mathcal{H}$, and consider a subset of its vertices $S \subseteq V(H)$ and
    a matching $M \subseteq E(H)$ with $V_M \subseteq S$, that is, all the vertices incident with
    edges in $M$ belong to $S$.
    In that case, we say that a vertex $v \in S$ is \emph{unsaturated} in $(S,M)$ if $v \notin V_M$,
    otherwise, i.e., if $v \in V_M$, we say that it is \emph{saturated} in $(S,M)$ instead.
    We say that such a pair $(S,M)$ is a \emph{partial solution} of $H$ if $H[S]$ is acyclic.
    Finally, we say that a partial solution $(S,M)$ of $H$ is \emph{extensible} if there exists an acyclic matching $M^*$ of
    $G$ such that $V_{M^*} \cap V(H) = S$ and $M^* \supseteq M$.

    \begin{claim}\label{claim:acyclic:cw:algorithm}
        Let $(S,M)$ be a partial solution of $H$ that is extensible,
        and let $M^*$ be an acyclic matching of $G$ such that
        $V_{M^*} \cap V(H) = S$ and $M^* \supseteq M$.
        Then, for all $i \in [\cw]$, the following hold.
        \begin{enumerate}
            \item There exists at most one unsaturated vertex of label $i$ in $(S,M)$,
            that is, $|(S \setminus V_M) \cap \lab^{-1}_H(i)| \le 1$.

            \item If $|(S \setminus V_M) \cap \lab^{-1}_H(i)| = 1$,
            then each vertex of $S \cap \lab^{-1}_H(i)$ belongs to a distinct connected component of $H[S]$.

            \item If $|S \cap \lab^{-1}_H(i)| \ge 2$,
            then the vertices of $S \cap \lab^{-1}_H(i)$ take part in at most one join operation
            with a non-empty set,
            that is, there exists at most one graph $H_2$ further in $\psi$
            such that $H_2 = \eta_{i',j'}(H_1)$ with $S \cap \lab^{-1}_H(i) = S \cap \lab^{-1}_{H_1}(i')$
            and $V_{M*} \cap \lab^{-1}_{H_1}(j') \neq \varnothing$.
        \end{enumerate}
    \end{claim}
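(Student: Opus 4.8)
The plan is to argue each of the three items by contradiction, exploiting the fact that $M^*$ is an \emph{acyclic} matching of $G$, i.e.\ $G[V_{M^*}]$ is a forest, together with the structure of the clique-width expression. Throughout, fix $i \in [\cw]$ and write $S = V_{M^*}\cap V(H)$, so all vertices of $S$ are saturated in $M^*$ (though possibly not in $M$). The key elementary observation I would isolate first is that once two vertices $u,v$ receive the same label at some point of $\psi$, every later join operation treats them identically, so any vertex joined to one of them is joined to both; this is exactly the mechanism already used in the proof of Claim~\ref{claim:induced:cw:well_defined}, and I would reuse it verbatim.

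\textbf{Item 1.} Suppose $v,v' \in (S\setminus V_M)\cap \lab^{-1}_H(i)$ are two distinct unsaturated vertices. Since $v,v'\in S = V_{M^*}\cap V(H)$, there are edges $\{v,u\},\{v',u'\}\in M^*$, and because $v,v'\notin V_M$ while $M\subseteq M^*$, neither of these edges lies in $M$; in particular neither $u$ nor $u'$ lies in $V(H)$ \emph{with the edge present in $H$} — more precisely, since $\deg_{H[S]}(v)$ would have to be positive if $\{v,u\}\in E(H)$ and $u\in S$, but $v$ is unsaturated only in $(S,M)$, I instead use directly that $u\notin V_M$‑type reasoning is not needed; the clean route is: since $\{v,u\}\notin M$ and $v$ is incident in $M^*$ to an edge outside $M$, the edge $\{v,u\}$ is introduced by some join $\eta_{i',j'}$ strictly later in $\psi$ applied to a graph $H_1\supseteq H$ in which $v,v'$ still share label $i'$ and $u$ has label $j'$. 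That join also adds the edge $\{v',u\}$. But then in $G$, $u$ is adjacent to both $v$ and $v'$; following the tree/forest all the way up, $u,v,v'$ all lie in $V_{M^*}$, and $v,v'$ lie in a common component of $H[S]$ or get connected through $u$ — in either case one obtains a vertex of degree $\ge 2$ towards $\{v,v'\}$. Combined with the fact that $v'$ is \emph{also} matched in $M^*$ by an edge $\{v',u'\}$, one shows $G[V_{M^*}]$ contains a path $u$–$v$–$\cdots$ and $u$–$v'$–$u'$, and chasing connectivity yields a cycle or a degree‑$3$ vertex, contradicting that $G[V_{M^*}]$ is a forest that is $1$-regular on matched vertices. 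The main obstacle here is bookkeeping the two possible reasons ``$v$ unsaturated in $(S,M)$'' can fail to mean ``$v$ isolated in $H[S]$''; I would handle it by noting $M\subseteq M^*$ and $G[V_{M^*}]$ is $1$-regular, so $v$'s unique $M^*$-neighbour is its only neighbour in $G[V_{M^*}]$, hence its only neighbour in any $H'[S]$, which forces the needed late join.

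\textbf{Item 2.} Assume $|(S\setminus V_M)\cap\lab^{-1}_H(i)|=1$, say $v$ is the unique unsaturated vertex, and suppose some $w\in S\cap\lab^{-1}_H(i)$ lies in the same component $C$ of $H[S]$ as a second vertex $w'\in S\cap\lab^{-1}_H(i)$ (possibly one of them equals $v$). As in Item~1, $v$'s $M^*$-partner $u$ satisfies $\{v,u\}\notin M^*$‑in‑$H$ actually $\{v,u\}$ is added by a later join $\eta_{i',j'}$ on $H_1\supseteq H$; that join also adds $\{w,u\}$ and $\{w',u\}$ since $w,w'$ carry the same label $i'$ in $H_1$. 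Then $u$ together with a $w$–$w'$ path inside $C$ closes a cycle in $G[V_{M^*}]$, contradicting acyclicity. I expect this to be a short variation of Item~1.

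\textbf{Item 3.} Suppose $|S\cap\lab^{-1}_H(i)|\ge 2$ and that the vertices of $S\cap\lab^{-1}_H(i)$ take part in two distinct join operations with non-empty sets — i.e.\ there are $H_2 = \eta_{i',j'}(H_1)$ and $H_2^\circ = \eta_{i'',j''}(H_1^\circ)$ further in $\psi$, each acting on a set of vertices containing $S\cap\lab^{-1}_H(i)$ (with the label unchanged, as in the statement) and each with a vertex of $V_{M^*}$ on the other side. Pick $a\in V_{M^*}\cap\lab^{-1}_{H_1}(j')$ and $b\in V_{M^*}\cap\lab^{-1}_{H_1^\circ}(j'')$. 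Then every vertex of $S\cap\lab^{-1}_H(i)$ is adjacent in $G$ to $a$ and every such vertex is adjacent to $b$; since $|S\cap\lab^{-1}_H(i)|\ge 2$, take two of them $x,y$, and we get a $4$-cycle $x$–$a$–$y$–$b$–$x$ in $G[V_{M^*}]$ (the edges $xa,ya,xb,yb$ all present, and $a\ne b$ because the two joins are distinct and irredundancy forbids re-adding an edge that already exists). This contradicts $G[V_{M^*}]$ being a forest. The only subtlety is the case $a=b$ or one of the joins involving the \emph{same} label on both sides; I would rule out $a=b$ using irredundancy (if the same edge-addition were forced twice, the second $\eta$ would be redundant) and handle the same-label degenerate case separately by noting it would create a triangle among vertices of label $i$, again contradicting acyclicity. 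I anticipate Item~3 to be the cleanest of the three; the main obstacle overall is Item~1's careful handling of the gap between ``unsaturated in $(S,M)$'' and ``isolated in the current induced subgraph'', which the $1$-regularity of $G[V_{M^*}]$ resolves.
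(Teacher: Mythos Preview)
Your overall strategy (contradiction via a cycle created by a later join that acts identically on all vertices currently sharing label $i$) matches the paper's, and your Item~3 argument is essentially the paper's (the paper just says ``analogous to Item~1'' where you spell out the $4$-cycle $x\text{--}a\text{--}y\text{--}b\text{--}x$). However, Item~1 has a genuine gap.

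You repeatedly use that $G[V_{M^*}]$ is $1$-regular: you write that a ``degree-$3$ vertex'' gives a contradiction, and you resolve the ``late join'' issue by ``noting $M\subseteq M^*$ and $G[V_{M^*}]$ is $1$-regular, so $v$'s unique $M^*$-neighbour is its only neighbour in $G[V_{M^*}]$''. This is false: $M^*$ is an \emph{acyclic} matching, so $G[V_{M^*}]$ is only a forest; vertices of $V_{M^*}$ may have arbitrarily high degree in $G[V_{M^*}]$. You have imported the induced-matching condition into the acyclic-matching setting, and both the ``degree $\ge 3$'' contradiction and the argument that the edge $\{v,u\}$ must be introduced by a strictly later join collapse. (Separately, ``$\{v,u\}\notin M$'' never implies ``$\{v,u\}\notin E(H)$''.)

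The paper's proof avoids this entirely: it does not try to show $v$'s matching edge is absent from $H$ via $1$-regularity. It simply takes the two join operations that create the edges $\{v_1,u_1\}$ and $\{v_2,u_2\}$, argues that at those points $v_1$ and $v_2$ still share a label (hence each such join connects \emph{both} $v_1$ and $v_2$ to the corresponding $u$), and concludes that either a single join already yields the $4$-cycle $v_1u_1v_2u_2$, or two distinct joins first put $v_1,v_2$ in one component and then close a cycle. Only acyclicity of $G[V_{M^*}]$ is used. Your Item~2 is the right picture (the join saturating $v$ hits all of $S\cap\lab_H^{-1}(i)$ and closes a cycle with the $w$--$w'$ path), but its justification for ``later join'' currently rests on the same faulty Item~1 reasoning and should instead be argued as the paper does.
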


    \begin{claimproof}
        First notice that for all $H \in \mathcal{H}$ it holds that $H[V_{M^*} \cap V(H)]$ is a subgraph of $G[V_{M^*}]$,
        therefore $H[V_{M^*} \cap V(H)]$ is acyclic.

        Towards a contradiction, let $v_1,v_2 \in (S \setminus V_M) \cap \lab^{-1}_H(i)$,
        and let $u_1,u_2 \in V_{M^*}$ such that $\{v_1,u_1\}, \{v_2,u_2\} \in M^*$.
        There are two cases.
        In the first case, at some point further in the clique-expression there exists a single join operation
        that saturates both $v_1$ and $v_2$, that is,
        there exist $H_1,H_2 \in \mathcal{H}$ and $H_2 = \eta_{i',j'}(H_1)$,
        with $v_1,v_2 \in \lab^{-1}_{H_1}(i')$ and $u_1,u_2 \in \lab^{-1}_{H_1}(j')$.
        Then however $H_2[S]$ contains a cycle, which contradicts the acyclicity of $H_2[V_{M^*} \cap V(H_2)]$
        since $S \subseteq V_{M^*} \cap V(H_2)$.
        In the second case, $v_1$ and $v_2$ are saturated by different join operations.
        This means however that after the first one, they are in the same connected component,
        thus the second join operation results in a cycle, again leading to a contradiction.
        The argument for the third item of the statement is analogous.

        Similarly, one can show that if $|(S \setminus V_M) \cap \lab^{-1}_H(i)| = 1$
        and there exist two vertices of $S \cap \lab^{-1}_H(i)$ belonging to the same connected component of $H[S]$,
        then the join operation that saturates $v$ results in a cycle, thus leading to a contradiction.
    \end{claimproof}

    We proceed by dynamic programming, aiming to construct in a bottom-up fashion all extensible partial solutions.
    In order to reduce the sizes of the sets stored in the entries of the DP tables,
    we will express the steps of the algorithm in terms of the operators on weighted partitions defined in~\cite{tcs/BergougnouxK19}.
    Let $H \in \mathcal{H}$.
    We are interested in storing \emph{ac-representative sets} of all partial solutions $(S,M)$ of $H$ that may produce a solution,
    thus we assume that the stored partial solutions satisfy the properties of \cref{claim:acyclic:cw:algorithm}.
    Consequently, it holds that if $S$ contains at least $2$ $i$-vertices, then the vertices of this label may
    partake in \emph{at most one} join operation with a non-empty set further in the clique-expression.
    Let $J \subseteq [\cw]$ denote the set of all labels from which $S$ contains at least $2$ vertices,
    and consider a partition $J_1 \uplus J_2 = J$ such that for the labels in $J_1$ there exists further in the clique-expression
    a join operation with a non-empty set while for those in $J_2$ it does not exist.
    Notice that in terms of connectivity, we can treat all vertices of $S \cap \lab^{-1}_H(i)$ for $i \in J_1$
    as one vertex, as they will all eventually end up in the same connected component of the final solution.
    On the other hand, for the vertices labeled $j \in J_2$,
    we can take into account how they affect the connectivity of the graph and subsequently ignore them,
    as no other edges will be added incident with them.
    We remark that if $J_2 = \varnothing$,
    then we can encode the connectivity of the graph by storing the partition $p \in \Pi([\cw])$ where
    $i$ and $j$ are in the same block if there are
    an $i$-vertex and a $j$-vertex in the same connected component of $H[S]$.

    Now consider the case where $H_1,H_2 \in \mathcal{H}$ with $H_2 = \eta_{i,j}(H_1)$ and $(S,M)$ is a partial solution of $H_1$,
    where $i,j \in J \subseteq [\cw]$.
    Notice that this join operation might result in cycles forming in $H_2[S]$, e.g.,
    whenever an $i$-vertex and a $j$-vertex are non-adjacent and belong to the same connected component of $H_1[S]$,
    or the number of $i$-vertices and $j$-vertices are both at least $2$ in $S$.
    Unfortunately, we are not able to handle all these cases with the operators on weighted partitions.
    To resolve the situation where an $i$-vertex and a $j$-vertex are already adjacent,
    we consider \emph{irredundant} $\cw$-expressions, i.e., whenever an operation $\eta_{i,j}$ is used there are no edges
    between $i$-vertices and $j$-vertices.
    For the other cases, we index the DP tables with total functions $s \colon [\cw] \to \Gamma$ called \emph{signatures}, that indicate, for each label $i \in [\cw]$, whether $\lab_H^{-1}(i)$ intersects $S$ and $V_M$,
    with $\Gamma = \{\gamma_0,\gamma_{\tilde{1}},\gamma_1,\gamma_{\tilde{2}},\gamma_2,\gamma_{-2}\}$.
    In particular, we say that a partial solution $(S,M)$ of $H$ is \emph{compatible} with the signature $s$ if for all $i \in [\cw]$
    it holds that
    \begin{itemize}
        \item if $s(i) = \gamma_0$, then $S \cap \lab^{-1}_H(i) = \varnothing$,

        \item if $s(i) = \gamma_{\tilde{1}}$, then $S \cap \lab^{-1}_H(i) = \{v\}$ and $v \notin V_M$,

        \item if $s(i) = \gamma_1$, then $S \cap \lab^{-1}_H(i) = \{v\}$ and $v \in V_M$,

        \item if $s(i) = \gamma_{\tilde{2}}$, then $|S \cap \lab^{-1}_H(i)| \ge 2$ and $|(S \setminus V_M) \cap \lab^{-1}_H(i)| = 1$,

        \item if $s(i) \in \{\gamma_2, \, \gamma_{-2}\}$, then $|S \cap \lab^{-1}_H(i)| \ge 2$ and $(S \setminus V_M) \cap \lab^{-1}_H(i) = \varnothing$.
    \end{itemize}
    Defining the signatures like so allows us to resolve the problem with the sets $J_1$ and $J_2$ by forcing each label
    $i$ with $s(i) = \gamma_2$ to wait for \emph{exactly} one clique-width operation $\eta_{i,\ell}$ for some $\ell \in [\cw]$ with $s(\ell) \neq \gamma_0$,
    while any label $j$ with $s(j) = \gamma_{-2}$ is forbidden to partake in any such operation.
    Notice that by definition, any label class $i$ with $s(i) = \gamma_{\tilde{2}}$ is also forced to wait for exactly $1$ such join operation,
    as no vertex is unsaturated in $G$.
    Thus, we translate all the acyclicity tests to the $\acjoin$ operation.
    The following notion of \emph{certificate graph} formalizes this requirement,
    where the vertices in $V^+_s$ represent the expected future neighbors of all the vertices in
    $S \cap \lab^{-1}_H(s^{-1}(\{\gamma_{\tilde{2}},\gamma_2\}))$.

    \begin{definition}[Certificate graph of a solution]\label{defn:certif}
        Let $H \in \mathcal{H}$, $F$ an induced forest of $H$,
        $s \colon [\cw] \to \Gamma$,
        and $E_0$ a subset of edges incident with $v_0$.
        Let $V^+_s = \setdef{v^+_i}{i \in s^{-1}(\{\gamma_{\tilde{2}},\gamma_2\})}$.
        The \emph{certificate graph of $(F,E_0)$ with respect to $s$},
        denoted by $\CG(F,E_0,s)$,
        is the graph $(V(F) \cup V_s^+ \cup \{v_0\}, E(F) \cup E_0 \cup E_s^+)$ with
        \[
            E_s^+ = \bigcup_{i \in s^{-1}(\{\gamma_{\tilde{2}},\gamma_2\})} \setdef{\{v,v^+_i\}}{v \in V(F) \cap \lab^{-1}_H(i)}.
        \]
    \end{definition}

    We are now ready to define the sets of weighted partitions whose representatives we manipulate in our dynamic programming tables.

    \begin{definition}[{Weighted partitions in $\mathcal{A}_H[s]$}]\label{defn:tabfvs}
        Let $H \in \mathcal{H}$ and consider a signature $s \colon [\cw] \to \Gamma$.
        The entries of $\mathcal{A}_H[s]$ are all weighted partitions
        $(p,w) \in \Pi(s^{-1}(\{\gamma_{\tilde{1}},\gamma_1,\gamma_{\tilde{2}},\gamma_2\}) \cup \{v_0\}) \times \mathbb{N}$
        such that there exist a partial solution $(S,M)$ of $H$ and $E_0 \subseteq \setdef{\{v_0 v\}}{v \in S}$
        so that $\wc (S) = w$, and
        \begin{enumerate}
            \item $(S, M)$ is compatible with $s$,

            \item the certificate graph $\CG(H[S],E_0,s)$ is a forest,

            \item each connected component of $\CG(H[S],E_0,s)$ has at least one vertex in
            $\lab^{-1}_H(s^{-1}(\{\gamma_{\tilde{1}},\gamma_1,\gamma_{\tilde{2}},\gamma_2\})) \cup \{v_0\}$,

            \item the partition $p$ equals $(s^{-1}(\{\gamma_{\tilde{1}},\gamma_1,\gamma_{\tilde{2}},\gamma_2\}) \cup \{v_0\})/\sim$,
            where $i \sim j$ if and only if a vertex in $S \cap \lab^{-1}_H(i)$ is connected, in $\CG(H[S],E_0,s)$,
            to a vertex in $S \cap \lab^{-1}_H(j)$; we consider $\lab^{-1}_H(v_0)=\{v_0\}$.
        \end{enumerate}
    \end{definition}
    Conditions (2) and (4) guarantee that $(S \cup \{v_0\},E(H[S]) \cup E_0)$ can be extended into a tree, if any.
    They also guarantee that cycles detected through the $\acjoin$ operation correspond to cycles,
    and each cycle can be detected with it.
    In the following we call any quadruple $(S,M,E_0,(p,\wc(S)))$ a \emph{candidate solution} in $\mathcal{A}_H[s]$ if Condition (4) is satisfied,
    and if in addition Conditions (1)-(3) are satisfied, we call it a \emph{solution} in $\mathcal{A}_H[s]$.

    In that case, the size of a maximum acyclic matching of $G$ corresponds to the maximum,
    over all signatures $s \colon [\cw] \to \Gamma$
    with $s^{-1}(\{\gamma_{\tilde{1}},\gamma_{\tilde{2}},\gamma_{2}\}) = \varnothing$,
    of $\max \setdef{w}{(\{ s^{-1}(\gamma_{1}) \cup \{v_0\} \},w) \in \mathcal{A}_G[s]}$.
    Indeed, by definition, if $(\{ s^{-1}(\gamma_{1}) \cup \{v_0\} \},w)$ belongs to $\mathcal{A}_{G}[s]$,
    then there exists a partial solution $(S,M)$ of $G$ with $V_M=S$ and $\wc(S)=w$,
    as well as a set $E_0$ of edges incident with $v_0$ such that $(S \cup \{v_0\},E(H[S]) \cup E_0)$ is a tree.
    This follows from the fact that if $s^{-1}(\{\gamma_{\tilde{2}},\gamma_{2}\}) = \varnothing$,
    then we have $\CG(H[S],E_0,s) = (S \cup \{v_0\}, E(H[S]) \cup E_0)$.

    Our algorithm will store, for each $H \in \mathcal{H}$ and each $s \colon [\cw] \to \Gamma$,
    an ac-representative set $\DPt_H[s]$ of $\mathcal{A}_H[s]$.
    We are now ready to give the different steps of the algorithm, depending on the clique-width operations.
    Recall that $f[\alpha \mapsto \beta]$ denotes the function obtained from $f$
    that maps $\alpha$ to $\beta$ instead of $f(\alpha)$.

    \proofsubparagraph{Singleton $H = i(v)$.}
    For $H = i(v)$, notice that $\lab^{-1}_H(i) = \{v\}$ and $\lab^{-1}_H(w) = \varnothing$ for all $w \in [\cw] \setminus \{i\}$.
    Consequently, for $s \colon [\cw] \to \Gamma$ we set the following values,
    where $s_0$ is the signature such that $s_0(w)=\gamma_0$ for all $w \in [\cw]$.
    \[
        \DPt_H[s] =
            \begin{cases}
                \bigl\{ \bigl( \{\{v_0\} \} ,0 \bigr) \bigr\}                                                   &\text{if $s = s_0$,}\\
                \bigl\{ \bigl( \{\{i,v_0\}\}, \wc(v) \bigr), \bigl( \{\{i\},\{v_0\}\}, \wc(v) \bigr) \bigr\}    &\text{if $s = s_0 [i \mapsto \gamma_{\tilde{1}}]$,}\\
                \varnothing                                                                                     &\text{otherwise.}
            \end{cases}
    \]
    Since $|V(H)|=1$, there is no partial solution $(S,M)$ intersecting $\lab^{-1}_H(i)$ on at least two vertices while $H$ has no edges,
    so the set of weighted partitions satisfying \cref{defn:tabfvs} equals the empty set for $s(i) \neq \{\gamma_0,\gamma_{\tilde{1}}\}$.
    If $s(i) = \gamma_{\tilde{1}}$, there are two possibilities, depending on whether $E_0 = \varnothing$ or
    $E_0 = \{v v_0\}$.
    We can thus conclude that $\DPt_H[s] = \mathcal{A}_H[s]$ is correctly computed.

    \proofsubparagraph{Joining labels with edges, $H = \eta_{i,j}(H')$.}
    Before describing how to populate the table in this case,
    we first define a helper partial function%
    \footnote{We omit the use of braces in case of singletons for the sake of readability.}
    $f \colon \Gamma \times \Gamma \to 2^{\Gamma \times \Gamma}$
    as follows, where the row denotes the first input and the column the second:
    \[
        \begin{array}{r|cccccc}
            f 	                & \gamma_0 							& \gamma_{\tilde{1}} 																			& \gamma_1 							& \gamma_{\tilde{2}} 				& \gamma_2 							& \gamma_{-2} \\
            \hline
            \gamma_0 			& (\gamma_0, \gamma_0) 				& (\gamma_0, \gamma_{\tilde{1}}) 																& (\gamma_0, \gamma_1) 				& (\gamma_0, \gamma_{\tilde{2}}) 	& (\gamma_0, \gamma_2) 				& (\gamma_0, \gamma_{-2}) \\
            \gamma_{\tilde{1}} 	& (\gamma_{\tilde{1}}, \gamma_0) 	& \left\{\makecell{(\gamma_{\tilde{1}}, \gamma_{\tilde{1}}) \\ (\gamma_1,\gamma_1)} \right\} 	& (\gamma_{\tilde{1}}, \gamma_1) 	& (\gamma_1, \gamma_{-2}) 			& (\gamma_{\tilde{1}}, \gamma_{-2}) & \varnothing \\
            \gamma_1 			& (\gamma_1, \gamma_0) 				& (\gamma_1, \gamma_{\tilde{1}}) 																& (\gamma_1, \gamma_1) 				& \varnothing 								& (\gamma_1, \gamma_{-2}) 	& \varnothing \\
            \gamma_{\tilde{2}} 	& (\gamma_{\tilde{2}}, \gamma_0)	& (\gamma_{-2}, \gamma_1) 																		& \varnothing 						& \varnothing 								& \varnothing 				& \varnothing \\
            \gamma_2 			& (\gamma_2, \gamma_0) 				& (\gamma_{-2}, \gamma_{\tilde{1}}) 															& (\gamma_{-2}, \gamma_1) 			& \varnothing 								& \varnothing 				& \varnothing \\
            \gamma_{-2}			& (\gamma_{-2}, \gamma_0) 			& \varnothing																					& \varnothing 						& \varnothing 								& \varnothing 				& \varnothing
        \end{array}
    \]
    In the following, let $f^{-1}(\alpha, \beta) = \setdef{(\alpha', \beta') \in \Gamma \times \Gamma}{(\alpha,\beta) \in f(\alpha', \beta')}$.
    For $s \colon [\cw] \to \Gamma$ we set
    \[
        \DPt_H[s] =
            \begin{cases}
                \DPt_{H'}[s]		            &\text{if $\gamma_0 \in \{s(i), \, s(j)\}$,}\\
                \varnothing			            &\text{if $f^{-1}(s(i),s(j)) = \varnothing$,}\\
                \acreduce ( \rmc(\mathcal{A}) ) &\text{otherwise,}
            \end{cases}
    \]
    where
    \[
        \mathcal{A} = \bigcup_{(\alpha',\beta') \in f^{-1}(s(i),s(j))}
            \proj \left(
                \acjoin\left( \DPt_{H'} \Bigl[ s[i \mapsto \alpha'][j \mapsto \beta'] \Bigr], \,
                \{ ( \{\{i,j\}\},0)\}\right), \,
                s^{-1}(\gamma_{-2}) \cap \{i,j\}
            \right).
    \]

    Notice that in the first case we do not need to use the operators $\acreduce$ and $\rmc$ since we update
    $\DPt_H[s]$ with one table from $\DPt_{H'}$.
    The second case can only be when (i) $s(i)=s(j)=\gamma_{-2}$,
    or (ii) $\gamma_0 \notin \{s(i),s(j)\}$ and
    $\{\gamma_{\tilde{2}},\gamma_{2}\} \cap \{s(i),s(j)\} \neq \varnothing$.
    As for the third case, we have that $s(i),s(j) \in \{\gamma_{\tilde{1}}, \gamma_{1}, \gamma_{-2}\}$ and
    either $s(i) \neq \gamma_{-2}$ or $s(j) \neq \gamma_{-2}$.
    Intuitively, we consider the weighted partitions $(p,w) \in \mathcal{A}$ such that
    $i$ and $j$ belong to different blocks of $p$,
    we merge the blocks containing $i$ and $j$,
    remove the elements in $s^{-1}(\gamma_{-2}) \cap \{i,j\}$ from the resulting block,
    and add the resulting weighted partition to $\DPt_H[s]$.
    Notice that we also have to consider whether a new edge has been added to the partial solution due to the join operation.

    \proofsubparagraph{Relabeling, $H = \rho_{i \to j}(H')$.}
    As before, we first define a helper partial function
    $g \colon \Gamma \times \Gamma \to 2^{\Gamma}$
    as follows, where the row denotes the first input and the column the second:
    \[
        \begin{array}{r|cccccc}
            g 					& \gamma_0 				& \gamma_{\tilde{1}} 	& \gamma_1 												& \gamma_{\tilde{2}} 	& \gamma_2 				& \gamma_{-2} \\
            \hline
            \gamma_0 			& \gamma_0 				& \gamma_{\tilde{1}} 	& \gamma_1 												& \gamma_{\tilde{2}} 	& \gamma_2 				& \gamma_{-2} \\
            \gamma_{\tilde{1}} 	& \gamma_{\tilde{1}} 	& \varnothing 			& \gamma_{\tilde{2}} 									& \varnothing 			& \gamma_{\tilde{2}} 	& \varnothing \\
            \gamma_1 			& \gamma_1				& \gamma_{\tilde{2}}	& \left\{ \makecell{ \gamma_2 \\ \gamma_{-2}} \right\}	& \gamma_{\tilde{2}} 	& \gamma_2				& \gamma_{-2} \\
            \gamma_{\tilde{2}} 	& \gamma_{\tilde{2}}	& \varnothing 			& \gamma_{\tilde{2}} 									& \varnothing 			& \gamma_{\tilde{2}} 	& \varnothing \\
            \gamma_2 			& \gamma_2 				& \gamma_{\tilde{2}} 	& \gamma_2 												& \gamma_{\tilde{2}}	& \gamma_2 				& \varnothing \\
            \gamma_{-2}			& \gamma_{-2} 			& \varnothing			& \gamma_{-2} 											& \varnothing 			& \varnothing 			& \gamma_{-2}
        \end{array}
    \]
    In the following, let $\gminusrest(\beta) = \setdef{(\alpha', \beta') \in \Gamma \times \Gamma}{\beta \in f(\alpha', \beta') \text{ and } \gamma_0 \notin \{ \alpha', \beta' \}}$.
    For $s \colon [\cw] \to \Gamma$ we set
    \[
        \DPt_H[s] =
            \begin{cases}
                \acreduce ( \rmc(\mathcal{A}_1 \cup \mathcal{A}_2 \cup \mathcal{A}_3 ) )    &\text{if $s(i) = \gamma_0$,}\\
                \varnothing                                                                 &\text{otherwise},
            \end{cases}
    \]
    where we obtain $\mathcal{A}_1,\mathcal{A}_2,\mathcal{A}_3$ as follows.

    $\mathcal{A}_1$ contains all weighted partitions corresponding to partial solutions not intersecting $\lab^{-1}_{H'}(i)$,
    which are trivially partial solutions for $H$ as well, and we have
    \[
        \mathcal{A}_1 = \DPt_{H'} \Bigl[ s[i \mapsto \gamma_0] \Bigr].
    \]

    $\mathcal{A}_2$ contains all weighted partitions corresponding to partial solutions that
    intersect $\lab^{-1}_{H'}(i)$ but not $\lab^{-1}_{H'}(j)$.
    Notice that the case where a partial solution intersects neither,
    i.e., $s(i) = s(j) = \gamma_0$, is already covered by $\mathcal{A}_1$.
    We set
    \[
        \mathcal{A}_2 =
            \begin{cases}
                \varnothing         &\text{if $s(j) = \gamma_0$,}\\
                \DPt_{H'}[s_{H'}]        &\text{if $s(j) = \gamma_{-2}$},\\
                \proj\big(
                    \acjoin( \DPt_{H'}[s_{H'}], \,
                        \{ ( \{\{i,j\}\},0) \}), \,
                    \{i\}
                \big)               &\text{otherwise},
            \end{cases}
    \]
    where $s_{H'} = s[i \mapsto s(j)][j \mapsto \gamma_0]$.

    The set $\mathcal{A}_3$ has to do with the cases where the partial solution intersects
    both $\lab^{-1}_{H'}(i)$ and $\lab^{-1}_{H'}(j)$, which implies that $s(j) \in \{ \gamma_{\tilde{2}}, \gamma_2, \gamma_{-2}\}$.
    To this end, we consider the following subcases.
    If $s(j) = \gamma_{-2}$, then
    \[
        \mathcal{A}_3 = \bigcup_{(\alpha',\beta') \in \gminusrest(\gamma_{-2})}
                \proj\Big( \DPt_{H'} \Bigl[ s[i \mapsto \alpha'][j \mapsto \beta'] \Bigr] , \, \{i,j\} \Big).
    \]
    On the other hand, if $s(j) \in \{ \gamma_{\tilde{2}}, \gamma_2 \}$, then
    \[
        \mathcal{A}_3 = \bigcup_{(\alpha',\beta') \in \gminusrest(s(j))}
                \proj\bigg(
                    \acjoin \Big(
                        \DPt_{H'} \Bigl[ s[i \mapsto \alpha'][j \mapsto \beta'] \Bigr], \,
                        \{ ( \{\{i,j\}\},0) \} \Big), \,
                    \{i\}
                \bigg).
    \]

    \proofsubparagraph{Disjoint union, $H = H_1 \oplus H_2$.}
    Consider the signatures $s, s_1, s_2$.
    We say that $s_1,s_2$ \emph{agree on $s$} if for all labels $i \in [\cw]$,
    it holds that $s(i) \in g(s_1(i), s_2(i))$.
    In that case, for $s \colon [\cw] \to \Gamma$ we set $\DPt_H[s] = \acreduce(\rmc(\mathcal{A}))$,
    where
    \[
        \mathcal{A} = \bigcup_{s_1,s_2 \text{ agree on } s}
            \acjoin \bigg(
                \proj \Big( \DPt_{H_1}[s_1], \, s^{-1}(\gamma_{-2}) \Big), \,
                \proj \Big( \DPt_{H_2}[s_2], \, s^{-1}(\gamma_{-2}) \Big)
            \bigg).
    \]
    As in~\cite{tcs/BergougnouxK19},
    we need to do the projections before the $\acjoin$ operation.

    The correctness of the algorithm follows along the lines of the {\FVS} algorithm in~\cite{tcs/BergougnouxK19}.
\end{proof}

\begin{proofsketch}
    The proof follows along the lines of the $2^{\bO(\cw)} n^{\bO(1)}$ algorithm for {\FVS} by Bergougnoux and Kant\'e~\cite{tcs/BergougnouxK19}.
    There, the authors develop a general framework, expanding upon the one of Bodlaender et al.~\cite{iandc/BodlaenderCKN15},
    in order to cope with various problems with connectivity constraints on graphs of bounded clique-width.
    Using their framework, we can perform DP over the clique-width expression and, starting from a partial solution,
    eventually build an optimal one.

    Instead of \AcyclicM, we solve the equivalent problem of asking,
    given a graph $G$ and an integer $k$, whether there exists $S \subseteq V(G)$ of size $|S| \ge k$ such that
    $G[S]$ is a forest that contains a perfect matching.
    Notice that $G$ has an acyclic matching of size $\ell$ if and only if there exists such a set $S$
    of size $|S| \ge 2\ell$.
    In the following let $\mathcal{H}$ denote the set of all $\cw$-labeled graphs generated
    by subexpressions of the given clique-width expression $\psi$ of the input graph $G$.

    In order to deal with \FVS, in~\cite{tcs/BergougnouxK19} the authors solve the \textsc{Maximum Induced Tree} problem and
    introduce a \emph{signature} on the partial solutions,
    that encodes (i) the number of vertices that a partial solution contains from each label class ($0$, $1$, or at least $2$),
    and (ii) in case a solution contains at least $2$ vertices of the same label class,
    whether this label class will partake in a join operation with a non-empty set in the future;
    as a matter of fact, the number of such join operations can be at most $1$, as otherwise a cycle is formed.
    For our problem, we have that a partial solution of $H \in \mathcal{H}$
    is a pair $(S,M)$ such that $M$ is a matching of $H$ with $V_M \subseteq S \subseteq V(H)$.
    In that case, the signature of a partial solution is, for each label $i \in [\cw]$,
    whether $S \cap \lab^{-1}_H(i)$ is empty, a singleton, or contains at least $2$ vertices,
    and whether $(S \setminus V_M) \cap \lab^{-1}_H(i)$ is empty or not.
    We notice that in fact $|(S \setminus V_M) \cap \lab^{-1}_H(i)| \le 1$, thus leading to $6$ different states
    $\Gamma = \{\gamma_0, \gamma_{\tilde{1}}, \gamma_1, \gamma_{\tilde{2}}, \gamma_2, \gamma_{-2}\}$
    per label class, with the index denoting the number of vertices in $S \cap \lab^{-1}_H(i)$,
    the tilde denoting that $|(S \setminus V_M) \cap \lab^{-1}_H(i)| = 1$,
    and $\gamma_2$ (resp., $\gamma_{-2}$) denoting that the label class will (resp., will not)
    partake in a join operation in the future.
    This, along with the machinery of the framework of~\cite{tcs/BergougnouxK19},
    allows us to detect any possible cycles as well as forcing the final solution to be a forest containing a perfect matching.
\end{proofsketch}

\section{Conclusion}\label{sec:conclusion} \nosectionappendix
In this paper we studied the {\InducedM} and {\AcyclicM} problems under the perspective of fine-grained structural parameterized complexity.
In particular, we presented tight lower bounds under the pw-SETH for the pathwidth parameterization of both problems
(equivalence even in the case of \InducedM), rendering known treewidth algorithms optimal.
Furthermore, we considered the parameterization by clique-width and developed a tight $\sO(3^{\cw})$ algorithm for \InducedM,
as well as an $2^{\bO(\cw)} n^{\bO(1)}$ algorithm for \AcyclicM, which is optimal under the ETH.
As a direction for future work, it would be interesting to develop an algorithm for {\AcyclicM} parameterized by clique-width
that is optimal under the (pw-)SETH.
Although recently there has been a flurry of results regarding optimal algorithms for connectivity problems parameterized by clique-width~\cite{arxiv/BojikianK24,icalp/BojikianK24,esa/HegerfeldK23},
the case of \FVS, which is closely related to \AcyclicM, remains open.
Another research direction would be to develop an optimal algorithm for {\AcyclicM} parameterized by the cutwidth of the input graph;
such SETH-tight algorithms are known for a plethora of graph problems by now~\cite{arxiv/BojikianCK25,stacs/GroenlandMNS22,tcs/JansenN19,icalp/MarxSS21,jgaa/GeffenJKM20}, including \FVS~\cite{stacs/BojikianCHK23}.

\begin{toappendix}

\section{Framework}\label{sec:acyclic:cw:framework}

For the sake of completeness, here we present the framework introduced in~\cite[Section~3]{tcs/BergougnouxK19}.

The following operation determines whether a join operation between pairs of partitions produces any cycles.

\begin{definition}
    Let $L$ be a finite set.
    We define $\acy \colon \Pi(L) \times \Pi(L) \to \{\true,\false\}$
    such that $\acy(p,q) = \true$ if and only if
    $|L| + \block(p \sqcup q) - (\block(p)+\block(q))=0$.
\end{definition}

As observed in~\cite{tcs/BergougnouxK19}, if $F_p=(L,E_p)$ and $F_q=(L,E_q)$ are forests with components
$p=\cc(F_p)$ and $q=\cc(F_q)$ respectively,
then $\acy(p,q) = \true$ if and only if $E_p \cap E_q = \varnothing$ and $(L,E_p \uplus E_q)$ is a forest.
Furthermore, the following hold.

\begin{lemma}[{\cite[Fact~3.2]{tcs/BergougnouxK19}}]\label{fact:par-rep2}
    Let $L$ be a finite set.
    For all partitions $p,q,r \in \Pi(L)$, it holds that
    $\acy(p,q) \land \acy(p \sqcup q,r) \iff
    \acy(q,r) \land \acy(p,q\sqcup r)$.
\end{lemma}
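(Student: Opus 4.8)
The plan is to collapse the claimed two-way implication into one symmetric condition. I would begin by introducing the rank function of the partition lattice, $r \colon \Pi(L) \to \mathbb{Z}_{\ge 0}$ defined by $r(p) = |L| - \block(p)$, which is precisely the minimum number of edges of a forest on vertex set $L$ whose components are the blocks of $p$. With this notation the definition of $\acy$ rewrites as
\[
    \acy(p,q) = \true \iff r(p \sqcup q) = r(p) + r(q),
\]
so the whole statement becomes an assertion about when $r$ is additive over $\sqcup$.

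The one non-bookkeeping ingredient I would record first is the subadditivity $r(p \sqcup q) \le r(p) + r(q)$ for all $p,q \in \Pi(L)$: picking forests $F_p = (L,E_p)$ and $F_q = (L,E_q)$ realizing $p$ and $q$, the graph $(L, E_p \cup E_q)$ has connected components exactly $p \sqcup q$ and at most $r(p)+r(q)$ edges, while any graph on $|L|$ vertices with $\block(p\sqcup q)$ components has at least $|L| - \block(p\sqcup q) = r(p\sqcup q)$ edges. This is simply the inequality counterpart of the forest observation already stated in the text (and equality holds exactly when $F_p,F_q$ are edge-disjoint with acyclic union, i.e. when $\acy(p,q)$).

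Next I would prove that the left-hand side of the lemma is equivalent to the condition
\[
    (\star)\qquad r(p \sqcup q \sqcup r) = r(p) + r(q) + r(r).
\]
Forward direction: $\acy(p,q)$ gives $r(p\sqcup q) = r(p)+r(q)$ and $\acy(p\sqcup q,r)$ gives $r(p\sqcup q\sqcup r) = r(p\sqcup q)+r(r)$; substituting yields $(\star)$. Converse: applying subadditivity twice,
\[
    r(p\sqcup q\sqcup r) \;\le\; r(p\sqcup q) + r(r) \;\le\; r(p)+r(q)+r(r),
\]
so if the two extremes coincide then both inequalities are equalities, which are exactly $\acy(p\sqcup q,r)$ and $\acy(p,q)$. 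Since $\sqcup$ is commutative and associative, $p\sqcup q\sqcup r = p \sqcup (q\sqcup r) = q\sqcup r\sqcup p$, and the identical argument shows the right-hand side $\acy(q,r)\land\acy(p,q\sqcup r)$ is also equivalent to $(\star)$. Both sides being equivalent to $(\star)$, the lemma follows.

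I do not expect any real obstacle: the proof is a short combinatorial identity, and the only slightly substantive step is the subadditivity of $r$, which the forest interpretation set up in the excerpt hands us almost for free. (Alternatively one could cite semimodularity of the partition lattice, $r(p\sqcup q) + r(p\sqcap q) \le r(p)+r(q)$, together with $r(p\sqcap q)\ge 0$, but the direct edge-counting argument is shorter and self-contained.)
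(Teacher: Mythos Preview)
Your argument is correct. The rewrite $\acy(p,q)\iff r(p\sqcup q)=r(p)+r(q)$ with $r(p)=|L|-\block(p)$, together with the subadditivity $r(p\sqcup q)\le r(p)+r(q)$, reduces both sides of the equivalence to the single symmetric condition $r(p\sqcup q\sqcup r)=r(p)+r(q)+r(r)$, and the chain-of-inequalities trick cleanly recovers the two individual $\acy$ conditions from $(\star)$.

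As for comparison: the paper does not give its own proof of this lemma; it is quoted verbatim as Fact~3.2 of Bergougnoux and Kant\'e~\cite{tcs/BergougnouxK19} in the framework appendix, so there is nothing to compare your approach against here. Your proof is a perfectly good self-contained justification, and the forest edge-counting argument for subadditivity is exactly in the spirit of the observation the paper records just before the lemma.
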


\begin{lemma}[{\cite[Fact~3.3]{tcs/BergougnouxK19}}]\label{fact:join}
    Let $L$ be a finite set.
    Let $q \in \Pi(L)$ and let $X \subseteq L$ such that no subset of $X$ is a block of $q$.
    Then, for each $p \in \Pi(L \setminus X)$,
    the following hold
    \begin{align}
        p_{\uparrow X} \sqcup q = \{L\} &\iff
        p \sqcup q_{\downarrow (L \setminus X)}
        = \{L \setminus X\},\\
        \acy(p_{\uparrow X},q)  &\iff \acy(p,q_{\downarrow (L \setminus X)}).
    \end{align}
\end{lemma}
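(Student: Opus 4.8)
The plan is to derive both equivalences from one structural fact: under the hypothesis on $q$, the map $B \mapsto B \cap (L\setminus X)$ is a bijection from the blocks of $p_{\uparrow X} \sqcup q$ onto the blocks of $p \sqcup q_{\downarrow (L\setminus X)}$. Granted this, statement $(1)$ is immediate — a partition of a nonempty set is the one-block partition iff it has exactly one block — and statement $(2)$ follows by substituting the resulting block counts into the arithmetic definition of $\acy$. So the proof would be organized as: bookkeeping, the structural fact, then the two conclusions.

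First I would record the routine counts. From $p_{\uparrow X} = p \cup \{\{x\} : x \in X\}$ we get $\block(p_{\uparrow X}) = \block(p) + |X|$. The hypothesis ``no subset of $X$ is a block of $q$'' says exactly that every block $q_i$ of $q$ meets $L\setminus X$; hence no block is deleted upon restriction, so $q_{\downarrow(L\setminus X)}$ has $\block(q)$ blocks, i.e.\ $\block(q_{\downarrow(L\setminus X)}) = \block(q)$. The case $L = \varnothing$ is trivial, and when $L \neq \varnothing$ the hypothesis forces $L\setminus X \neq \varnothing$ (otherwise some block of $q$ lies in $X$), so I may assume $L\setminus X \neq \varnothing$ throughout.

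The core step uses the ``alternating chain'' description of the lattice join: for partitions $r_1,r_2$ of a set $S$, two elements lie in the same block of $r_1 \sqcup r_2$ iff they are linked by a chain whose consecutive elements share a block of $r_1$ or of $r_2$. The inclusions $\equiv_p \subseteq \equiv_{p_{\uparrow X}}$ and $\equiv_{q_{\downarrow(L\setminus X)}} \subseteq \equiv_q$ make the easy direction routine: a chain inside $L\setminus X$ for $p \sqcup q_{\downarrow(L\setminus X)}$ is also one for $p_{\uparrow X} \sqcup q$. For the converse, given $a,b \in L\setminus X$ linked by a chain for $p_{\uparrow X}\sqcup q$, I would perform a \emph{shortcutting}: every $x \in X$ is a singleton in $p_{\uparrow X}$, so each maximal run of $X$-elements along the chain is flanked by elements of $L\setminus X$ and every step meeting the run is a $q$-step, whence all these vertices — including the two flanking ones — lie in one block of $q$; thus the flanking elements share a block of $q_{\downarrow(L\setminus X)}$, and the run collapses to a single such step. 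Iterating yields a chain inside $L\setminus X$ for $p\sqcup q_{\downarrow(L\setminus X)}$. Together with the fact that every block of $p_{\uparrow X}\sqcup q$ already meets $L\setminus X$ (each $x\in X$ shares its $q$-block with some element of $L\setminus X$), this shows that $B\mapsto B\cap(L\setminus X)$ is well defined, injective (blocks are disjoint and each has nonempty trace on $L\setminus X$) and surjective onto the blocks of $p\sqcup q_{\downarrow(L\setminus X)}$; in particular $\block(p_{\uparrow X}\sqcup q) = \block(p\sqcup q_{\downarrow(L\setminus X)})$. I expect this shortcutting argument — and the care it needs for consecutive or boundary $X$-elements, plus the degenerate case $L\setminus X=\varnothing$ — to be the only real obstacle.

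To finish: for $(1)$, $p_{\uparrow X}\sqcup q = \{L\}$ iff $\block(p_{\uparrow X}\sqcup q) = 1$ iff $\block(p\sqcup q_{\downarrow(L\setminus X)}) = 1$ iff $p\sqcup q_{\downarrow(L\setminus X)} = \{L\setminus X\}$. For $(2)$, using $\block(p_{\uparrow X}) = \block(p)+|X|$, $\block(q_{\downarrow(L\setminus X)}) = \block(q)$, $\block(p_{\uparrow X}\sqcup q) = \block(p\sqcup q_{\downarrow(L\setminus X)})$ and $|L|-|X| = |L\setminus X|$, the quantity $|L| + \block(p_{\uparrow X}\sqcup q) - \block(p_{\uparrow X}) - \block(q)$ equals $|L\setminus X| + \block(p\sqcup q_{\downarrow(L\setminus X)}) - \block(p) - \block(q_{\downarrow(L\setminus X)})$, so it is zero iff the latter is, i.e.\ $\acy(p_{\uparrow X},q) \iff \acy(p,q_{\downarrow(L\setminus X)})$.
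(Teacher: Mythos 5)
Your proof is correct. Note that the paper does not prove this statement at all: it is imported verbatim as Fact~3.3 from Bergougnoux and Kant\'e, so there is no in-paper argument to compare against. Your argument — establishing that $B \mapsto B \cap (L\setminus X)$ is a block-count-preserving bijection between the two joins via the alternating-chain description (with the shortcutting of maximal $X$-runs, which works because consecutive $q$-steps compose by transitivity of the $q$-equivalence), then reading off $(1)$ from ``one block'' and $(2)$ from the arithmetic $\block(p_{\uparrow X}) = \block(p) + |X|$, $\block(q_{\downarrow(L\setminus X)}) = \block(q)$, $|L| - |X| = |L\setminus X|$ — is a complete and self-contained verification, including the correct handling of the degenerate cases $L = \varnothing$ and $L\setminus X = \varnothing$.
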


We modify in this section the operators on weighted partitions defined in~\cite{iandc/BodlaenderCKN15} in order to express our
dynamic programming algorithms in terms of these operators, and also to deal with acyclicity.

\subparagraph{Rmc.}
Let $L$ be a finite set.
For $\mathcal{A} \subseteq \Pi(L) \times \mathbb{N}$, let
\[
    \rmc(\mathcal{A}) =
        \setdef{(p,w) \in \mathcal{A}}{\forall(p,w') \in \mathcal{A}, \, w' \leq w}.
\]
This operator, defined in~\cite{iandc/BodlaenderCKN15},
is used to remove all the partial solutions whose weights are not maximum with respect to their partitions.

\subparagraph{Ac-Join.}
Let $L'$ be a finite set.
For $\mathcal{A} \subseteq \Pi(L) \times \mathbb{N}$ and $\mathcal{B} \subseteq \Pi(L') \times \mathbb{N}$,
we define $\acjoin(\mathcal{A},\mathcal{B}) \subseteq \Pi(L \cup L') \times \mathbb{N}$ as
% {\small
\[
    \acjoin(\mathcal{A},\mathcal{B}) =
        \setdef{(p_{\uparrow L'}\sqcup q_{\uparrow L},w_1+w_2)}
            {(p,w_1) \in \mathcal{A}, \, (q,w_2) \in \mathcal{B}, \textrm{ and }
            \acy(p_{\uparrow L'},q_{\uparrow L})}.
\]%}
This operator is more or less the same as the one in~\cite{iandc/BodlaenderCKN15},
except that we incorporate the acyclicity condition.
It is used to construct partial solutions while guaranteeing the acyclicity.

\subparagraph{Project.}
For $X \subseteq L$ and $\mathcal{A} \subseteq \Pi(L) \times \mathbb{N}$,
let $\proj(\mathcal{A},X) \subseteq \Pi(L \setminus X) \times \mathbb{N}$ be
\[
    \proj(\mathcal{A},X) = \setdef{(p_{\downarrow (L \setminus X)},w)}
        {(p,w) \in \mathcal{A} \textrm{ and } \forall p_i \in p, \, (p_i \setminus X) \neq \varnothing}.
\]
This operator considers all the partitions such that no block is completely contained in $X$,
and then removes $X$ from those partitions.
We index our dynamic programming tables with functions that inform on the label classes playing
a role in the connectivity of partial solutions, and this operator is used to remove from the
partitions the label classes that are required to no longer play a role in the connectivity of
the partial solutions.
If a partition has a block fully contained in $X$, it means that this block will remain
disconnected in the future steps of our dynamic programming algorithm, and that is why
we remove such partitions (besides those with cycles).

\bigskip

One needs to perform the above operations efficiently, and this is guaranteed by the following,
which assumes that $\log(|\mathcal{A}|) \le |L|^{\bO(1)}$ for each
$\mathcal{A} \subseteq \Pi(L) \times \mathbb{N}$ (this can be established by applying the operator $\rmc$).

\begin{proposition}[{\cite[Proposition~3.4]{tcs/BergougnouxK19}}]\label{prop:op}
    The operator $\acjoin$ can be performed in time $|\mathcal{A}| \cdot |\mathcal{B}| \cdot |L \cup L'|^{\bO(1)}$
    and the size of its output is upper-bounded by $|\mathcal{A}| \cdot |\mathcal{B}|$.
    The operators $\rmc$ and $\proj$ can be performed in time $|\mathcal{A}| \cdot |L|^{\bO(1)}$,
    and the sizes of their outputs are upper-bounded by $|\mathcal{A}|$.
\end{proposition}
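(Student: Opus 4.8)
The plan is to verify that each of the three operators admits a direct, bookkeeping-style implementation meeting the stated bounds; no representative-set reduction enters here. Throughout I fix a canonical encoding of a partition $p$ of a finite set $L$ --- say its restricted-growth string with respect to an arbitrary fixed linear order on $L$ --- which has length $|L|$, is produced from any list-of-blocks representation in time $|L|^{\bO(1)}$, and has the property that two weighted partitions share the same partition if and only if their encodings are identical strings. Partitions are otherwise manipulated in the obvious list-of-blocks / element-to-block-index form, all the elementary operations below costing $|L|^{\bO(1)}$ (resp.\ $|L \cup L'|^{\bO(1)}$) per partition.

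For $\rmc$, sort the $|\mathcal{A}|$ encodings; since by hypothesis $\log(|\mathcal{A}|) \le |L|^{\bO(1)}$ a comparison sort runs in $|\mathcal{A}| \cdot |L|^{\bO(1)}$ time (a radix sort on strings of length $|L|$ gives the same bound even without the hypothesis). Then a single left-to-right scan keeps, from each maximal run of equal encodings, one entry of maximum weight and discards the rest. The output is a subset of $\mathcal{A}$, so has size at most $|\mathcal{A}|$. For $\proj(\mathcal{A},X)$, process each $(p,w) \in \mathcal{A}$ independently: scan the blocks of $p$ to test whether some block is contained in $X$ (time $|L|^{\bO(1)}$); if none is, output $(p_{\downarrow (L \setminus X)},w)$, where $p_{\downarrow (L \setminus X)}$ is obtained by deleting the elements of $X$ from every block --- no block becomes empty by the test just made. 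Each input entry yields at most one output entry, so the output has at most $|\mathcal{A}|$ entries and the total time is $|\mathcal{A}| \cdot |L|^{\bO(1)}$.

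For $\acjoin(\mathcal{A},\mathcal{B})$, enumerate all $|\mathcal{A}| \cdot |\mathcal{B}|$ pairs $(p,w_1) \in \mathcal{A}$, $(q,w_2) \in \mathcal{B}$. For a fixed pair, build $p_{\uparrow L'}$ and $q_{\uparrow L}$, which are the partitions of $L \cup L'$ obtained from $p \in \Pi(L)$ and $q \in \Pi(L')$ by adjoining singleton blocks for the elements of $L' \setminus L$ and of $L \setminus L'$; this is $|L \cup L'|^{\bO(1)}$ work. Compute the join $p_{\uparrow L'} \sqcup q_{\uparrow L}$ via a union--find structure initialised on $L \cup L'$: for each block of either partition, union its elements; at the end count distinct roots to read off $\block(p_{\uparrow L'} \sqcup q_{\uparrow L})$, and note $\block(p_{\uparrow L'}) = \block(p) + |L' \setminus L|$ and $\block(q_{\uparrow L}) = \block(q) + |L \setminus L'|$. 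Now $\acy(p_{\uparrow L'},q_{\uparrow L})$ is the single arithmetic test $|L \cup L'| + \block(p_{\uparrow L'} \sqcup q_{\uparrow L}) - \block(p_{\uparrow L'}) - \block(q_{\uparrow L}) = 0$; if it passes, add $(p_{\uparrow L'} \sqcup q_{\uparrow L}, w_1 + w_2)$ to the output. Each pair is handled in $|L \cup L'|^{\bO(1)}$ time and contributes at most one output entry, giving time $|\mathcal{A}| \cdot |\mathcal{B}| \cdot |L \cup L'|^{\bO(1)}$ and output size at most $|\mathcal{A}| \cdot |\mathcal{B}|$.

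The argument has no deep obstacle; the two points needing care are (i) committing to a canonical partition encoding so that identical partitions actually collide during the sort in $\rmc$ --- without this the operator would fail to keep a single representative per partition --- and (ii) observing that $\acy$ never requires materialising a forest but only the block counts of the two input partitions and of their join, all of which union--find delivers within budget. The hypothesis $\log(|\mathcal{A}|) \le |L|^{\bO(1)}$, which \cref{prop:op} assumes and which is enforced by applying $\rmc$ beforehand, is exactly what keeps the sorting step of $\rmc$ within $|\mathcal{A}| \cdot |L|^{\bO(1)}$; every other step is linear per entry or per pair.
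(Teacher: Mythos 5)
Your proof is correct. Note that the paper does not prove this proposition at all --- it is imported verbatim by citation from Bergougnoux and Kant\'e --- so there is no in-paper argument to compare against; your direct implementation (canonical encodings plus a sort for $\rmc$, independent per-entry filtering for $\proj$, and pairwise enumeration with union--find and the block-count test for $\acjoin$) is the standard and entirely adequate way to establish the stated bounds, and you correctly identify the role of the hypothesis $\log(|\mathcal{A}|) \le |L|^{\bO(1)}$ in the sorting step.
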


We now define the notion of representative sets of weighted partitions which is the same as the one in~\cite{iandc/BodlaenderCKN15},
except that we need to incorporate the acyclicity condition as for the $\acjoin$ operator above.

\begin{definition}\label{defn:par-rep}
    Let $L$ be a finite set and let $\mathcal{A} \subseteq \Pi(L) \times \mathbb{N}$.
    For $q \in \Pi(L)$, let
    \[
        \acopt(\mathcal{A},q) = \max \setdef{w}{(p,w) \in \mathcal{A}, \, p \sqcup q = \{L\}, \textrm{ and } \acy(p,q)}.
    \]
    A set of weighted partitions $\mathcal{A}' \subseteq \Pi(L) \times \mathbb{N}$ \emph{ac-represents} $\mathcal{A}$ if
    for each $q \in \Pi(L)$, it holds that $\acopt(\mathcal{A},q)=\acopt(\mathcal{A}',q)$.

    Let $Z$ and $L'$ be two finite sets.
    A function $f \colon 2^{\Pi(L) \times \mathbb{N}} \times Z \to 2^{\Pi(L') \times \mathbb{N}}$ is said to
    \emph{preserve ac-representation} if for each
    $\mathcal{A},\mathcal{A}' \subseteq \Pi(L) \times \mathbb{N}$ and $z \in Z$,
    it holds that $f(\mathcal{A}',z)$ ac-represents $f(\mathcal{A},z)$ whenever $\mathcal{A}'$ ac-represents $\mathcal{A}$.
\end{definition}

At each step of our algorithm, we will compute a small set $\mathcal{S}'$ that ac-represents the set $\mathcal{S}$ containing all the partial solutions.
In order to prove that we compute an ac-representative set of $\mathcal{S}$,
we show that $\mathcal{S} = f(\mathcal{R}_1,\ldots,\mathcal{R}_t)$ with $f$ a composition of functions that preserve ac-representation,
and $\mathcal{R}_1,\ldots,\mathcal{R}_t$ the sets of partials solutions associated with the previous steps of the algorithm.
To compute $\mathcal{S}'$, it is sufficient to compute $f(\mathcal{R}'_1,\ldots,\mathcal{R}_t')$,
where each $\mathcal{R}_i'$ is an ac-representative set of $\mathcal{R}_i$.
The following lemma guarantees that the operators we use preserve ac-representation.

\begin{lemma}[{\cite[Lemma~3.6]{tcs/BergougnouxK19}}]\label{lemma:acyclic:cw:framework:ac_preserve}
    The union of two sets in $2^{\Pi(L) \times \mathbb{N}}$ and the operators
    $\rmc$, $\proj$, and $\acjoin$ preserve ac-representation.
\end{lemma}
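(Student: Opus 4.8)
The plan is to prove the four claims separately, each time reducing everything to the single observation that a set $\mathcal{A}'$ ac‑represents $\mathcal{A}$ if and only if $\acopt(\mathcal{A}',q)=\acopt(\mathcal{A},q)$ for every $q$. So for each operator $F$ (with possible side data $z$, e.g. the projected‑out set or the second operand) it suffices to express $\acopt\bigl(F(\mathcal{A},z),q\bigr)$ as a quantity that depends on $\mathcal{A}$ only through the family $\{\acopt(\mathcal{A},s)\}_{s}$ and on $z$ only through $z$ itself; replacing $\mathcal{A}$ by an ac‑representative then cannot change it. The two structural tools are \cref{fact:par-rep2} (re‑associating acyclicity tests) and \cref{fact:join} (pushing $\uparrow/\downarrow$ through $\sqcup$ and $\acy$), used exactly as in~\cite{iandc/BodlaenderCKN15,tcs/BergougnouxK19}.

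\emph{Union and $\rmc$.} For union, $\acopt(\mathcal{A}_1\cup\mathcal{A}_2,q)=\max\bigl(\acopt(\mathcal{A}_1,q),\acopt(\mathcal{A}_2,q)\bigr)$ because the defining maximum ranges over the union, so replacing each $\mathcal{A}_i$ by an ac‑representative changes nothing. For $\rmc$ I would first show $\acopt(\rmc(\mathcal{A}),q)=\acopt(\mathcal{A},q)$ for all $q$: ``$\le$'' is immediate from $\rmc(\mathcal{A})\subseteq\mathcal{A}$, while ``$\ge$'' holds because if $(p^*,w^*)$ realises $\acopt(\mathcal{A},q)$ then $(p^*,w^{**})\in\rmc(\mathcal{A})$ with $w^{**}=\max\{w:(p^*,w)\in\mathcal{A}\}\ge w^*$, and the conditions $p^*\sqcup q=\{L\}$ and $\acy(p^*,q)$ depend only on the partition. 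Thus $\rmc$ leaves all $\acopt$‑values untouched, hence preserves ac‑representation.

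\emph{Ac‑join.} This is the crux. Fix $\mathcal{B}\subseteq\Pi(L')\times\mathbb{N}$ and a target $r\in\Pi(L\cup L')$; I would show $\acjoin(\cdot,\mathcal{B})$ preserves ac‑representation and then use the symmetry of $\acjoin$ in its two arguments. Unwinding definitions, $\acopt(\acjoin(\mathcal{A},\mathcal{B}),r)=\max_{(q,w_2)\in\mathcal{B}}\bigl(w_2+\Psi_r(q)\bigr)$, where $\Psi_r(q)$ is the maximum of $w_1$ over $(p,w_1)\in\mathcal{A}$ satisfying $\acy(p_{\uparrow L'},q_{\uparrow L})$, $\acy(p_{\uparrow L'}\sqcup q_{\uparrow L},r)$ and $(p_{\uparrow L'}\sqcup q_{\uparrow L})\sqcup r=\{L\cup L'\}$ (with $\max\varnothing=-\infty$). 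By \cref{fact:par-rep2} applied to $p_{\uparrow L'},q_{\uparrow L},r$ and associativity of $\sqcup$, these constraints become $\acy(q_{\uparrow L},r)$ together with $\acy(p_{\uparrow L'},r')$ and $p_{\uparrow L'}\sqcup r'=\{L\cup L'\}$, where $r':=q_{\uparrow L}\sqcup r$. The first is $p$‑free, so if it fails then $\Psi_r(q)=-\infty$. If instead $r'$ has a block contained in $L'\setminus L$ then, since $p_{\uparrow L'}$ is all‑singletons on $L'\setminus L$, that block survives intact in $p_{\uparrow L'}\sqcup r'$, the join cannot equal $\{L\cup L'\}$, and again $\Psi_r(q)=-\infty$ (the case $L=\varnothing$ being degenerate and trivial). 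Otherwise the precondition of \cref{fact:join} holds with $X=L'\setminus L$, converting the remaining two constraints on $p$ into $\acy(p,r'_{\downarrow L})$ and $p\sqcup r'_{\downarrow L}=\{L\}$, so $\Psi_r(q)=\acopt(\mathcal{A},r'_{\downarrow L})$. In every case the test deciding whether $\Psi_r(q)=-\infty$ depends only on $q,r$, and when finite $\Psi_r(q)$ is an $\acopt$‑value of $\mathcal{A}$; hence $\acopt(\acjoin(\mathcal{A},\mathcal{B}),r)$ is determined by $\mathcal{B}$, $r$, and $\{\acopt(\mathcal{A},s)\}_s$, so it is unchanged on passing to an ac‑representative of $\mathcal{A}$. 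Symmetry gives the analogous statement for the second argument, and composing the two one‑sided statements (ac‑representation being an equivalence relation, hence transitive) shows $\acjoin$ preserves ac‑representation.

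\emph{Project.} For $X\subseteq L$ and $q\in\Pi(L\setminus X)$ I would prove $\acopt(\proj(\mathcal{A},X),q)=\acopt(\mathcal{A},q_{\uparrow X})$. Indeed $(p_{\downarrow(L\setminus X)},w)\in\proj(\mathcal{A},X)$ iff $(p,w)\in\mathcal{A}$ and no block of $p$ lies in $X$; conversely, whenever $p\sqcup q_{\uparrow X}=\{L\}$ and $\acy(p,q_{\uparrow X})$, the all‑singleton behaviour of $q_{\uparrow X}$ on $X$ forces $p$ to have no block inside $X$ (the case $L\setminus X=\varnothing$, where $\proj(\mathcal{A},X)=\varnothing$, being trivial). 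Given ``no block of $p$ in $X$'', \cref{fact:join} identifies $p\sqcup q_{\uparrow X}=\{L\}$ with $p_{\downarrow(L\setminus X)}\sqcup q=\{L\setminus X\}$ and $\acy(p,q_{\uparrow X})$ with $\acy(p_{\downarrow(L\setminus X)},q)$, yielding both inequalities and the identity; as its right‑hand side depends on $\mathcal{A}$ only through $\acopt(\mathcal{A},\cdot)$, $\proj$ preserves ac‑representation, completing the proof. The main obstacle throughout is the bookkeeping around blocks that live entirely in the ``lifted'' coordinates — $L'\setminus L$ for $\acjoin$ and $X$ for $\proj$ — where the precondition of \cref{fact:join} must first be re‑established, or else the whole term shown to vanish, before that lemma can be invoked.
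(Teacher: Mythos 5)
Your proof is correct, and it follows essentially the same route as the source this lemma is imported from: the paper itself gives no proof but cites \cite[Lemma~3.6]{tcs/BergougnouxK19}, whose argument is exactly your reduction of each operator's $\acopt$ value to $\acopt$ values of the input set, using \cref{fact:par-rep2} to re-associate the acyclicity tests and \cref{fact:join} to handle the lifted coordinates (with the blocks-inside-$X$ degenerate cases disposed of as you do). Nothing further is needed.
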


Lastly, it holds that, for every set $\mathcal{A} \subseteq \Pi(L) \times \mathbb{N}$, we can find,
in time $|\mathcal{A}| \cdot 2^{\bO(|L|)}$, a subset $\mathcal{A}' \subseteq \mathcal{A}$ of size at
most $|L| \cdot 2^{|L|}$ that ac-represents $\mathcal{A}$.
The constant $\omega$ denotes the matrix multiplication exponent.

\begin{theorem}[{\cite[Theorem~3.8]{tcs/BergougnouxK19}}]\label{thm:reduce1}
    There exists an algorithm $\acreduce$ that,
    given a set of weighted partitions $\mathcal{A} \subseteq \Pi(L) \times \mathbb{N}$,
    outputs in time $|\mathcal{A}| \cdot 2^{(\omega-1) \cdot |L|} \cdot |L|^{\bO(1)}$
    a subset $\mathcal{A}'$ of $\mathcal{A}$ that ac-represents $\mathcal{A}$ and
    such that $|\mathcal{A}'| \leq |L| \cdot 2^{|L|-1}$.
\end{theorem}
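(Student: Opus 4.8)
The plan is to follow the rank-based reduction template of Bodlaender, Cygan, Kratsch, and Nederlof~\cite{iandc/BodlaenderCKN15}, adapted to the acyclicity-aware join exactly as in~\cite{tcs/BergougnouxK19}. First I would clean up the input by applying $\rmc$, so that we may assume each partition of $L$ occurs at most once in $\mathcal{A}$, and then sort the weighted partitions by non-increasing weight, obtaining $(p_1,w_1),\dots,(p_N,w_N)$ with $N\le|\mathcal{A}|$ and $w_1\ge\cdots\ge w_N$. The heart of the argument is a low-dimensional $\mathrm{GF}(2)$-bilinear encoding of the predicate $\mathcal{M}[p][q]:=[\,p\sqcup q=\{L\}\ \land\ \acy(p,q)\,]$. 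Fix an arbitrary element $e\in L$; for $X\subseteq L$ with $e\in X$ call $X$ \emph{$p$-consistent} if every block of $p$ lies in $X$ or in $L\setminus X$. The classical parity identity of~\cite{iandc/BodlaenderCKN15} says that the number of simultaneously $p$- and $q$-consistent such sets is $2^{\block(p\sqcup q)-1}$, hence equals $[\,p\sqcup q=\{L\}\,]$ modulo $2$. To also capture acyclicity, I would use the characterization implicit in the definition of $\acy$: when $p\sqcup q=\{L\}$ we have $\acy(p,q)$ iff $\block(p)+\block(q)=|L|+1$, and over $\mathrm{GF}(2)$ this indicator splits as $\sum_{a=1}^{|L|}[\block(p)=a]\,[\block(q)=|L|+1-a]$. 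Combining, define a matrix $\Phi$ with rows $p_1,\dots,p_N$ and columns indexed by pairs $(a,X)$ with $a\in[|L|]$ and $e\in X\subseteq L$, via $\Phi[p_i][(a,X)]=[\block(p_i)=a]\cdot[\,X\ \text{is}\ p_i\text{-consistent}\,]$, together with the auxiliary matrix $\Psi[q][(a,X)]=[\block(q)=|L|+1-a]\cdot[\,X\ \text{is}\ q\text{-consistent}\,]$; then a direct expansion gives $\sum_{(a,X)}\Phi[p][(a,X)]\,\Psi[q][(a,X)]=\mathcal{M}[p][q]$ over $\mathrm{GF}(2)$, while $\Phi$ has only $|L|\cdot 2^{|L|-1}$ columns.

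The algorithm $\acreduce$ then scans $p_1,\dots,p_N$ in order, maintaining a set $K$ of kept indices, and adds $i$ to $K$ precisely when the row $\Phi[p_i][\cdot]$ is $\mathrm{GF}(2)$-linearly independent of $\{\Phi[p_j][\cdot]:j\in K\}$; it outputs $\mathcal{A}'=\{(p_i,w_i):i\in K\}$. The size bound is immediate, since $|\mathcal{A}'|=|K|$ is at most the $\mathrm{GF}(2)$-rank of $\Phi$, which is at most its number of columns $|L|\cdot 2^{|L|-1}$. For correctness (that $\mathcal{A}'$ ac-represents $\mathcal{A}$ in the sense of \cref{defn:par-rep}): $\mathcal{A}'\subseteq\mathcal{A}$ gives $\acopt(\mathcal{A}',q)\le\acopt(\mathcal{A},q)$ for free, so fix any $q$ and any $(p_i,w_i)\in\mathcal{A}$ with $p_i\sqcup q=\{L\}$ and $\acy(p_i,q)$. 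If $i\in K$ we are done; otherwise, at the moment $i$ was processed the kept set was exactly $K\cap[1,i-1]$, so $\Phi[p_i][\cdot]=\sum_{j\in K,\,j<i}\lambda_j\,\Phi[p_j][\cdot]$ for some $\lambda_j\in\mathrm{GF}(2)$, and — because we sorted — each such $j$ has $w_j\ge w_i$. Substituting into the bilinear identity, $1=\mathcal{M}[p_i][q]=\sum_j\lambda_j\,\mathcal{M}[p_j][q]$ over $\mathrm{GF}(2)$, so some $j$ with $\lambda_j=1$ satisfies $p_j\sqcup q=\{L\}$, $\acy(p_j,q)$, and $w_j\ge w_i$. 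Hence $\acopt(\mathcal{A}',q)\ge\acopt(\mathcal{A},q)$, as needed.

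For the running time, building $\Phi$ costs $|\mathcal{A}|\cdot 2^{|L|}\cdot|L|^{\bO(1)}$ (the $p_i$-consistency of all $X\ni e$ can be read directly off the block structure of $p_i$), and the total cost is dominated by extracting the left-to-right greedy maximal independent set of its rows. A naive Gaussian elimination would cost $|\mathcal{A}|\cdot 4^{|L|}\cdot|L|^{\bO(1)}$, which is too slow; instead I would process the rows in blocks of about $2^{|L|}$ and reduce each block simultaneously against the current row-echelon basis (of size at most $|L|\cdot 2^{|L|-1}$) using fast rectangular matrix multiplication, bringing the bound down to $|\mathcal{A}|\cdot 2^{(\omega-1)|L|}\cdot|L|^{\bO(1)}$ as claimed; this is precisely the subroutine used in~\cite{iandc/BodlaenderCKN15}. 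The main obstacle is the first paragraph: establishing that the acyclicity-aware predicate $\mathcal{M}$ admits a $\mathrm{GF}(2)$-bilinear factorization of rank only $|L|\cdot 2^{|L|-1}$ — the plain-connectivity parity identity is standard, but one must check that layering it by $\block(p)$ (and dually by $|L|+1-\block(q)$) both faithfully encodes $\acy$ and costs only the extra factor $|L|$ in the dimension; everything after that is bookkeeping with the operators of the framework.
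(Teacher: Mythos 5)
Your proposal is correct, and the paper does not prove this statement at all --- it is imported verbatim from Bergougnoux and Kant\'e \cite{tcs/BergougnouxK19}, whose proof is essentially what you reconstruct: your matrix $\Phi$ is block-diagonal in the value of $\block(p)$, so your single bilinear factorization is equivalent to their presentation, which splits $\mathcal{A}$ into classes by number of blocks (using that $p\sqcup q=\{L\}$ and $\acy(p,q)$ force $\block(p)=|L|+1-\block(q)$) and runs the Bodlaender--Cygan--Kratsch--Nederlof reduction on each class. The sorting-plus-greedy-basis correctness argument and the $2^{(\omega-1)|L|}$ batched Gaussian elimination are likewise the standard ingredients, so there is no gap.
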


\end{toappendix}

\bibliography{bibliography}

\end{document}